\newtheorem{theorem}{Theorem}[section]
\begin{document}
\newcommand{\RN}[1]{\textup{\uppercase\expandafter{\romannumeral#1}}}

\newcommand{\ve}[1]{\text{\boldmath${#1}$}} % vectors
\newcommand{\te}[1]{\text{\boldmath${#1}$}} % tensors

\newcommand{\ma}[1]   {\mathrm{#1}}             % FE matrices
\newcommand{\fevec}[1]{\underline{\mathrm{#1}}} % FE vectors

% Domain and boundary definitions
\newcommand{\Oft}{\Omega_{f}^t}
\newcommand{\Ost}{\Omega_{s}^t}
\newcommand{\hOf}{\hat{\Omega}_{f}}
\newcommand{\hOs}{\hat{\Omega}_{s}}
\newcommand{\hInt}{\hat{\Sigma}}
\newcommand{\Intt}{\Sigma^t}
\newcommand{\hGDs}{\hat{\Gamma}_{{D,s}}}
\newcommand{\hGNs}{\hat{\Gamma}_{{N,s}}}
\newcommand{\hGRs}{\hat{\Gamma}_{{R,s}}}
\newcommand{\GDf}{{\Gamma}_{{D,f}}^t}
\newcommand{\GNf}{{\Gamma}_{{N,f}}^t}
\newcommand{\GpseudoNf}{{\Gamma}_{{\tilde{N},f}}^t}
\newcommand{\GRf}{{\Gamma}_{{R,f}}^t}
\newcommand{\hns}{{\hat{\ve{n}}}_s}
\newcommand{\nf}{{\ve{n}}_f}
\newcommand{\dtale}[1]{\left. \frac{\partial}{\partial t} #1 \right|_{\mathcal{A}_t}}

\newcommand{\revised}[1]{#1} % \textcolor{blue}{#1}}

% -------------------------------------------------------------------------------------------
\begin{frontmatter}

  \renewcommand\arraystretch{1.0}
      \title{\textbf{{Efficient split-step schemes for fluid--structure interaction involving incompressible
      \revised{generalised Newtonian flows}}}}
    \author{
   {\bf Richard Schussnig}
     \footnote{{\sf Email address:} {\tt schussnig@tugraz.at}, 
     {\sf corresponding author}}, \ \ \  
   {\bf Douglas R. Q. Pacheco} $^2$ \ \ \
     %{\bf Olaf Steinbach} $^3$\ \ \
     %\footnote{Email address: {\tt o.steinbach@tugraz.at}}  \ \ \ 
        and \ \ \ 
   {\bf Thomas-Peter Fries} $^3$\\
     % \footnote{Email address: {\tt guenter.brenn@tugraz.at}} \\
   {\small ${}^{1,3}$ Institute of Structural Analysis, Graz University of Technology, Graz, Austria} \\
   {\small ${}^{2}$ Institute of Applied Mathematics, Graz University of Technology, Graz, Austria} \\
   {\small ${}^{1,2,3}$ Graz Center of Computational Engineering, Graz University of Technology, Graz, Austria}
   }
  \date{}
% -------------------------------------------------------------------------------------------
\begin{abstract}

Blood flow, dam or ship construction and numerous other problems in biomedical and general engineering involve incompressible flows interacting with elastic structures. Such interactions heavily influence the deformation and stress states which, in turn, affect the engineering design process. Therefore, any reliable model of such physical processes must consider the coupling of fluids and solids. However, complexity increases for non-Newtonian fluid models, as used, e.g.,~for blood or polymer flows. In these fluids, subtle differences in the local shear rate can have a drastic impact on the flow and hence on the coupled problem. \revised{There, existing (semi-)implicit solution strategies based on split-step or projection schemes for Newtonian fluids are not applicable, while extensions to non-Newtonian fluids can lead to substantial numerical overhead depending on the chosen fluid solver.} To address these shortcomings, we present here a higher-order accurate, added-mass-stable fluid--structure interaction scheme centered around a split-step fluid solver. We compare several implicit and semi-implicit variants of the algorithm and verify convergence in space and time. Numerical examples show good performance in both benchmarks and an \revised{idealised} setting of blood flow through an abdominal aortic aneurysm \revised{considering physiological parameters}.

\end{abstract}

\begin{keyword}
Fluid--structure interaction\sep
non-Newtonian fluid\sep
split-step scheme\sep
time-splitting method\sep
semi-implicit coupling\sep
incompressible viscous flow
%
% MSC codes here, in the form: \MSC code \sep code
% or \MSC[2008] code \sep code (2000 is the default)
\MSC 
74F10\sep
76A05\sep
76M10\sep 
74L15\sep
76D05
% 76M10: Finite element methods applied to problems in fluid mechanics
% 76A05: Non-Newtonian fluids
% 74F10: Fluid-solid interactions
% 74L15: Biomechanical solid mechanics
% 76D05: Navier--Stokes equations
\end{keyword}

\end{frontmatter}

\onehalfspacing

\section{Introduction}
\label{sec:intro}

Fluid--structure interaction (FSI) problems are characterised by a strong mutual dependence of fluid flow and structural deformation, exchanging momentum at the interface. Fluid forces acting on the solid cause deformation and induce strains, thereby influencing the stress state in the solid phase. A moving or deforming solid, in turn, alters the fluid flow domain and thus has a large impact on the flow quantities. Countless applications of FSI are found in science, engineering and biomedicine, ranging from airfoils or whole wind-turbines \cite{Machairas2018,Bazilevs2011}, bridge-decks \cite{Helgedagsrud2019}, offshore engineering \cite{Shiels2001,Bearman2011}, insect flight \cite{Chu2021} to blood flow through the circulatory system \cite{Bazilevs2006,Crosetto2011a,Crosetto2011b,Kuttler2010,Torii2006,Schussnig2021PAMM}, human phonation \cite{Thomson2005} or respiration \cite{Wall2008}. Consequently, the development of suitable models and solution procedures has been an active area of research over the past 40 years, which led to great advances in the field. Among the most popular numerical techniques to handle \revised{flows in moving domains -- a central part of any FSI scheme --} are arbitrary Lagrangian-Eulerian (ALE) \cite{Heil2004,Hughes1981,LeTallec2001,Leuprecht2002,Forti2017,Quaini2007,Quarteroni2000,Donea1982}, immersed boundary \cite{Fauci2006,Fogelson2004,Griffith2012,Griffith2009,Brandsen2021,Hesch2014} and fictitious-domain methods \cite{Baaijens2001,VanLoon2004,Boffi2017, Wang2017}, either tracking or capturing the motion of the fluid--structure interface.

On the fluid--structure interface, coupling conditions are enforced via monolithic or partitioned approaches.
In monolithic schemes, all balance equations are considered in one single system of equations, assembling contributions from all involved variables into the same matrix. This leads to an inherent tight coupling of physical fields, but unfortunately comes with an increased implementation effort, possibly unusual data structures and more involved preconditioners. Several variants to enforce the interface conditions include, e.g., Langrange multipliers \cite{Mayr2020, Crosetto2011a, Langer2018, Langer2016, Gerstenberger2008}, Nitsche's method \cite{Massing2015, Schott2019, Burman2020}, mortar techniques \cite{Hesch2014,Kloppel2011}, penalty approaches \cite{Kim2016,Vire2015,Vire2016} or formulations enforcing interface conditions through particular function space choices \cite{Hron2006,Richter2015,Wick2013,Schussnig2019,Jodlbauer2019,Balmus2020}. Ad hoc parallel preconditioners can be designed depending on the chosen setup \cite{Langer2016,Barker2010,Wu2014,Tezduyar2006,Crosetto2011a,Heil2004,Gee2011,Langer2015,Jodlbauer2019,Mayr2020,Aulisa2018}.

Partitioned coupling schemes, on the other hand, alleviate the development of efficient preconditioners by iteratively enforcing the interface conditions. They allow a ``reuse'' of already well-advanced solution algorithms tailored to specific applications and ease the inclusion of complex physics or discretisation methods in the individual fields \cite{Degroote2013,Hou2012,Hosters2018,Hilger2021}.
Using separate solvers and exchanging updates to the interface variables allows simpler software design, but shifts some of the intricacy to an outer coupling procedure. The added-mass effect \cite{Causin2005,Forster2007} is notorious for severely impairing performance for certain parameter combinations using standard partitioned schemes such as the serial staggered method \cite{Lesoinne1998} and even implicit partitioned methods as demonstrated, e.g., in \cite{Kassiotis2011,Kirby2007,Kuttler2008}. To counteract the hampered convergence behaviour in such cases, several remedies have been presented ranging from simple and effective Aitken relaxation \cite{Kuttler2008}, \revised{interface (quasi-)Newton or Newton--Krylov solvers} \cite{Gerbeau2003,Michler2005,Fernandez2005,Degroote2009,Spenke2020}, using Robin interface conditions \cite{Badia2009,Badia2008,Gerardo-Giorda2010} or artificial compressibility \cite{Degroote2010,Degroote2011,Bogaers2015}.

Nonetheless, performing several coupling iterations between fluid and solid phases remains costly. This was first alleviated by \citet{Fernandez2007}, introducing the concept of semi-implicit FSI. Therein, the fluid flow problem is solved via a projection or split-step scheme, decoupling velocity and pressure unknowns. Additionally, the fluid domain deformation is extrapolated from previous time steps, which then allows for coupling only the fluid pressure and solid deformation implicitly. 
Based on this rationale of avoiding the implicit coupling of all components, a rapid development was seen in the following years (see, e.g.,
\cite{Quaini2007,Badia2008b,Astorino2010,Breuer2012,Lozovskiy2015,He2015,Landajuela2017,Naseri2018,Fernandez2020}). In numerous challenging settings, these techniques were demonstrated to yield accurate and stable results while substantially increasing performance. Fully explicit treatment of interface conditions in FSI for problems with large added-mass effect was presented for shells (see, e.g., \cite{Fernandez2020,Nobile2008,Guidoboni2009,Lukacova-Medvidova2013,Banks2014,Fernandez2015}) and also for three-dimensional continua \cite{Burman2009,Burman2014,Banks2014b,Serino2019,Serino2019b}. Unfortunately, those fully explicit coupling schemes for bulk solids are as of now either limited to simple constitutive behaviour or only first-order accurate in time unless multiple correction steps are performed. 

Although FSI has been a major research topic in the past decades, approaches specifically targeting generalised Newtonian fluid flow are still scarcely found in literature \cite{Lukacova-Medvidova2013,Janela2010,Zhu2017,Zhu2019,Amani2020}. \revised{And while partitioned FSI approaches and performant acceleration schemes do allow a straight-forward incorporation of complex constitutive laws, the promising concept of semi-implicit FSI builds upon methods decoupling fluid velocity and pressure, which then need to be capable of considering non-Newtonian models.} 
Within such fluid solvers segregating velocity and pressure, necessary projection methods need to be free of nonphysical pressure boundary layers, which was not the case in the first generation of projection methods (see the work of \citet{Guermond2005,Guermond2006} for an excellent discussion on such schemes). An alternative solution was presented in \cite{Johnston2004,Liu2009}, replacing the continuity equation by a pressure Poisson equation (PPE) equipped with fully consistent boundary conditions. Further using extrapolated pressure and convective velocities in the fluid's balance of linear momentum, as often done also in mixed velocity-pressure formulations \cite{Simo1994,TUREK1996,Elman2011,Ingram2013,Schussnig2021}, allows to decouple velocity components and pressure unknowns completely. The PPE framework was recently extended to the generalised Newtonian case \cite{Pacheco2020,Pacheco2021b}, enabling an efficient parallel solution using \revised{open-source finite element libraries and linear algebra packages as black boxes}.

In this context, we present a novel split-step framework for partitioned FSI with incompressible, generalised Newtonian fluid flows and three-dimensional continua. Fluid velocity and pressure are completely decoupled using higher-order and possibly adaptive time-stepping and extrapolation formulae in a split-step scheme. This allows for equal-order, standard $C^0$-continuous interpolation, which is a major advantage compared to similar projection-based methods. Additionally, we derive fully consistent boundary and coupling conditions to preserve accuracy on the fluid--structure interface. Semi-implicit variants of the scheme are designed in an added-mass-stable way by implicitly coupling merely the solid displacement and fluid pressure. The remaining subproblems are treated in an explicit fashion, which enhances performance without degrading accuracy or stability. Moreover, we improve mass conservation through so-called divergence damping and achieve great flexibility concerning the rheological law, such that modifying the fluid material behaviour is as simple as exchanging the right-hand side of the viscosity projection step. All resulting linear systems are easily tackled using off-the-shelf black-box preconditioning techniques available as open-source scientific software, making the scheme an attractive alternative to available methods.

The remainder of this paper is organised as follows: Individual field equations are presented in Section~\ref{sec:subproblems}, where time integration is carried out using higher-order extrapolation and backward-differentiation formulae of order $m$ (BDF-$m$) for the fluid phase and the generalised-$\alpha$ scheme for the solid phase. In Section~\ref{sec:coupled_problem}, fully implicit and semi-implicit variants of the coupling scheme are presented, enforcing consistent Robin conditions at the fluid--structure interface. The computational performance of the schemes is assessed in Section~\ref{sec:computational_results} with i) numerical tests of temporal and spatial convergence in two cases with analytic solutions \cite{Serino2019,Serino2019b}, ii) the classical pressure pulse benchmark in three space dimensions (see, e.g., \cite{Formaggia2001,Janela2010,Langer2018}), and iii) a final numerical experiment in the context of aortic blood flow, highlighting the potential of the presented approach for practical application.

\section{Fluid and structure models in an ALE framework}
\label{sec:subproblems}
The computational domain at time $t$, denoted by $\Omega^t\subset \mathbb{R}^d$ with \mbox{$d=2\text{ or }3$}, is composed of the fluid and solid subdomains $\Oft$ and $\Ost$ with the moving interface $\Intt = \partial\Oft \cap \partial\Ost$. Further, let us introduce for any function $g(\ve{x},t)$ with $\ve{x}\in\Omega^t$ its counterpart $\hat{g}(\hat{\ve{x}},t) = g(\ve{x},t)$ living in the reference configuration $\hat{\Omega}$. The transformations from reference to current domains are defined as
\begin{align*}
	\mathcal{A}_t:\hOf \rightarrow \Oft\,,\, 
	\mathcal{A}_t =\hat{\ve{x}} + \ve{d}_f(\hat{\ve{x}},t)\,,
	\qquad
	\mathcal{L}_t:\hOs \rightarrow \Ost\,,\,
	\mathcal{L}_t =\hat{\ve{x}} + \ve{d}_s(\hat{\ve{x}},t)
	,
	%\label{eqn:map_def}
\end{align*}
with deformations $\ve{d}_f$ and $\ve{d}_s$, following the arbitrary Lagrangian-Eulerian \cite{Hughes1981,Donea2017} and total Lagrangian approaches \cite{holzapfel2000,Bonet2008}. This gives rise to the deformation gradients $\te{F}_{\! f}$ and $\te{F}_{\! s}$ with respective Jacobians $J_f$ and $J_s$:
\begin{align*}
\te{F}_{\! f} = \te{I} + \nabla \ve{d}_f\,,\,
J_f =\det \te{F}_{\! f} \,,
\qquad
\te{F}_{\! s} = \te{I} + \nabla \ve{d}_s\,,\,
J_s =\det \te{F}_{\! s}
.
%\label{eqn:map_def}
\end{align*}
\subsection{Mesh update}
\label{sec:subproblem_mesh}
The Lagrangian transformation $\mathcal{L}_t$ is naturally defined by the deformation itself, but in the fluid domain, the mapping $\mathcal{A}_t$ is constructed, e.g., via harmonic extension:
\begin{eqnarray}
-\nabla \cdot \left( c \nabla \ve{d}_f \right) =&0 & \quad \text{in } \hOf \, ,\label{eqn:harm_ext}\\
\ve{d}_f =&\ve{d}_s & \quad \text{on } \hInt \, ,\label{eqn:harm_ext_int_cond}\\
\ve{d}_f =&\ve{0} & \quad \text{on } \partial\hOf\setminus\hInt
,
\end{eqnarray}
with a possibly nonlinear stiffening parameter $c = c_0 \left(J_f + J_f^{-1}\right)$ and suitably chosen $c_0$ \cite{Wick2011,Stein2003}. This is only the simplest choice out of a wide variety of existing methods (cf. \cite{Stein2003,JOHNSON1994,Wick2011,Shamanskiy2021}), and may be further decomposed into equations in individual components of $\ve{d}_f$. Having introduced the basic setting, let us proceed with the formulation of balance equations in the individual subdomains.
\subsection{Structure models}
\label{sec:subproblem_solid}
The balance of linear momentum in the reference configuration of the solid domain $\hOs$ is expressed in terms of the solid displacement $\ve{d}_s$ as
\begin{eqnarray}
	\rho_s \frac{\partial^2}{\partial t^2} \ve{d}_s - \nabla \cdot \te{P} =& \ve{0} & \quad \text{in } \hOs \,, \label{eqn:solid_mom_bal}
\end{eqnarray}
with the solid's density $\rho_s$ and the first Piola--Kirchhoff stress tensor $\te{P}$, omitting body forces for brevity. Initial conditions and boundary conditions on the non-overlapping Dirichlet, Neumann and Robin boundary sections, denoted by $\hGDs$, $\hGNs$ and $\hGRs$, are given by
\begin{eqnarray}
\ve{d}_s =& \ve{d}_0 & \quad \text{at } t=0 \,,\\
\frac{\partial}{\partial t}\ve{d}_s =& \dot{\ve{d}}_0 & \quad \text{at } t=0\,,\\
\ve{d}_s =& {\ve{g}}_{s} & \quad \text{on } \hGDs \,, \label{eqn:solid_D}\\
\te{P} \hat{\ve{n}}_s =& {\ve{t}}_{s} & \quad \text{on } \hGNs \,, \label{eqn:solid_N}\\
\eta_s^R \frac{\partial}{\partial t} \ve{d}_s + \te{P} \hat{\ve{n}}_s =& {\ve{h}}_s & \quad \text{on } \hGRs \label{eqn:solid_R}
\,,
\end{eqnarray}
with $\hns$ denoting the unit outward normal in the reference configuration and the Robin parameter $\eta_s^R>0$. Constitutive equations linking stress and strain measures in the solid are herein formulated in terms of the second \mbox{Piola--Kirchhoff} stress tensor $\te{S}$, additionally using the relation $\te{P} = \te{F}_{\! s} \te{S}$. Assuming isotropic, linear elastic material behavior leads to the St.~Venant--Kirchhoff model
\begin{align}
	\te{S} = \lambda_s \text{tr}\left(\te{E}\right) \te{I} + 2 \mu_s \te{E}	
	\, , \quad \text{with }
	\te{E} = \frac{1}{2} \left(\te{C} - \te{I}\right)
	\, \text{and }
	\te{C} = \te{F}_{\! s}^T\te{F}_{\! s}
	,
	\label{eqn:solid_const_rel_StVenant}
\end{align}
with Lam{\'e} parameters 
\begin{align*}
	\mu_s = \frac{E_s}{2\left(1+\nu_s\right)}
	\quad \text{and} \quad
	\lambda_s = \frac{E_s\nu_s}{(1-2\nu_s)(1+\nu_s)}
\end{align*}
expressed in terms of Young's modulus $E_s$ and Poisson ratio $\nu_s$. \revised{Further assuming $|| \nabla \ve{d}|| \ll 1$, one obtains for the case of linear elasticity}
\begin{align}
	\te{P} 
	= \te{F}_{\! s} \te{S} \approx \te{S} 
	= \lambda_s \left(\nabla \cdot \ve{d}_s\right) \te{I}
	+ \mu_s \left[\nabla \ve{d}_s + \left(\nabla \ve{d}_s\right)^T\right]
	\, .
	%\, , \quad \text{with }
    %\nabla^S \ve{d}_s := \frac{1}{2}\left[\nabla \ve{d}_s + \left(\nabla \ve{d}_s\right)^T\right]	
	\label{eqn:solid_const_rel_linelast}
\end{align}
%
%being the symmetric displacement gradient. 
Other constitutive relations of particular interest in biomedical engineering are those describing rubber-like materials such as arterial tissue (see, e.g., \cite{holzapfel2000,Gasser2006,Bazilevs2010,Simo1998,Baumler2020}). Herein, a quasi-incompressible neo-Hookean model \cite{Simo1998}
\begin{align}
\text{\revised{$\te{P}_\mathrm{NH}$}} = \mu_s J_s^{-2/3} 
\left( 
	\te{F}_{\! s} - \frac{1}{3} I_1 \te{F}_{\! s}^{-T}
\right)
+ \frac{\kappa_b}{2} \left(J_s^2-1\right)\te{F}_{\! s}^{-T}
,
\label{eqn:solid_const_rel_qiNH}
\end{align}
with the invariant $I_1 = \text{tr}\left( \te{C} \right)$ and bulk modulus $\kappa_b = E/\left[3(1-2\nu)\right]$ is considered. This hyperelastic model is either used as is, choosing $\te{P}=\te{P}_{\mathrm{NH}}$, or together with contributions from dispersed collagen fibers. In the latter case, $\te{P}$ is decomposed into $\te{P}_{\mathrm{NH}}$ as defined in Equation~\eqref{eqn:solid_const_rel_qiNH} and an additional term, such that $\te{P}$ is given as \cite{Gasser2006}
\begin{gather}
	\te{P} = \te{P}_{\mathrm{NH}} + 
	\te{F}_{\! s}
	\sum_{i=4,6}
	\left[
	2 k_1 {G}_i \exp\left(k_2 {G}_i^2\right) 
	\frac{\partial {G}_i}{\partial \te{C}}
	\right],
	\label{eqn:solid_const_rel_qiHGO2006}
\\
\text{with }
    {G}_i = J_s^{-2/3} \left[\kappa_c I_1 + (1-3\kappa_c) I_i\right] - 1 \, ,
\nonumber\\
    \frac{\partial{G}_i}{\partial\te{C}}
    =
    J^{-2/3}_s
    \left[
    \kappa_c \te{I} + (1-3\kappa_c) \te{A}_i - \frac{1}{3} \te{C}^{-1}
    \left( \kappa_c I_1 + (1-3\kappa_c) I_i\right)
    \right]\nonumber
    .
\end{gather}
Here, we introduce fiber parameters $k_1$, $k_2$ and $\kappa_c$, the tensors $\te{A}_i = \hat{\ve{m}}_i \otimes \hat{\ve{m}}_i$ with mean fiber directions $\hat{\ve{m}}_i$, the corresponding invariants $I_i = \te{C}:\te{A}_i$ and $G_i$ for ease of notation \cite{DeVilliers2018}. Usually, in biomechanical applications, one defines the mean fiber directions $\hat{\ve{m}}_i$ relative to circumferential ($\hat{\ve{e}}_1$) and longitudinal ($\hat{\ve{e}}_2$) directions of the vessel via
\begin{gather}
	\hat{\ve{m}}_4 = \frac{\hat{\ve{e}}_1 + \hat{\ve{e}}_2 \tan \alpha_c }{||\hat{\ve{e}}_1 + \hat{\ve{e}}_2 \tan \alpha_c||}    
	\quad \text{and} \quad
	\hat{\ve{m}}_6 = \frac{\hat{\ve{e}}_1 - \hat{\ve{e}}_2 \tan \alpha_c }{||\hat{\ve{e}}_1 - \hat{\ve{e}}_2 \tan \alpha_c||}
	,
	\label{eqn:fiber_orientation}
\end{gather}
with $||\cdot||$ denoting the Euclidean norm, leading to $\pm \alpha_c$ describing the deviation from circumferential vessel direction. \revised{The vector fields $\hat{\ve{e}}_1$ and $\hat{\ve{e}}_2$ are herein constructed with a two step procedure: First, two auxiliary scalar Laplace equations are solved with boundary conditions on the in- and outlets and the fluid--structure interface, such that the gradients of the resulting fields approximate the longitudinal ($\hat{\ve{e}}_2$) and radial vessel orientation. Afterwards, the circumferential vessel direction ($\hat{\ve{e}}_1$) is constructed using the radial and longitudinal vectors. Then, \eqref{eqn:fiber_orientation} gives mean fiber directions rotated by $\pm \alpha_c$ from the circumferential vessel direction into longitudinal direction.} With these constitutive relations defined, differentiating between the individual stress tensors is not necessary, simply denoting the first Piola--Kirchhoff stress tensor by $\te{P}$ for all of the above material laws.

The structural equations are discretised in time for $t\in(0,T]$, decomposing the interval into $N_t$ steps with size $\Delta t = t^{n+1}-t^n$, $n=0,...,N_t$ and using Newmark formulae \cite{Newmark1959}
\begin{align}
\frac{\partial^2}{\partial t^2} \ve{d}_s (\hat{x},t^{n+1})
&=:
\ddot{\ve{d}}_s^{n+1}
\approx
\frac{1}{\beta \Delta t^2} \left( \ve{d}_s^{n+1} - \ve{d}_s^n\right)
-
\frac{1}{\beta \Delta t} \dot{\ve{d}}_s^n
+ \left(1 - \frac{1}{2 \beta}\right) \ddot{\ve{d}}_s^n
\,, \label{eqn:newmark_dtt_ds} \\
\frac{\partial}{\partial t} \ve{d}_s (\hat{x},t^{n+1})
&=:
\dot{\ve{d}}_s^{n+1}
\approx
\frac{\gamma}{\Delta t \beta} \left(\ve{d}_s^{n+1} - \ve{d}_s^n\right)
+
\left(1-\frac{\gamma}{\beta}\right) \dot{\ve{d}}_s^{n}
+
\Delta t\left(1-\frac{\gamma}{2\beta}\right) \ddot{\ve{d}}_s^{n}
\,, \label{eqn:newmark_dt_ds}
\end{align} 
together with the generalised-$\alpha$ method \cite{Chung1993}, to obtain the time-discrete form of momentum balance in terms of structural displacements $\ve{d}_s^{n+1}$ as
\begin{eqnarray}
	\rho_s \left(
	\alpha_m'\ddot{\ve{d}}_s^{n+1}
	+ \alpha_m \ddot{\ve{d}}_s^n
	\right)
	- \alpha_f' \nabla \cdot \te{P}\left(\ve{d}_s^{n+1}\right)
	- \alpha_f \nabla \cdot \te{P}\left(\ve{d}_s^n\right)
	= \ve{0} \quad \text{in } \hOs
	\,,
	\label{eqn:solid_mom_bal_time_discr}
\end{eqnarray}
where we introduce the shorthand-notations $\alpha_m' = 1-\alpha_m$ and $\alpha_f' = 1-\alpha_f$. The nonlinear terms in Equation~\eqref{eqn:solid_mom_bal_time_discr} are integrated via the generalised trapezoidal rule. Setting the parameters $\gamma$, $\beta$, $\alpha_m$ and $\alpha_f$ in the above time integration scheme determines accuracy and stability properties of the resulting scheme. Following \cite{Chung1993}, we achieve second-order accuracy and unconditional stability for linear problems by choosing
\begin{align}
\label{eqn:generalised_alpha_accuracy_stability_conditions}
\gamma = \frac{1}{2}-\alpha_m+\alpha_f
\, , \quad 
\beta = \frac{1}{4} \left(1-\alpha_m+\alpha_f\right)^2
.
\end{align}
The analysis in \cite{Chung1993} was extended by \citet{Erlicher2002}, proving second-order accuracy and energy stability in the high-frequency range depending on the algorithmic parameter $\rho_\infty$ even for nonlinear problems. The user-specified spectral radius in the high frequency limit $\rho_\infty$ is used to specify $\alpha_m$ and $\alpha_f$ according to Table~\ref{tab:gen_alpha_alphas}, resulting in the Newmark-$\beta$ (N-$\beta$) \cite{Newmark1959}, HHT-$\alpha$ \cite{Hilber1977}, WBZ-$\alpha$ \cite{Wood1980} or CH-$\alpha$ \cite{Chung1993} schemes.
\begin{table}[ht!]
	\centering
	\begin{tabular}{||c|c|c|c|c||} 
		\hline
		&&&&\\[-2ex]
		& Newmark-$\beta$ & HHT-$\alpha$ & WBZ-$\alpha$  & CH-$\alpha$ \\
		\hline\hline
		&&&&\\[-1ex]
		$\alpha_m$ &0 &0  &$\frac{\rho_\infty - 1}{1+\rho_\infty}$&$\frac{2\rho_\infty - 1}{1+\rho_\infty}$\\ [1.5ex]
		$\alpha_f$ &0 &$\frac{1-\rho_\infty}{1+\rho_\infty}$&0& $\frac{\rho_\infty}{1+\rho_\infty}$
		\\[1.4ex]
		\hline
	\end{tabular}
	\caption{Algorithmic parameters $\alpha_m$ and $\alpha_f$ in the generalised-$\alpha$ time integration scheme.}
	\label{tab:gen_alpha_alphas}
\end{table}
Given the time-discrete form of the momentum balance residual \eqref{eqn:solid_mom_bal_time_discr}, one proceeds by employing Newton's method (see, e.g., \cite{Bonet2008,holzapfel2000,DeVilliers2018}). In each step $k$ of Newton's method, the last iterate $\fevec{d}_s^k$ of the current time step's solution $\fevec{d}_s^{n+1}$, \revised{both being discrete finite element vectors approximating their respective continuous counterparts}, is updated via
\begin{align}
	\fevec{d}_s^{k+1} = \fevec{d}_s^k + \Delta \fevec{d}_s^k
	\quad\text{until}\quad
	||\fevec{d}_s^{k+1}-\fevec{d}_s^k||<\epsilon_N ||\fevec{d}_s^0||
	\label{eqn:newton_rel_tol}
\end{align}
is fulfilled. Therein, the increment $\Delta\fevec{d}_s^k$ is the solution of the Jacobian system \revised{written in matrix-vector form}
\begin{align}
	\ma{J}(\fevec{d}_s^k) \, \,\Delta\fevec{d}_s^k = - \fevec{r}(\fevec{d}_s^k)
	\, ,
	\label{eqn:solid_jacobian}
\end{align}
which is based on the standard problem of finding $\delta\ve{d}_s^k \in [H^1(\hOs)]^d$ with $\delta\ve{d}_s^k|_{\hGDs} = \ve{0}$, such that
\begin{align}
% dtt_d term
&\frac{\rho_s\alpha_m'}{\beta \Delta t^2}
\langle \ve{\varphi}, \delta\ve{d}_s^k\rangle_{\hOs}
% stress term
+ \alpha_f' \langle \nabla \ve{\varphi}, 
\frac{\partial}{\partial\ve{d}_s}%\mathcal{D}_{\ve{d}_s} 
\te{P}\left(\ve{d}_s^{k}\right) \rangle_{\hOs}
% Robin term
+ \eta_s^R\frac{\alpha_f'\gamma}{\beta\Delta t} \langle \ve{\varphi}, \delta\ve{d}_s^k \rangle_{\hGRs}
\nonumber\\
=&
% time term
- \rho_s \langle \ve{\varphi}, 
\alpha_m'\ddot{\ve{d}}_s^{n+1}\left(\ve{d}_s^k\right)
+ \alpha_m \ddot{\ve{d}}_s^n
\rangle_{\hOs}
% stress domain terms
- \langle \nabla \ve{\varphi} , \alpha_f' \te{P}\left(\ve{d}_s^{k}\right)
+ \alpha_f \te{P}\left(\ve{d}_s^n\right) \rangle_{\hOs} 
\nonumber\\
%
% stress Neumann boundary terms
&+ \langle \ve{\varphi}, \alpha_f' {\ve{t}}_s^{n+1} + \alpha_f {\ve{t}}_s^n \rangle_{\hGNs} 
% stress Robin boundary terms
+ \langle \ve{\varphi}, 
\alpha_f' \left[ {\ve{h}}_s^{n+1} - \eta_s^R \dot{\ve{d}}_s^{n+1}\left(\ve{d}_s^k\right) \right] 
+ \alpha_f \te{P}\left(\ve{d}_s^n\right) \hns \rangle_{\hGRs} 
\label{eqn:solid_mom_bal_weak_Newton}
\end{align}
for all $\ve{\varphi} \in [H^1(\hOs)]^d$, with $\ve{\varphi}|_{\hGDs} = \ve{0}$ and $\langle\cdot,\cdot\rangle_{\hOs}$ or $\langle\cdot,\cdot\rangle_{\hat{\Gamma}}$ denoting the $L^2(\hOs)$ or $L^2(\hat{\Gamma})$ inner products. Moreover, $\frac{\partial}{\partial\ve{d}_s} \te{P}$ %$\mathcal{D}_{\ve{d}_s}$ 
denotes the directional derivative of $\te{P}$ with respect to $\ve{d}_s$ (see, e.g., \cite{holzapfel2000,Bonet2008}), which is omitted here for brevity. Known Neumann (${\ve{t}}_s$) and Robin (${\ve{h}}_s$) boundary data are plugged into the boundary terms arising from integrating the stress-divergence terms by parts. This completes the solution procedure for the nonlinear elastodynamics equations with Neumann and Robin boundary terms, various material models and generalised-$\alpha$ time integration.
\subsection{Fluid models}
\label{sec:subproblem_fluid}
The Navier--Stokes equations for incompressible flow, comprised of the linear momentum balance and continuity equations, reads in ALE form (assuming zero body forces):
\begin{eqnarray}
	\rho_f \left[
	\dtale{\ve{u}_f}
	+
	\nabla \ve{u}_f \left(\ve{u}_f - \ve{u}_m \right)
	\right]
	-
	\nabla\cdot \te{\sigma}_f 
	=& \ve{0}
	&\quad \text{in } \Oft
	\,,
	\label{eqn:fluid_u_p_mom}
	\\
	\nabla\cdot\ve{u}_f =& 0 &\quad \text{in } \Oft
	\,,
	\label{eqn:fluid_u_p_cont}
	\\
	\ve{u}_f =& \ve{u}_0
	& \quad \text{at } t=0
	\,,
	\\
	\ve{u}_f =& \ve{g}_f &\quad \text{on } \GDf 
	\,,
	\\
	\te{\sigma}_f \nf =& {\ve{t}}_f &\quad \text{on } \GNf 
	\,,
	\label{eqn:fluid_BC_full_traction}
	\\
	\eta_f^R \ve{u}_f + \te{\sigma}_f \nf =& {\ve{h}}_f &\quad \text{on } \GRf
	\,,
	\label{eqn:fluid_u_p_robin}
\end{eqnarray}
with the fluid velocity $\ve{u}_f$, stress tensor $\te{\sigma}_f$, unit-outward normal in the current configuration $\nf$, Robin parameter $\eta_f^R>0$, ALE time-derivative $\dtale{\ve{u}_f}$ and mesh velocity \mbox{$\ve{u}_m := \frac{\partial}{\partial t} \ve{d}_f$}, the latter two of which are connected via
\begin{align*}
% see, e.g., Wall&Ramm, Comput Mech, 1998
	\dtale{\ve{u}_f} := \frac{\partial}{\partial t} \hat{\ve{u}}_f\left(\mathcal{A}_t^{-1}\left(\ve{x},t\right),t\right)
	+ \left(\nabla \ve{u}_f\right) \ve{u}_m 
	\, .
\end{align*}
The fluid stress tensor $\te{\sigma}_f$ of a generalised Newtonian (or, as often called, quasi-Newtonian) fluid is given by
\begin{align}
\te{\sigma}_f := - p_f \te{I} + 2 \mu_f\left(\dot{\gamma}(\ve{u}_f)\right) \nabla^S \ve{u}_f
\, ,
\quad\text{with}\quad
\nabla^S \ve{u}_f := \frac{1}{2}\left[\nabla \ve{u}_f + \left(\nabla \ve{u}_f\right)^T\right]
\label{eqn:fluid_full_sigma_f}
\end{align}
\revised{being the symmetric gradient of $\ve{u}_f$}, the fluid's pressure $p_f$ and dynamic viscosity $\mu_f$, which is most commonly expressed through a nonlinear map $\mu_f = \eta(\dot{\gamma})$, $\eta:\mathbb{R}_+\rightarrow\mathbb{R}_+^*$, dependent on the fluid's shear rate defined as
\begin{align*}
\dot{\gamma}(\nabla^S\ve{u}_f) := \sqrt{\frac{1}{2} \nabla^S\ve{u}_f : \nabla^S\ve{u}_f}
\, .
\end{align*}
Rheological models of great interest in biomedical and industrial applications may describe shear-thickening or shear-thinning behaviour, as of particular importance when considering flows of polymer melts or blood, and may be put into the general form \cite{Galdi2008}
\begin{align}
\eta(\dot{\gamma}) = \eta_\infty + (\eta_0 - \eta_\infty) 
\left[
\kappa_f + \left(\lambda_f \dot{\gamma}\right)^a
\right]^{\frac{n-1}{a}}
\, .
\label{eqn:fluid_dyn_visc_rheology}
\end{align}
Therein, $\eta_0$ and $\eta_\infty$ denote viscosity limits and the fitting parameters $\kappa_f$, $\lambda_f$, $a$ and $n$ can be used to retrieve the power-law ($\kappa_f = \eta_\infty = 0$), Carreau ($\kappa_f = 1$, $a=2$) or Carreau-Yasuda ($\kappa_f = 1$) models, but also the standard Newtonian model ($\eta_0 = \eta_\infty$). 

In the following, we extend the split-step scheme from \cite{Pacheco2021b} to a moving grid by replacing the standard velocity-pressure form \eqref{eqn:fluid_u_p_mom}--\eqref{eqn:fluid_u_p_robin} by the following set of equations to advance the fluid velocity and pressure in time:
\begin{align}
\rho_f \left[
\dtale{\ve{u}_f}
+
\nabla \ve{u}_f \left(\ve{u}_f - \ve{u}_m \right)
\right]
-
\nabla \cdot \left(
2 \mu_f \nabla^S \ve{u}_f
\right)
&= -\nabla p_f &&\text{in } \Oft
\,,
\label{eqn:fluid_splitstep_mom}
\\
\nabla\cdot \left[\rho_f \nabla\ve{u}_f\left(\ve{u}_f - \ve{u}_m\right) - 2\nabla^S\ve{u}_f \nabla\mu_f \right]
+ \left[\nabla\times\left(\nabla \times \ve{u}_f\right)\right]\cdot\nabla \mu_f
 &= -\Delta p_f &&\text{in } \Oft
 \,, \quad
\label{eqn:fluid_splitstep_ppe}
\end{align}
with additional consistent boundary and initial conditions given as
\begin{align}
\nabla \cdot \ve{u}_0 &= 0
&& \text{in } \Omega_f^{t=0}
\,, \label{eqn:fluid_split_step_div_u0}
\\
- \mu_f \nabla \cdot \ve{u}_f  
+ \nf \cdot \left(2\mu_f\nabla^S\ve{u}_f \nf - \ve{t}_f\right)
%- \nf\cdot \ve{t}_f
&= p
&&\text{on } \GNf \,,
\label{eqn:fluid_splitstep_BC_pressure_dirichlet_onGNf}
\\
- \mu_f \nabla \cdot \ve{u}_f  
+ \nf \cdot \left(2\mu_f\nabla^S\ve{u}_f \nf - \ve{h}_f + \eta_f^R \ve{u}_f \right)
%- \nf\cdot (\ve{h}_f - \eta_f^R \ve{u}_f)
&= p
&&\text{on } \GRf \,,
\label{eqn:fluid_splitstep_BC_pressure_dirichlet_onGRf}
\\
\nf\cdot\left\{
-\rho_f\left[
\dtale{\ve{u}_f} + \nabla \ve{u}_f \left(\ve{u}_f - \ve{u}_m\right)
\right]
- \mu_f \left[\nabla\times\left(\nabla\times\ve{u}_f\right)\right]
+ 2\nabla^S\ve{u}_f \nabla \mu_f
\right\}
&= \nf\cdot \nabla p_f &&\text{on } \GDf 
\,.
\label{eqn:fluid_splitstep_BC_pressure_neumann}
\end{align}
Altogether, this set of equations allow in their final form allow using $C^0$-continuous finite element discretisations and decouple the balance of linear momentum and continuity equations, yielding standard discrete problems, for which off-the-shelf black-box preconditioners can be employed. This is a straight-forward extension of \cite{Pacheco2021b} to moving domains, which itself considers fluid flows on fixed grids and is based on \cite{Liu2009} for the Newtonian case with open/traction boundary conditions and \cite{Johnston2004} for pure Dirichlet problems.
\begin{theorem}
	For sufficiently regular $p_f, \ve{u}_f$, \revised{$\ve{g}_f$}, $\ve{t}_f, \ve{h}_f$, systems \eqref{eqn:fluid_splitstep_mom}--\eqref{eqn:fluid_splitstep_BC_pressure_neumann} and \eqref{eqn:fluid_u_p_mom}--\eqref{eqn:fluid_u_p_robin} are equivalent.
    \label{theorem:equivalence}
\end{theorem}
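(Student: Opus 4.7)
The plan is to prove both directions separately. The forward implication \eqref{eqn:fluid_u_p_mom}--\eqref{eqn:fluid_u_p_robin} $\Rightarrow$ \eqref{eqn:fluid_splitstep_mom}--\eqref{eqn:fluid_splitstep_BC_pressure_neumann} is the easier one: equation \eqref{eqn:fluid_splitstep_mom} is just \eqref{eqn:fluid_u_p_mom} with the pressure gradient moved to the right-hand side. Applying $\nabla\cdot$ to \eqref{eqn:fluid_u_p_mom} and using $\nabla\cdot\ve{u}_f = 0$ from \eqref{eqn:fluid_u_p_cont}, together with the standard identities $2\nabla\cdot\nabla^S\ve{u}_f = \Delta\ve{u}_f + \nabla(\nabla\cdot\ve{u}_f)$ and $\Delta\ve{u}_f = \nabla(\nabla\cdot\ve{u}_f) - \nabla\times(\nabla\times\ve{u}_f)$ and the fact that $\nabla\cdot(\nabla\times\,\cdot\,) \equiv 0$, reproduces \eqref{eqn:fluid_splitstep_ppe}. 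The pressure Dirichlet conditions \eqref{eqn:fluid_splitstep_BC_pressure_dirichlet_onGNf}--\eqref{eqn:fluid_splitstep_BC_pressure_dirichlet_onGRf} are obtained by dotting \eqref{eqn:fluid_BC_full_traction} and \eqref{eqn:fluid_u_p_robin} with $\nf$ and invoking $\nabla\cdot\ve{u}_f = 0$; the Neumann condition \eqref{eqn:fluid_splitstep_BC_pressure_neumann} is the normal trace of \eqref{eqn:fluid_u_p_mom} on $\GDf$ after rewriting the viscous divergence with the same vector identities.

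The converse is the substantive direction. It suffices to show that the PPE together with its boundary data enforces $\nabla\cdot\ve{u}_f \equiv 0$; once this is done, \eqref{eqn:fluid_splitstep_mom} reduces to \eqref{eqn:fluid_u_p_mom}, and the full traction and Robin conditions are recovered from the pressure Dirichlet boundary data. Setting $\phi := \nabla\cdot\ve{u}_f$ and exploiting
\begin{equation*}
\nabla\cdot(2\mu_f\nabla^S\ve{u}_f) = 2\mu_f\nabla\phi - \mu_f\nabla\times(\nabla\times\ve{u}_f) + 2\nabla^S\ve{u}_f\,\nabla\mu_f,
\end{equation*}
I would take the divergence of \eqref{eqn:fluid_splitstep_mom} \emph{without} assuming $\phi = 0$ and subtract \eqref{eqn:fluid_splitstep_ppe}. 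The curl term, the viscosity-gradient term and the quadratic $(\nabla\ve{u}_f):(\nabla\ve{u}_f)^T$ contributions all cancel, so the remaining terms are linear in $\phi$ and its derivatives and collapse to a homogeneous linear parabolic equation of the form
\begin{equation*}
\rho_f\,\dtale{\phi} + \rho_f(\ve{u}_f - \ve{u}_m)\cdot\nabla\phi - 2\nabla\cdot(\mu_f\nabla\phi) = 0 \quad\text{in } \Oft,
\end{equation*}
with initial datum $\phi|_{t=0} = 0$ from \eqref{eqn:fluid_split_step_div_u0}.

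The boundary data for $\phi$ are harvested by comparing each pressure boundary condition with the normal trace of its velocity counterpart: substituting the normal component of \eqref{eqn:fluid_BC_full_traction} into \eqref{eqn:fluid_splitstep_BC_pressure_dirichlet_onGNf} collapses every term except $-\mu_f\phi$, giving $\phi = 0$ on $\GNf$; the same reasoning with \eqref{eqn:fluid_u_p_robin} and \eqref{eqn:fluid_splitstep_BC_pressure_dirichlet_onGRf} yields $\phi = 0$ on $\GRf$; and on $\GDf$, substituting the viscous-divergence identity above into the normal trace of \eqref{eqn:fluid_splitstep_mom} and comparing with \eqref{eqn:fluid_splitstep_BC_pressure_neumann} leaves only $2\mu_f\,\nf\cdot\nabla\phi = 0$. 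A standard energy argument for this mixed-boundary linear parabolic problem then forces $\phi \equiv 0$. The main obstacle I anticipate is careful bookkeeping in the ALE frame, since $\nabla\cdot$ does not commute with $\dtale{\,\cdot\,}$ on a deforming grid and the resulting commutator produces terms involving $\nabla\ve{u}_m$; these must be shown to cancel against corresponding contributions from the PPE convective term $\nabla\cdot[\rho_f\nabla\ve{u}_f(\ve{u}_f - \ve{u}_m)]$ or to reduce to lower-order coefficients of $\phi$ that do not impair parabolic uniqueness. Throughout, $\mu_f = \eta(\dot\gamma(\ve{u}_f))$ is treated as a prescribed sufficiently smooth function of $(\ve{x},t)$, justified by the regularity hypotheses of the theorem.
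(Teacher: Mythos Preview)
Your strategy coincides with the paper's: derive the PPE and its boundary data from the original system via the identities $2\nabla\cdot\nabla^S\ve{u}_f=\Delta\ve{u}_f+\nabla(\nabla\cdot\ve{u}_f)$ and $\Delta\ve{u}_f=\nabla(\nabla\cdot\ve{u}_f)-\nabla\times(\nabla\times\ve{u}_f)$ together with $\nabla\cdot\ve{u}_f=0$; and for the converse, subtract \eqref{eqn:fluid_splitstep_ppe} from the divergence of \eqref{eqn:fluid_splitstep_mom} to obtain a homogeneous linear parabolic problem for $\phi:=\nabla\cdot\ve{u}_f$ with zero initial data \eqref{eqn:fluid_split_step_div_u0}, zero Dirichlet data on $\GNf\cup\GRf$ and zero Neumann data on $\GDf$, then conclude $\phi\equiv 0$ by uniqueness.

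Two points where your bookkeeping departs from the paper. First, the convective term $\rho_f(\ve{u}_f-\ve{u}_m)\cdot\nabla\phi$ should \emph{not} survive: the PPE \eqref{eqn:fluid_splitstep_ppe} already contains the full divergence $\nabla\cdot[\rho_f\nabla\ve{u}_f(\ve{u}_f-\ve{u}_m)]$, which cancels wholesale against the identical term produced by applying $\nabla\cdot$ to \eqref{eqn:fluid_splitstep_mom}. There is no need to expand into a quadratic $(\nabla\ve{u}_f)\!:\!(\nabla\ve{u}_f)^T$ part and a $\phi$-linear part; the whole block disappears at once, and the paper obtains the pure diffusion equation $\dtale{\phi}-\nabla\cdot\!\big(2\tfrac{\mu_f}{\rho_f}\nabla\phi\big)=0$. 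Your extra drift term is harmless for the uniqueness argument, but it is not actually there. Second, your anticipated resolution of the ALE commutator---cancellation against residual pieces of the PPE convective term---cannot work precisely because that term has already cancelled in full. The paper does not attempt an algebraic cancellation here; it invokes the geometric conservation law to justify that the resulting heat equation preserves the constant (zero) state on the moving grid, whence $\phi\equiv 0$. If you want a self-contained argument, that is the missing ingredient, not a cancellation with convective remnants.
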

\begin{proof}
    \revised{See appendix.}
\end{proof}

This split-step scheme is not plagued by spurious pressure boundary layers, since the pressure is recovered from a fully consistent PPE rather than updated as in classical pressure-correction methods \cite{Johnston2004,Liu2009,Pacheco2021b,Guermond2005}. However, Liu~\cite{Liu2009a} observed that stability can be improved significantly by performing a Leray projection, which is of particular importance when considering nonsmooth solutions. So, we aim to improve stability of the overall scheme and suppress accumulation of errors in mass conservation by solving the simple Poisson problem
\begin{align}
	-\Delta \psi &= - \nabla \cdot \ve{u}_f&&\text{in } \Oft \,,\\
	\nf \cdot \nabla \psi &= 0 &&\text{on }\GDf\,,\\
	\psi &= 0 &&\text{on }\partial\Oft\setminus\GDf\,.
	\label{eqn:leray}
\end{align}
and updating $\check{\ve{u}}_f := \ve{u}_f - \nabla {\psi}$ via projection. It is then easily verified that
\begin{align*}
	\nabla \cdot\check{\ve{u}}_f &= 0 && \text{in }\Oft \,,\\
	\check{\ve{u}}_f \cdot \nf &= \ve{u}_f \cdot \nf && \text{on } \GDf\,,\\
	\check{\ve{u}}_f \cdot \ve{\tau}_f &= \ve{u}_f \cdot \ve{\tau}_f && \text{on } \partial\Oft\setminus\GDf\,,
\end{align*} 
with any tangential vector $\ve{\tau}_f$. Both $\ve{u}_f$ and its (weakly) divergence-free counterpart $\check{\ve{u}}_f$ converge at the same rates~\cite{Guermond2006}, making them equally valuable options from an accuracy point of view. However, it is also clear that one either settles for improved mass conservation, considering $\check{\ve{u}}_f$, or chooses $\ve{u}_f$, fulfilling the boundary conditions exactly. There is an intermediate alternative available, though: As done by Liu~\cite{Liu2009} in the original scheme, we apply Leray projection on the past velocities only, effectively skipping the $L^2$-projection step to obtain $\check{\ve{u}}_f$, and thereby fulfill the Dirichlet boundary conditions on the velocity $\ve{u}_f$ exactly. This technique is also referred to as divergence damping~\cite{Jia2011,Li2020} and has been shown to effectively reduce mass conservation errors and improve overall stability, while being cheaper than standard Leray projection and preserving boundary conditions on the velocity. Then, we construct the time-discrete weak form of the split-step scheme using BDF-$m$ schemes of the form
\begin{align}
	\dtale{\ve{u}_f
	\left(t^{n+1}
	\right)
	} 
	\approx
	\alpha_0^m\ve{u}_f^{n+1} + \sum_{j=1}^m \alpha_j^m 
	\ve{u}_f^{n+1-j}
	%\left( \ve{u}_f^{n+1-j} -  \nabla \psi^{n+1-j}\right)
	\,,
	\label{eqn:dtale_bdf}
\end{align}
and higher-order accurate extrapolation % of velocity and pressure at $t=t^{n+1}$,
\begin{align}
	\ve{u}_f^{n+1} \approx \ve{u}_f^{\star} = \sum_{j=1}^m\beta_j^m\ve{u}_f^{n+1-j}
	\,,
	%
%	\quad \text{and}
%	\quad
%	%
%	p_f^{n+1} \approx {p}_f^{\star} = \sum_{j=1}^m\beta_j^m{p}_f^{n+1-j}
%	\,,
	\label{eqn:extrap}
\end{align}
exemplarily shown for $\ve{u}_f$ with coefficients $\alpha_j^m$ and $\beta_j^m$ according to Table~\ref{tab:bdf_extrap_coeff}, to effectively decouple balance of linear momentum and pressure Poisson equations.
\begin{table}[!htbp]
	\centering
	\begin{tabular}{||r|c|c|c||}
		\hline
		&&&\\[-2ex]
		$j$ & 0 & 1 & 2 \\
		\hline\hline
		&&&\\[-1ex]
		$\alpha_j^m$ 
		& $\frac{2 \Delta t^{n} + \Delta t^{n-1}}{\Delta t^{n} (\Delta t^{n}+\Delta t^{n-1})}$ 
		& $- \frac{\Delta t^{n}+\Delta t^{n-1}}{\Delta t^{n} \Delta t^{n-1}}$  & $\frac{\Delta t^{n}}{\Delta t^{n-1}(\Delta t^{n}+\Delta t^{n-1})}$ \\[1.4ex]
		$\beta_j^m$  & $-$ 
		& $1+\frac{\Delta t^{n}}{\Delta t^{n-1}}$ & $\frac{\Delta t^{n}}{\Delta t^{n-1}}$
		\\[1.4ex]
		\hline
	\end{tabular}
	\caption{Backward differentiation and extrapolation coefficients, order $m=2$~\cite{Hairer1993}.}
	\label{tab:bdf_extrap_coeff}
\end{table}
Since for generalised Newtonian fluids the viscosity depends on the shear rate, which itself is a function of the velocity gradient, standard Lagrangian finite elements cannot be applied in a straight-forward way due to increased regularity requirements on the velocity interpolant. Therefore, the viscosity $\mu_f$ is introduced as an additional unknown and recovered through a simple $L^2$-projection. Thus, in the split-step scheme in time step $n$, first update the domain position $\ve{d}_f^{n+1}$ and compute the mesh velocities $\ve{u}_m$ using exactly the same \mbox{BDF-$m$} formula as for $\ve{u}_f$~\eqref{eqn:dtale_bdf}. Then, extrapolate known velocities, pressures and viscosities from previous time steps to obtain $\ve{u}_f^\star$, $p_f^\star$ and $\mu_f^\star$ via~\eqref{eqn:extrap} to linearise/decouple momentum balance and pressure Poisson equations. In contrast to the scheme presented in~\cite{Pacheco2021b}, momentum balance and PPE steps are executed in reversed order, which is motivated by observations made, showing that updating the velocity with an implicitly coupled pressure increases stability in semi-implicit schemes of FSI as discussed later.
So, we first project the pressure Dirichlet boundary conditions~\eqref{eqn:fluid_splitstep_BC_pressure_dirichlet_onGNf} and~\eqref{eqn:fluid_splitstep_BC_pressure_dirichlet_onGRf} on the respective boundary segments via
\begin{align}
\zeta^{n+1} 
&= 
-\mu_f^{n+1} \nabla \cdot \ve{u}_f^{n+1} 
+ \nf \cdot \left(2\mu_f^{n+1} \nabla^S\ve{u}_f^{n+1} \nf 
- \ve{t}_f^{n+1} \right)
&& \text{on } \GNf 
\label{eqn:final_pressure_projection_N}
\,,
\\
\zeta^{n+1} 
&= 
-\mu_f^{n+1} \nabla \cdot \ve{u}_f^{n+1} 
+ \nf \cdot \left(2\mu_f^{n+1} \nabla^S\ve{u}_f^{n+1} \nf 
- \ve{h}_f^{n+1} + \eta_f^R \ve{u}_f^{n+1}\right) 
&& \text{on } \GRf
\,,
\label{eqn:final_pressure_projection_R}
\end{align}
such that the resulting quantity $\zeta^{n+1}$ is continuous on the whole combined boundary segment \mbox{$\Gamma^t_p := \partial\Oft\setminus\GDf$}. This intermediate step is necessary, since the pressure Dirichlet condition would be discontinuous otherwise, but is fortunately negligible in terms of computational cost. A suitable weak form of Equation~\eqref{eqn:fluid_splitstep_ppe} to find $p_f\in H^1(\Oft)$, $p_f|_{\Gamma_p^t}=\zeta$ is obtained by multiplying with $\varphi \in H^1(\Oft)$, $\varphi|_{\Gamma_p^t} = 0$ and integrating by parts to obtain
\begin{align*}
\langle \nabla \varphi, \nabla p_f\rangle_{\Oft}
=
\langle \varphi, \nf\cdot\nabla p_f\rangle_{\GDf} 
+ \langle \varphi, \left[\nabla\times\left(\nabla \times \ve{u}_f\right)\right]\cdot\nabla \mu_f \rangle_{\Oft} 
%\\
%&
+ \langle \varphi, 
\nabla\cdot \left[\rho_f \nabla\ve{u}_f\left(\ve{u}_f - \ve{u}_m\right) - 2\nabla^S\ve{u}_f \nabla\mu_f \right]
\rangle_{\Oft}
\,,
\end{align*} 
where we can insert the pressure Neumann condition~\eqref{eqn:fluid_splitstep_BC_pressure_neumann} and integrate by parts again to get
\begin{align*}
\langle \nabla\varphi, \nabla p_f\rangle_{\Oft}
=&
- \langle \varphi, \rho_f \nf\cdot \dtale{\ve{u}_f}\rangle_{\GDf} 
- \langle \varphi \nf, \mu_f \nabla\times\left(\nabla\times\ve{u}_f\right)\rangle_{\GDf} 
\\
&
+
\langle\varphi,
\left[\nabla\times\left(\nabla \times \ve{u}_f\right)\right]\cdot\nabla \mu_f 
\rangle_{\Oft}
%\\
%&
+
\langle \nabla \varphi, 
2\nabla^S\ve{u}_f \nabla\mu_f
-
\rho_f \nabla\ve{u}_f\left(\ve{u}_f - \ve{u}_m\right)
\rangle_{\Oft}
\,.
\end{align*}
This is further simplified using
\begin{align*}
\langle 
\varphi \nf, \mu_f \nabla\times\left(\nabla\times\ve{u}_f\right) 
\rangle_{\GDf} 
= 
\langle 
\nabla \varphi, \mu_f \nabla\times\left(\nabla\times\ve{u}_f\right) 
\rangle_{\Oft}
+
\langle 
\varphi, \nabla \mu_f \cdot \left[\nabla\times\left(\nabla \times \ve{u}_f\right)\right]
\rangle_{\Oft}
\end{align*}
\begin{align*}
\text{and}\qquad\qquad
\langle 
\nabla\varphi,
\mu_f\nabla\times\left(\nabla \times \ve{u}_f\right)
\rangle_{\Oft}
=
&
\langle
\nabla \varphi \times \nf,
\mu_f \nabla \times \ve{u}_f
\rangle_{\partial\Oft}
+
\langle \nabla \times \left(\mu_f\nabla\varphi\right),\nabla\times\ve{u}_f\rangle_{\Oft}
\\
=
&
\langle
\nabla \varphi \times \nf,
\mu_f \nabla \times \ve{u}_f
\rangle_{\partial\Oft}
+
\langle \nabla \varphi , \left[ \nabla \ve{u}_f - (\nabla \ve{u}_f)^T\right] \nabla \mu_f \rangle_{\Oft}
\,,
\end{align*}
which leads then in the time-discrete case, also replacing $\dtale{\ve{u}_f}$ by a BDF-$m$ approximation~\eqref{eqn:dtale_bdf} to the Dirichlet boundary data $\ve{g}_f$ given on $\GDf$, to the problem of finding the pressure $p_f^{n+1}\in H^1(\Oft)$, such that $p_f^{n+1}|_{\Gamma^t_p}=\zeta^{n+1}$ for all $\varphi\in H^1(\Oft)$, $\varphi|_{\Gamma^t_p}=0$ and
\begin{align}
\langle \nabla\varphi, \nabla p_f^{n+1}\rangle_{\Oft} 
=&
- 
\langle 
\varphi \nf, 
\rho_f \sum_{j=0}^{m} \alpha_j^m \ve{g}_f^{n+1-j}
\rangle_{\GDf} 
+ \langle \nf\times\nabla\varphi,\mu_f^{n+1}\nabla\times\ve{u}_f^{n+1}\rangle_{\GDf}
\nonumber\\
&
+
\langle\nabla\varphi, 
2 \left(\nabla \ve{u}_f^{n+1}\right)^T\nabla\mu_f^{n+1}
- \rho_f \nabla \ve{u}_f^{n+1} (\ve{u}_f^{n+1}-\ve{u}_m^{n+1})
\rangle_{\Oft}
\,.
\label{eqn:final_ppe}
\end{align}
The weak form of momentum balance then reads: Find $\ve{u}_f^{n+1}\in[H^1(\Oft)]^d$, such that $\ve{u}_f^{n+1}|_{\GDf} = \ve{g}_f^{n+1}$ and
\begin{gather}
	\rho_f 
	\langle 
		\ve{\varphi}
		,
		\alpha_0^m\ve{u}_f^{n+1} 
		+
		\nabla \ve{u}_f^{n+1} \left( \ve{u}_f^\star - \ve{u}_m^{n+1} \right)
	\rangle_{\Oft}
	+
	\langle 
		\nabla \ve{\varphi}
	    ,
	    2 \mu_f^\star \nabla^S \ve{u}_f^{n+1}  
	\rangle_{\Oft}
	\nonumber
	\\
    = 
    \langle 
		\nabla \ve{\varphi}
	    ,
        p_f^\star \te{I}  
	\rangle_{\Oft}
    -
	\rho_f 
	\langle 
		\ve{\varphi}
		,
		\sum_{j=1}^m \alpha_j^m \left( \ve{u}_f^{n+1-j} -  \nabla \psi^{n+1-j}\right)
	\rangle_{\Oft} 
	+ \langle \ve{\varphi} , \ve{t}_f^{n+1}\rangle_{\GNf}
	+ \langle \ve{\varphi} , \ve{h}_f^{n+1} - \eta_f^R \ve{u}_f^{n+1}\rangle_{\GRf}
	\label{eqn:final_mom_bal_full_stress_form}
\end{gather}
for all $\ve{\varphi} \in [H^1(\Oft)]^d$, $\ve{\varphi}|_{\GDf}=\ve{0}$, with the divergence suppression applied to the old time step velocities via $\nabla\psi^{n+1-j}$. Here, traction conditions arise naturally from integrating the full stress divergence by parts. Linearising the convective term is a widely applied technique to improve efficiency in transient problems of incompressible flow for both coupled velocity-pressure formulations and pressure-/velocity projection or splitstep schemes \cite{Simo1994,TUREK1996,Elman2011,Ingram2013,Schussnig2021,Guermond2006,Liu2009,Pacheco2021b}. 
In the case of generalised Newtonian fluids, i.e., when the viscosity is not constant, the next step is to find the dynamic viscosity $\mu_f^{n+1} \in H^1(\Oft)$ given the current velocity $\ve{u}_f^{n+1}$, such that
\begin{align}
\langle \varphi , \mu_f^{n+1}\rangle_{\Oft} = \langle \varphi, \eta\left(\dot{\gamma}(\nabla\ve{u}_f^{n+1}) \right) \rangle_{\Oft}
\label{eqn:final_visc_proj}
\end{align}
for all $\varphi\in L^2(\Oft)$, which is a step that can simply be skipped in the Newtonian case, since the viscosity is constant, i.e., $\mu_f^{n+1} \equiv \eta_\infty$ holds at any point time. Also, the variable $\psi^{n+1}$ used for divergence supression is updated using $\ve{u}_f^{n+1}$ by solving the standard Poisson problem of finding $\psi^{n+1}\in H^1(\Oft)$ such that $\psi^{n+1} = 0$ on $\Gamma^t_p$ and
\begin{align}
	\langle \nabla \varphi , \nabla \psi^{n+1} \rangle_{\Oft} 
	= 
	\langle \varphi , \nabla \cdot \ve{u}_f^{n+1} \rangle_{\Oft}
	\label{eqn:final_leray_proj}
\end{align}
for all $\varphi \in H^1(\Oft)$, with $\varphi|_{\Gamma^t_p} = 0$ to apply divergence suppression on the current time step's velocity to be used in the next time step's momentum balance equation.

In summary, it is thus possible to construct a weak form containing only first-order derivatives, rendering our beloved $C^0$-continuous, standard Lagrangian finite elements applicable to the problem at hand. In fact, we might even employ equal-order finite element pairs for velocity and pressure as already pointed out. The presented weak forms contain a generous set of boundary conditions, which will in the coupled FSI-problem (partly) depend on the solid subproblem's solution as shall be seen next.

\section{The coupled FSI problem}
\label{sec:coupled_problem}
The strong form of the FSI problem incorporating mesh, fluid and solid subproblems as discussed in Section~\ref{sec:subproblems} including only interface conditions for brevity reads
\begin{align}
% MESH
-\nabla \cdot ( c \nabla \ve{d}_f) 
&= 0 
&&\text{in } \hOf 
\,,
\\
% SOLID
\rho_s \frac{\partial^2}{\partial t^2} \ve{d}_s
- \nabla \cdot \te{P} 
&= 0 
&& \text{in } \hOs 
\,,
\\
% FLUID MOM
\rho_f \left[
\dtale{\ve{u}_f}
+
\nabla \ve{u}_f \left(\ve{u}_f - \ve{u}_m \right)
\right]
-
%\mu_f \Delta \ve{u}_f
%-
%2\nabla^S\ve{u}_f\nabla\mu_f
\nabla\cdot \left( 2 \mu_f \nabla^S \ve{u}_f \right)
&= -\nabla p_f 
&&\text{in } \Oft
\,,
\\
% FLUID PPE
\nabla\cdot \left[\rho_f \nabla\ve{u}_f\left(\ve{u}_f - \ve{u}_m\right) - 2\nabla^S\ve{u}_f \nabla\mu_f \right]
+ \left[\nabla\times\left(\nabla \times \ve{u}_f\right)\right]\cdot\nabla \mu_f
&= -\Delta p_f 
&&\text{in } \Oft
\,,
\\
% INTERFACE COND
\ve{d}_f 
&= \ve{d}_s 
&&\text{on } \hInt
\,,
\label{eqn:fsi_cont_disp}
\\
\ve{u}_f
&= 
\frac{\partial}{\partial t} \ve{d}_s
&&\text{on }\Intt 
\,,
\label{eqn:fsi_cont_velo}
\\
J_s^{-1}\te{P}\te{F}_{\! s}^T \nf 
&= \te{\sigma}_f \nf
&& \text{on }\Intt
\,,
\label{eqn:fsi_cont_traction}
\end{align}
where~\eqref{eqn:fsi_cont_disp}--\eqref{eqn:fsi_cont_traction} enforce the continuity of displacements, velocities and tractions on the fluid--structure interface. Note that the continuity of tractions in Equation~\eqref{eqn:fsi_cont_traction} is formulated on $\Intt$, using the Cauchy stresses and the current configuration's normal vectors $\nf$, but is easily rewritten as
\begin{align}
\te{P}\hns = J_f \te{\sigma}_f \te{F}_{\! f}^{-T} \hns 
\quad 
\text{on } \hInt
\,,
\label{eqn:fsi_cont_traction_REF}
\end{align}
to enforce balance of tractions in the reference configuration. The Robin--Robin (RR) coupling conditions (see, e.g.,\cite{Badia2008,Nobile2008} or \cite{Astorino2010} in a projection-based semi-implicit scheme), linearly combine the interface conditions enforcing continuity of velocities and normal tractions, i.e., \eqref{eqn:fsi_cont_velo} and \eqref{eqn:fsi_cont_traction} or \eqref{eqn:fsi_cont_traction_REF}, yielding in the respective configurations
\begin{align}
	\eta_f^R \ve{u}_f
	+
	\te{\sigma}_f \nf
	&=
	\eta_f^R \frac{\partial}{\partial t} \ve{d}_s
	+ 
	J_s^{-1}\te{P}\te{F}_{\! s}^T\nf 
	&&\text{on }\Intt
	\,, 
	\label{eqn:fsi_robin_f}
	\\
	\eta_s^R \frac{\partial}{\partial t} \ve{d}_s
	+ 
	\te{P}\hns 
	&=
	\eta_s^R \ve{u}_f
	+
	J_f \te{\sigma}_f \te{F}_{\! f}^{-T}\hns
	&&\text{on }\hInt
	\,,
	\label{eqn:fsi_robin_s}
\end{align}
with Robin parameters $\eta_f^R,\eta_s^R > 0$. This type of interface condition leads to a coupling algorithm with good convergence properties even in the case of high added-mass effects (cf. \cite{Astorino2010,Badia2008,Gerardo-Giorda2010,Nobile2012}), which is of particular importance in biomedical applications \cite{Causin2005,Forster2007,Kuttler2010}. As the basic algorithm, we perform an implicit single-loop coupling scheme (see, e.g., \cite{Kuttler2010,Matthies2003,Nobile2013,Nobile2014}), which is executed until convergence criteria of the form
\begin{gather}
	\frac{||\fevec{d}_s^{k+1} - \fevec{d}_s^k||}{||\fevec{d}_s^{k+1}||} 
	< 
	\epsilon_{rel}
	\quad
	\text{and}
	\quad
	||\fevec{d}_s^{k+1} - \fevec{d}_s^k|| 
	< 
	\epsilon_{abs}
	\label{eqn:fsi_convergence_criteria}
\end{gather}
are fulfilled. In the following, we will denote the last iterates by a superscript $k$, and the newly computed iterate by a superscript $k+1$.
Moreover, we directly present the RR scheme, which is obtained inserting interface conditions into the Robin terms of the respective subproblems. So, at each time step $n$, given the solutions from previous time steps $\ve{d}_f^{n}$, $\ve{d}_f^{n-1}$, $\ve{d}_s^{n}$, $\ve{d}_s^{n-1}$, $\dot{\ve{d}}_s^{n}$, $\ddot{\ve{d}}_s^{n}$, $\ve{u}_f^{n}$, $\ve{u}_f^{n-1}$, $\mu_f^{n}$, $\mu_f^{n-1}$, $p_f^{n}$ and $p_f^{n-1}$, the resulting coupling algorithm reads

\vspace{5mm}
\hrule
\begin{enumerate}
	%
	% LERAY PROJECTION ON PAST time step
	%
	\item \textit{Divergence suppression:} Update the Leray projection variable of the past time step's fluid velocity $\ve{u}_f^n$, $\psi^{n}\in H^1(\Oft)$, such that $\psi^{n}|_{\Gamma^t_p} = 0$ and
	\begin{align}
	\langle \nabla \varphi,\nabla\psi^{n}\rangle_{\Oft}
	= 
	\langle \varphi,\nabla\cdot\ve{u}_f^{n}\rangle_{\Oft}
	\qquad
	\forall \varphi\in H^1(\Oft),
	\text{ with } \varphi|_{\Gamma^t_p}=0.
	\label{eqn:fsi_leray}
	\end{align}
	%	
	% EXTRAP AND INIT
	%
	\item \textit{Extrapolation/initial guess:} 
	Compute $\ve{d}_s^\star$, $\mu_f^\star$, $\ve{u}_f^\star$ and $p_f^\star$ based on old time step solutions via~\eqref{eqn:extrap} and set $\ve{d}_s^{k}=\ve{d}_s^{\star}$, $\ve{u}_f^{k}=\ve{u}_f^{\star}$ and $p_f^{k}= p_f^{\star}$ as initial guess.
	
	\item \textit{Implicit coupling loop:}\\
	\textbf{WHILE} not converged according to Equation~\eqref{eqn:fsi_convergence_criteria} \textbf{DO}	
	\begin{enumerate}
		%
		% MESH
		%
		\item \textit{Mesh subproblem:} Update the domain $\Oft$ by
			finding $\ve{d}_f^{n+1}\in \revised{[H^1(\hOf)]^d}$, such that $\ve{d}_f^{k+1} = \ve{d}_s^k$ on $\hInt$, $\ve{d}_f^{k+1} = \ve{0}$ on $\partial\hOf\setminus\hInt$ and  	
			\begin{align}
			\langle c (\ve{d}_f^n) \nabla \ve{\varphi}, \nabla \ve{d}_f^{k+1} \rangle_{\hOf} 
			= 
			0
			\,
			\qquad
			\forall \ve{\varphi} 
			\in \revised{[H^1(\hOf)]^d}, \text{ with } \ve{\varphi}|_{\partial\hOf} = \ve{0}
			\label{eqn:fsi_mesh}
			\end{align}
			and linearised stiffening parameter $c(\ve{d}_f^n)$.
		%
		% MESH VELO
		%
		\item \textit{Mesh velocity update:}
			Compute $\ve{u}_m^{k+1}$ via the BDF-$m$ formula~\eqref{eqn:dtale_bdf}.
		%
		% VISC PROJ
		%
		\item \textit{Viscosity projection:} Find $\mu_f^{k+1} \in H^1(\Oft)$, such that
		\begin{align}
		\langle \varphi , \mu_f^{k+1}\rangle_{\Oft} = \langle \varphi, \eta\left(\dot{\gamma}(\nabla\ve{u}_f^{k}) \right) \rangle_{\Oft}
		\qquad 
		\forall \varphi\in L^2(\Oft)
		\,.
		\label{eqn:fsi_visc_proj}	
		\end{align}
		%
		% PRESS PROJ
		%
		\item \textit{Pressure boundary projection: }Update the pressure Dirichlet condition by projecting $\zeta^{k+1}$ 
		on $\GNf$ using 
		%on $\Gamma^t_p:=\partial\Oft\setminus\GDf$ using
%		on $\Gamma^t_p := \partial \Oft \setminus \GDf = \GpseudoNf \cup \Intt$ using
		\begin{align}
		%
%		\zeta^{k+1}|_{\Intt}
%		=& 
%		-\mu_f^{k} \nabla \cdot \ve{u}_f^{k} 
%		+ \nf \cdot \left(
%		2\mu_f^{k} \nabla^S\ve{u}_f^{k} \nf 
%		- \ve{h}_f^{k} + \eta_f^R \ve{u}_f^{k}
%		\right) 
%		\,,
%		\nonumber \\
		\zeta^{k+1}|_{\GNf} 
		=& 
		-\mu_f^{k+1} \nabla \cdot \ve{u}_f^{k} 
		+ \nf \cdot \left(
		2\mu_f^{k+1} \nabla^S\ve{u}_f^{k} \nf 
		- \ve{t}_f^{n+1}
		\right)
		\,.
		\label{eqn:fsi_press_proj}
		\end{align}
%		where $\ve{h}_f^{n+1} = \eta_s^R \dot{\ve{d}}_s^{n+1} + J_s^{-1}\te{P}\te{F}^T\nf$ is evaluated with the last iterate $\ve{d}_s^k$.
		%
		% PPE
		%
		\item \textit{Pressure Poisson step:} Find $p_f^{k+1}\in H^1(\Oft)$, such that $p_f^{k+1}|_{\GNf} = \zeta^{k+1}$ and
		\begin{align}
		\langle \nabla\varphi, \nabla p_f^{k+1}\rangle_{\Oft} 
		= 
		& 
		\langle\nabla\varphi, 
		2 \left(\nabla \ve{u}_f^{k}\right)^T\nabla\mu_f^{k+1}
		- \rho_f \nabla \ve{u}_f^{k} (\ve{u}_f^{k}-\ve{u}_m^{k+1})
		\rangle_{\Oft}
		\nonumber\\
		&
		+
		\langle \nf\times\nabla\varphi,\mu_f^{k+1}\nabla\times\ve{u}_f^{k}
		\rangle_{\Intt\cup\GDf}
		-
		\langle 
		\varphi \nf, 
		\rho_f \sum_{j=0}^{m} \alpha_j^m \ve{g}_f^{n+1-j}
		\rangle_{\GDf}
		-
		\langle 
		\varphi \nf, 
		\rho_f \ddot{\ve{d}}_s^{n+1}
		\rangle_{\Intt}
		\label{eqn:fsi_ppe}
		\end{align}
		holds for all $\varphi\in H^1(\Oft)$ with $\varphi|_{\GNf}=0$, using the last solid iterate $\ve{d}_s^k$ to compute $\ddot{\ve{d}}_s^{n+1}$.
		%
		% SOLID MOM
		%
		\item \textit{Solid momentum:} Solve the nonlinear solid momentum balance equation via Newton's method, where in each step $l$, \revised{the discrete finite element vectors} $\fevec{d}_s^{l+1} = \fevec{d}_s^l + \Delta\fevec{d}_s^l$ are updated. The increment $\Delta \fevec{d}_s^l$ \revised{is the discrete vector of nodal unknowns approximating} $\delta\ve{d}_s^l \in [H^1(\hOs)]^d$, for which $\delta \ve{d}_s^l|_{\hGDs}=\ve{0}$ and
		\begin{align}
		% dtt_d term
		&\frac{\rho_s\alpha_m'}{\beta \Delta t^2}
		\langle \ve{\varphi}, \delta\ve{d}_s^l\rangle_{\hOs}
		% stress term
		+ \alpha_f' \langle \nabla \ve{\varphi}, 
		\frac{\partial}{\partial\ve{d}_s}%\mathcal{D}_{\ve{d}_s} 
		\te{P}\left(\ve{d}_s^{l}\right) \rangle_{\hOs}
		% Robin term
		+ \eta_s^R\frac{\alpha_f'\gamma}{\beta\Delta t} \langle \ve{\varphi}, \delta\ve{d}_s^l \rangle_{\hInt}
		\nonumber\\
		=&
		%
		% stress Neumann boundary terms
		\langle \ve{\varphi}, \alpha_f' {\ve{t}}_s^{n+1} + \alpha_f {\ve{t}}_s^n \rangle_{\hGNs} 
		% stress Robin boundary terms
		+ \langle \ve{\varphi}, 
		\alpha_f' \left[ {\ve{h}}_s^{n+1} 
		- \eta_s^R \dot{\ve{d}}_s^{n+1} \left(\ve{d}_s^l\right) 
		\right] 
		+ \alpha_f \te{P}\left(\ve{d}_s^n\right) \hns \rangle_{\hInt} 
		\nonumber\\
		%
		% time term
		&- \rho_s \langle \ve{\varphi}, 
		\alpha_m'\ddot{\ve{d}}_s^{n+1} (\ve{d}_s^l)
		+ \alpha_m \ddot{\ve{d}}_s^n
		\rangle_{\hOs}
		% stress domain terms
		- \langle \nabla \ve{\varphi} , \alpha_f' \te{P}\left(\ve{d}_s^{l}\right)
		+ \alpha_f \te{P}\left(\ve{d}_s^n\right) \rangle_{\hOs}
		\label{eqn:fsi_solid_mom}
		\end{align}
		holds for all $\ve{\varphi}\in[H^1(\hOs)]^d$ with $\ve{\varphi}|_{\GDf} = \ve{0}$, using the last iterate $\ve{d}_s^l$ to evaluate the time derivatives $\ddot{\ve{d}}_s^{n+1}$, $\dot{\ve{d}}_s^{n+1}$ and Robin data \mbox{$\ve{h}_s^{n+1} = \eta_s^R \ve{u}_f^{k} + J_f \te{\sigma}_f (\ve{u}_f^k, p_f^{k+1}, \mu_f^{k+1}) \te{F}_{\! f}^{-T}\hns$}. Set $\fevec{d}_s^{k+1} = \fevec{d}_s^{l+1}$, once the relative convergence criterion  $||\fevec{d}_s^{l+1}-\fevec{d}_s^l||<\epsilon_N ||\fevec{d}_s^k||$ is fulfilled.
		%
		%
		% FLUID MOM
		%
		\item \textit{Fluid momentum:} Solve the linearised momentum equation in $\Oft$, i.e., find $\ve{u}_f^{k+1}\in[H^1(\Oft)]^d$, such that $\ve{u}_f^{k+1}|_{\GDf} = \ve{g}_f^{n+1}$ and
		\begin{gather}
		\rho_f 
		\langle 
    		\ve{\varphi}
    		,
    		\alpha_0^m\ve{u}_f^{k+1}
    		+
    		\nabla \ve{u}_f^{k+1} \left( \ve{u}_f^k - \ve{u}_m^{k+1} \right)
		\rangle_{\Oft}
		+
		\langle
		    \nabla\ve{ \varphi}
		    ,
		    2 \mu_f^{k+1} \nabla^S\ve{u}_f^{k+1}
	    \rangle_{\Oft}
		\nonumber
		\\
		=
		\langle\nabla\ve{\varphi}, p_f^{k+1} \te{I} \rangle_{\Oft}
		- \rho_f 
		\langle 
		\ve{\varphi}
		,
		\sum_{j=1}^m \alpha_j^m \left( \ve{u}_f^{n+1-j} -  \nabla \psi^{n+1-j}\right)
		\rangle_{\Oft} 
		+ \langle \ve{\varphi} , \ve{t}_f^{n+1}\rangle_{\GNf}
		+ \langle \ve{\varphi} , \ve{h}_f^{n+1} - \eta_f^R \ve{u}_f^{k+1} 
		\rangle_{\Intt}
		\, 
		\label{eqn:fsi_fluid_mom}
		\end{gather}
		for all $\ve{\varphi} \in \revised{[H^1(\Oft)]^d}$, with $\ve{\varphi}|_{\GDf} = \ve{0}$, and
		using the Leray projection acting on the past time step's fluid velocities via $\psi^{n+1-j}$
		and the updated Robin condition \mbox{$\ve{h}_f^{n+1} = \eta_f^R \dot{\ve{d}}_s^{n+1} + J_s^{-1}\te{P} (\ve{d}_s^{k+1})\te{F}^T_s\nf$}. 
		%
		%\item \textit{Aitken acceleration:} ...
	\end{enumerate}			
	\textbf{END DO}
	\item \textit{Update time step data:} Set $\ve{d}_f^{n+1}=\ve{d}_f^{k+1}$, $\ve{d}_s^{n+1}=\ve{d}_s^{k+1}$,
	$\ve{u}_f^{n+1}=\ve{u}_f^{k+1}$,
	${p}_f^{n+1}={p}_f^{k+1}$
	and
	${\mu}_f^{n+1}=\mu_f^{k+1}$.
\end{enumerate}
\hrule
\vspace{5mm}
Note here that the treatment of Robin boundary conditions in the momentum balance and PPE steps is not simply assigning Robin boundary conditions to the fluid subproblem, but rather enforcing Robin conditions on $\Intt$ in the fluid momentum equation and treating the interface as a Dirichlet boundary for all steps related to the fluid pressure. This combination is equivalent to the strategy adopted by \cite{Astorino2010} and is herein solely based on numerical observations. The sequence of fluid steps and viscosity projection turned out to be the most stable choice when confronted with large time steps and sudden jumps in fluid pressure boundary conditions as present in the pressure pulse benchmark in Section~\ref{sec:examples_pressure_pulse}.

The RR coupling algorithm as introduced above includes the standard Dirichlet--Neumann (DN) coupling scheme in the asymptotic limit, when $\eta_f^R\rightarrow\infty$ and $\eta_s^R = 0$. In the discrete setting, however, $\eta_f^R$ has to be assigned a bounded value, which motivates including the interface Dirichlet condition on the fluid velocity in a more direct way. Thus, in the DN case, the interface is treated as part of the fluid Dirichlet boundary $\Intt\subset\GDf$ together with setting $\eta_s^R=0$, effectively leading to small changes in the function space definitions only, but not introducing any additional terms. To counteract decreased convergence for high added-mass effects, Aitken's acceleration is applied to relax the discrete solution vector $\tilde{\fevec{x}}^{k+1}$ in iteration $k$ with a recursively defined $\omega_k$ \cite{Kuttler2008}
\begin{gather}
\fevec{x}^{k+1} = \omega_k \tilde{\fevec{x}}^{k+1} + (1-\omega_k) \fevec{x}^k
\quad
\text{with}
\quad
\omega_k = -\omega_{k-1} 
\frac{\fevec{r}^{k}\cdot(\fevec{r}^{k+1} - \fevec{r}^k)}
	 {|| \fevec{r}^{k+1} - \fevec{r}^k ||^2}
\nonumber\\
\text{and}
\quad
\fevec{r}^{k+1} = \tilde{\fevec{x}}^{k+1}-\fevec{x}^k  
\,,
\label{eqn:aitken}
\end{gather}
where $||\cdot||$ denotes the Euclidean norm. \revised{Clearly, more advanced acceleration schemes (see, e.g., \cite{Gerbeau2003,Michler2005,Fernandez2005,Spenke2020}) might be employed, but herein we restrict the discussion to the reportedly effective Aitken relaxation for the sake of brevity.}
Another option to increase efficiency drastically is to give up on fully implicit coupling of all the involved subproblems, but rather settling for a semi-implicit variant of the scheme. This option is directly accessible having formulated the fully implicit algorithms by simply moving the mesh and fluid momentum subproblems and the viscosity projection out of the coupling loop, similar to the methods proposed by~\cite{Fernandez2007,Astorino2010,Breuer2012,Quaini2007,Lozovskiy2015,He2015,Naseri2018} for Newtonian fluids and~\cite{Amani2020} considering a non-Newtonian, viscoelastic fluid. Interestingly, the sequence of substeps in the fluid phase had to be changed in order to yield satisfying results. First, the fluid pressure and solid displacement are implicitly coupled, before explicitly treating the fluid balance of linear momentum and viscosity projection rather than performing the update on $\ve{u}_f$ before the implicit loop. This way, the dependence on an extrapolated, non-coupled pressure is eliminated, which results in improved robustness of the scheme. An additional tuning possibility is available via the convergence criterion in the solid's Newton scheme, where one may tweak the tolerance $\epsilon_N$ or even exit after a fixed number of steps in the nonlinear solver in the spirit of \cite{LANGER2015b,Nobile2014}. Using suitable higher-order extrapolation schemes, temporal accuracy is preserved, while the fully implicit coupling of pressure and structural displacements is sufficient to obtain a stable method as numerically observed and proven for simplified model problems (cf.~\cite{GRANDMONT2001,Fernandez2007,Nobile2013,Lozovskiy2015}). In a nutshell, the following distinct features and benefits arise in the proposed scheme when compared to related methods:
\begin{enumerate}[(i)]
	\item Higher-order and possibly adaptive time-stepping schemes are available based on standard time integration and extrapolation formulae.
	\item Equal-order finite element pairs can be employed, which would be unstable in the classical coupled velocity-pressure formulation of the Navier--Stokes equations.
	\item Exchanging the rheological model of the generalised Newtonian fluid is as simple as changing the right-hand side of the projection step.
	\item The semi-implicit design reduces computing times tremendously, while preserving stability properties and accuracy. Only structural displacements and fluid pressure are iteratively coupled in each time step.
	\item Robin interface conditions might improve convergence even for high added-mass effects when suitable parameters are available and standard acceleration methods are directly applicable. 
	\item Divergence suppression avoids the accumulation of errors in mass conservation and neither spoils interface conditions nor requires a velocity projection step.
	\item All linear systems can be effectively tackled using off-the-shelf black-box preconditioning techniques available as \text{open-source} scientific software. 
\end{enumerate} 

\section{Computational results}
\label{sec:computational_results}
This section is devoted to the thorough testing of the presented schemes in terms of accuracy and robustness as well as critically comparing their individual performance. All of the showcased results were obtained with the finite element toolbox \texttt{deal.II} \cite{dealII92}, solving each of the arising linear systems involved in the FSI-algorithm~\eqref{eqn:fsi_leray}--\eqref{eqn:fsi_solid_mom} iteratively. We employ algebraic multigrid methods provided by Trilinos' ML package~\cite{Heroux2012} for preconditioning each linear solve. A preconditioned conjugate gradient method is used for the mass matrices in the viscosity projection step~\eqref{eqn:fsi_visc_proj}, the pressure Dirichlet data projection~\eqref{eqn:fsi_press_proj} and also for the Poisson problems, i.e., the PPE~\eqref{eqn:fsi_ppe} and the mesh motion equation~\eqref{eqn:fsi_mesh}. The linear systems corresponding to fluid and solid momentum balance equations are solved adopting a flexible generalised minimal residual method.

The studied test cases are two analytical solutions taken from \cite{Serino2019,Serino2019b} to demonstrate convergence rates numerically, a classical benchmark of a pressure pulse travelling a straight pipe in three spatial dimensions (see, e.g., \cite{Formaggia2001,Janela2010,Langer2018}) and, finally, we study the flow through an idealised abdominal aortic aneurysm to showcase performance in a practically relevant setting.

\subsection{Analytical solution: rectangular piston}

An analytical solution is taken from \cite{Serino2019,Serino2019b}, which describes the periodic motion of a linear elastic piston in vertical direction. \revised{The Newtonian fluid simply follows the motion of the solid and freely exits/enters the computational domain over all boundaries in order to fulfil the incompressibility constraint.} The computational domain $\hat{\Omega} := \hOf\cup\hOs = [0,L]\times[-H,H]$, where $L=1$ and $H=0.5$, is depicted in Figure~\ref{fig:domain_rect_piston}, with the fluid initially occupying the region $\hat{x}_2\geq0$ and the undeformed solid in $\hat{x}_2\leq0$.
\begin{figure}[!htbp]
	\begin{minipage}{.49\linewidth}
		\centering
		\subfloat[Rectangular piston]{
			\label{fig:domain_rect_piston}
			\includegraphics[scale=0.02]{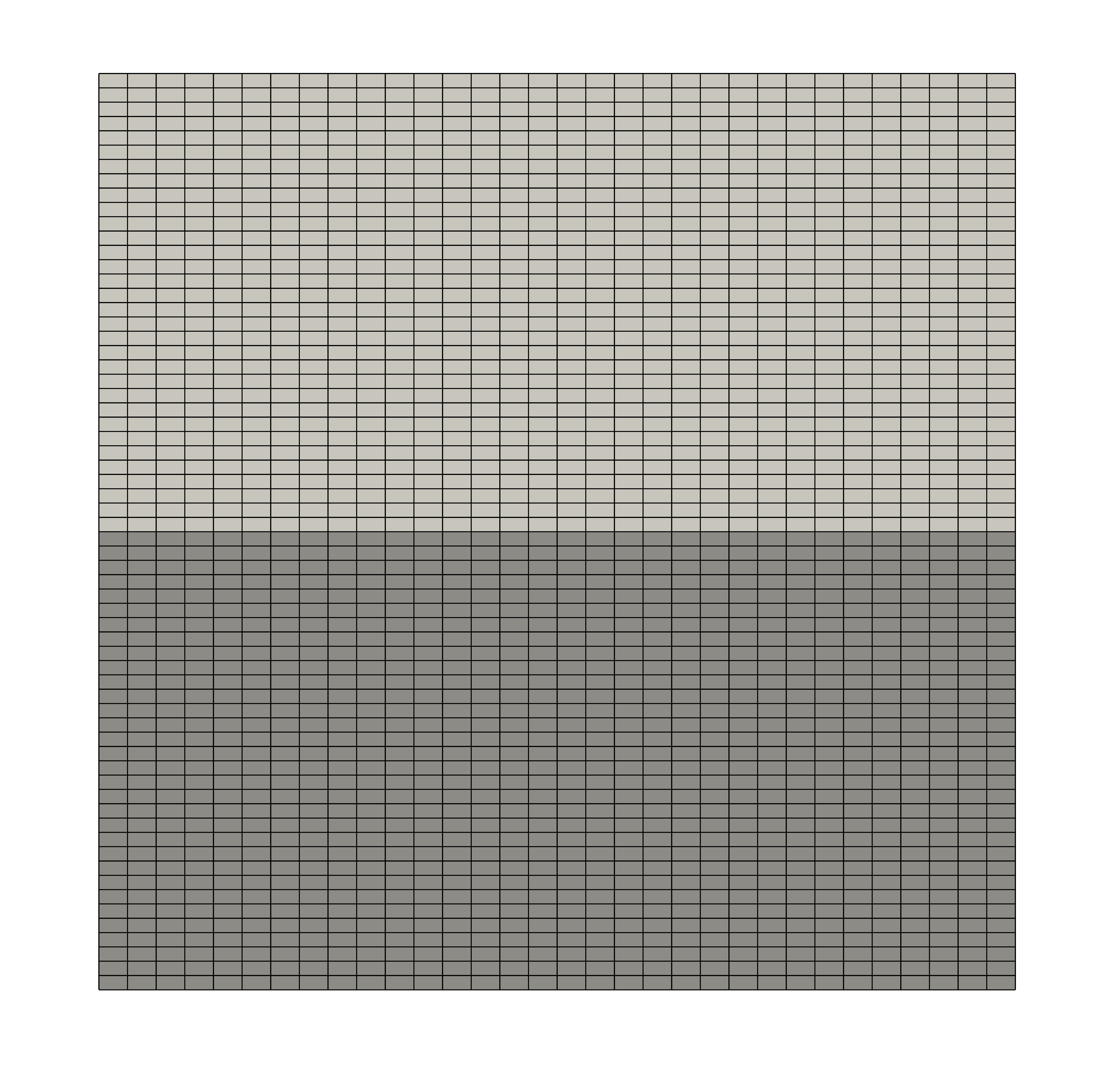}
			\put(-10,40){{$\hOs$}}
			\put(-10,105){{$\hOf$}}
			\put(-10,72.5){$\hInt$}
		}
	\end{minipage}%
	\hfil
	\begin{minipage}{.49\linewidth}
		\centering
		\subfloat[Circular piston]{
			\label{fig:domain_circ_piston}
			\includegraphics[scale=0.02]{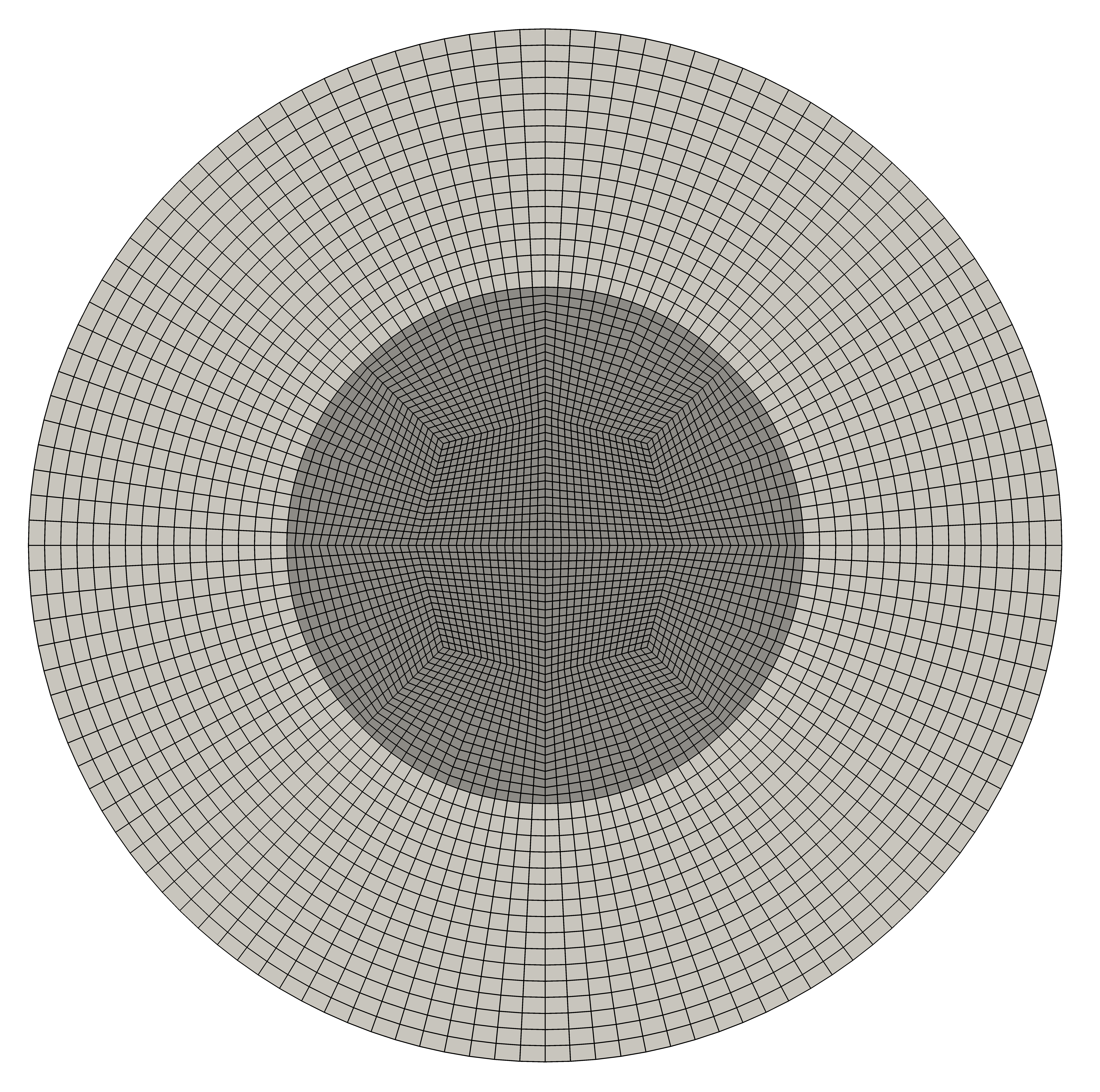}
			\put(-70,65){{$\hOs$}}
			\put(-5,105){{$\hOf$}}
			\put(-40,75){$\hInt$}
		}
	\end{minipage}
	\caption{Finite element meshes at refinement level 4 for analytical solutions \cite{Serino2019,Serino2019b}.}
	\label{fig:domain_rect_circ_piston}
\end{figure}
The exact solution is defined assuming zero displacements and velocities in horizontal direction and prescribing the vertical component of the interface displacement $\hat{d}_{\Sigma,2}$ as 
\begin{align}
	\hat{d}_{\Sigma,2} := a \sin(\omega t)
	\qquad
	\text{with}
	\quad
	a := 2 \alpha \sin\left(\frac{\omega H}{c_p}\right)
	,
	\quad
	c_p = \sqrt{\frac{\lambda_s+2\mu_s}{\rho_s}}	
	,
	\label{eqn:analyt_cp_def}
\end{align}
which oscillates vertically with amplitude $a$ and frequency $\omega$. The structural displacements are thus given by
\begin{align*}
	\hat{d}_{s,2} (\hat{x}_2, t)
	= 
	{f}\left(t-\frac{\hat{x}_2 + H}{c_p}\right)
	-
	{f}\left(t+\frac{\hat{x}_2 + H}{c_p}\right)
	\qquad
	\text{with }
	\quad
	{f}(\tau) = \alpha \cos(\omega \tau)
	\,.
\end{align*}
The fluid's vertical velocity resulting from the continuity equation is solely dependent on time, $u_{f,2}(t) = \frac{\partial}{\partial t} \hat{d}_{\Sigma,2}(t)$, and the pressure is given by
\begin{gather*}
	p_f (x_2,t) = 
	\frac{(H-x_2) p_{\Sigma} + (x_2 - \hat{d}_{\Sigma,2}) p_H}{H-d_{\Sigma,2}} 
	%\left[
	%	(H-x_2) p_{\Sigma} + (x_2 - \hat{d}_{\Sigma,2}) p_H
	%\right]
	\qquad
	\text{with}
	\quad
	p_{\Sigma} = -
	(\lambda_s+2\mu_s)\frac{\partial}{\partial \hat{x}_2}
	\hat{d}_{s,2}(0,t)
	\\
	\text{and}
	\quad
	p_H = - \rho_f \left[ H-\hat{d}_{s,2}(0,t) \right]
	\frac{\partial^2}{\partial t^2} \hat{d}_{s,2}(0,t)
	+ (\lambda_s+2\mu_s)
	\frac{\partial}{\partial \hat{x}_2} \hat{d}_{s,2}(0,t)	
	.
\end{gather*}
We refer to the original publications \cite{Serino2019,Serino2019b} for details on the derivation and proceed in defining problem parameters. The fluid density and dynamic viscosity are set to \mbox{$\rho_f = 1 \text{~kg/m$^3$}$} and \mbox{$\mu_f \equiv \eta_\infty = 0.1 \text{~Pa~s}$}, respectively. For the linear elastic solid we assign the density $\rho_s = 100 \text{~kg/m$^3$}$, the Young's modulus $E_s = 5~\text{kPa}$ and a Poisson's ratio of $\nu_s=0.3$. Additionally, we choose $a=0.005$ and $\omega=\pi$ to prescribe the piston motion and enforce Dirichlet conditions on all exterior boundaries of the solid domain, $\hGDs = \partial \hOs \setminus \hInt$. In terms of boundary conditions for the fluid, we prescribe Neumann conditions at $x_1=0$ and $x_1=L$ and Dirichlet conditions at $x_2=H/2$. Convergence rates are measured in the maximum $L^2$-error over all time steps $n=1,...,N_t$ for $\ve{u}_f$, $p_f$ and $\ve{d}_s$ defined as
\begin{align*}
	e_{\ve{u}_f} := \max_{n=1,...,N_t} \left\{ || \ve{u}_f-\ve{u}_f^h ||_{L^2(\Oft)} \right\}
	\, , \quad
	e_{p_f} := \max_{n=1,...,N_t} \left\{ || p_f-p_f^h ||_{L^2(\Oft)} \right\}
	\, , \quad
	e_{\ve{d}_s} := \max_{n=1,...,N_t} \left\{ || \ve{d}_s-\ve{d}_s^h ||_{L^2(\hat{\Omega}_s)} \right\}
	\, ,
\end{align*}
and compared to the estimated order of convergence ($eoc$) indicated by triangles in the plots. For the spatial discretisation, equal-order $Q_1/Q_1$ elements are employed, meaning that $d$-linear shape functions are used for velocities and displacements in both fluid and solid together with $d$-linear elements for the fluid viscosity $\mu_f$, fluid pressure $p_f$, the variable $\psi$ used for divergence suppression and corresponding traces for $\zeta$. For $t\in (0,0.5]$ we choose uniform time steps and the second-order scheme, i.e., using BDF-$2$, linear extrapolation and the generalised-$\alpha$ schemes with parameters set according to Table~\ref{tab:gen_alpha_alphas}. Regarding the coupling scheme, we focus first on the classical DN approach with Aitken's relaxation, implicitly coupling fluid and solid phase. To disentangle the various variants, we introduce them layer by layer and investigate thoroughly the consequences of each change, aiming for the most efficient overall scheme. Starting off, we compare different settings in the generalised-$\alpha$ time integrators. 

When refining the time step, expected convergence rates are observed in all primary variables when using the Newmark-$\beta$ scheme, as can be seen in Figure~\ref{fig:temp_Nb}. 
\begin{figure}[!htbp]
	\centering
	\includegraphics[width=0.7\textwidth]{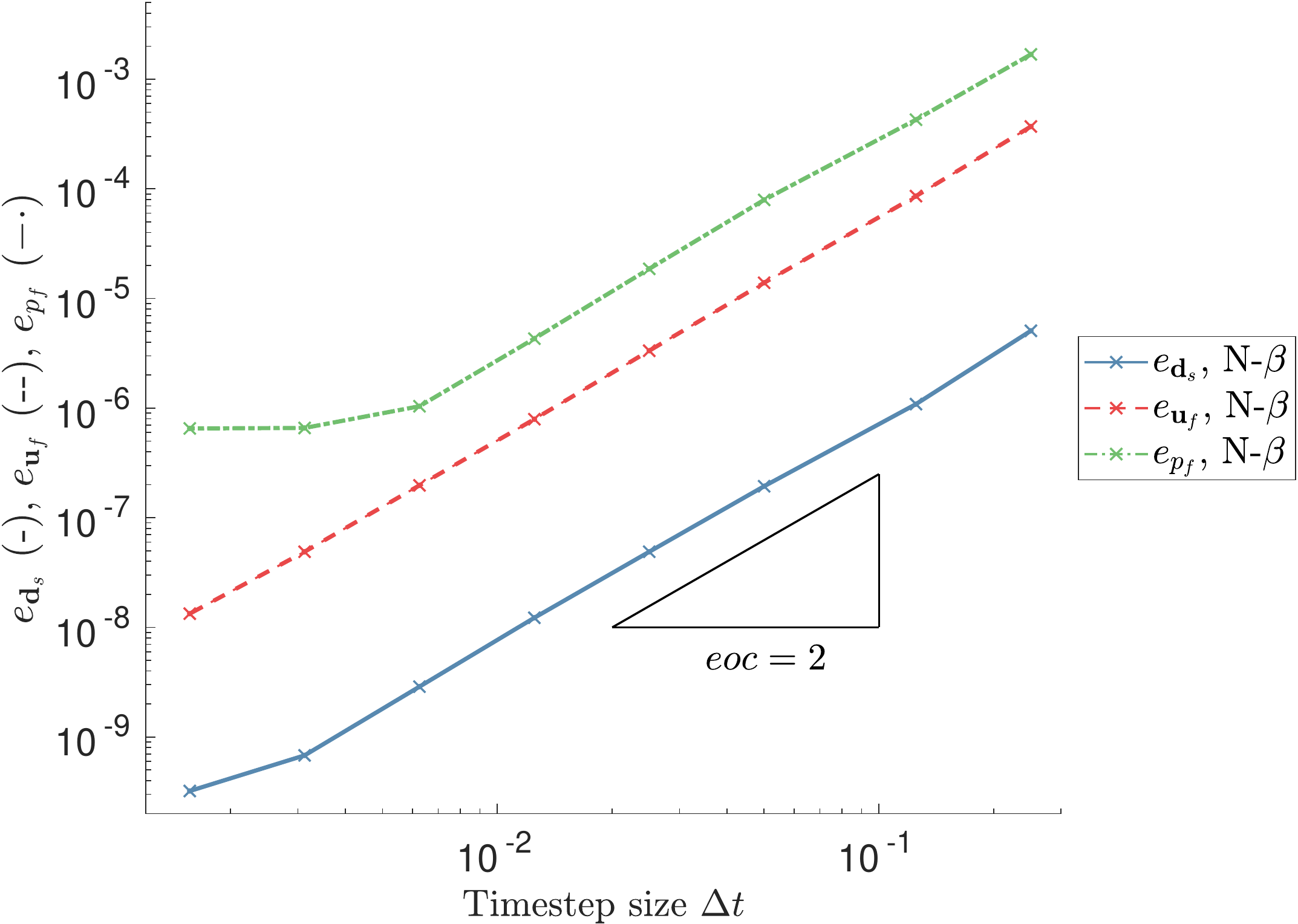}
	\caption{Newmark-$\beta$ time integration with implicit DN coupling yields the expected second-order convergence in time in solid displacements, fluid velocities and pressure.}
	\label{fig:temp_Nb}
\end{figure}
However, when introducing numerical high-frequency dissipation via the generalised-$\alpha$ time integration scheme with a spectral radius in the high frequency limit $\rho_\infty\neq 1$, an increase in the saturation error as exemplarily shown in Figure~\ref{fig:temp_GenA_0c98} for $\rho_\infty=0.98$ is observed. Choosing a practically relevant (user-specified) high-frequency dissipation/spectral radius in the high-frequency limit $\rho_\infty$, second-order convergence in velocities, displacements and pressure are maintained.
\begin{figure}[!htbp]
	\centering
	\includegraphics[width=0.7\textwidth]{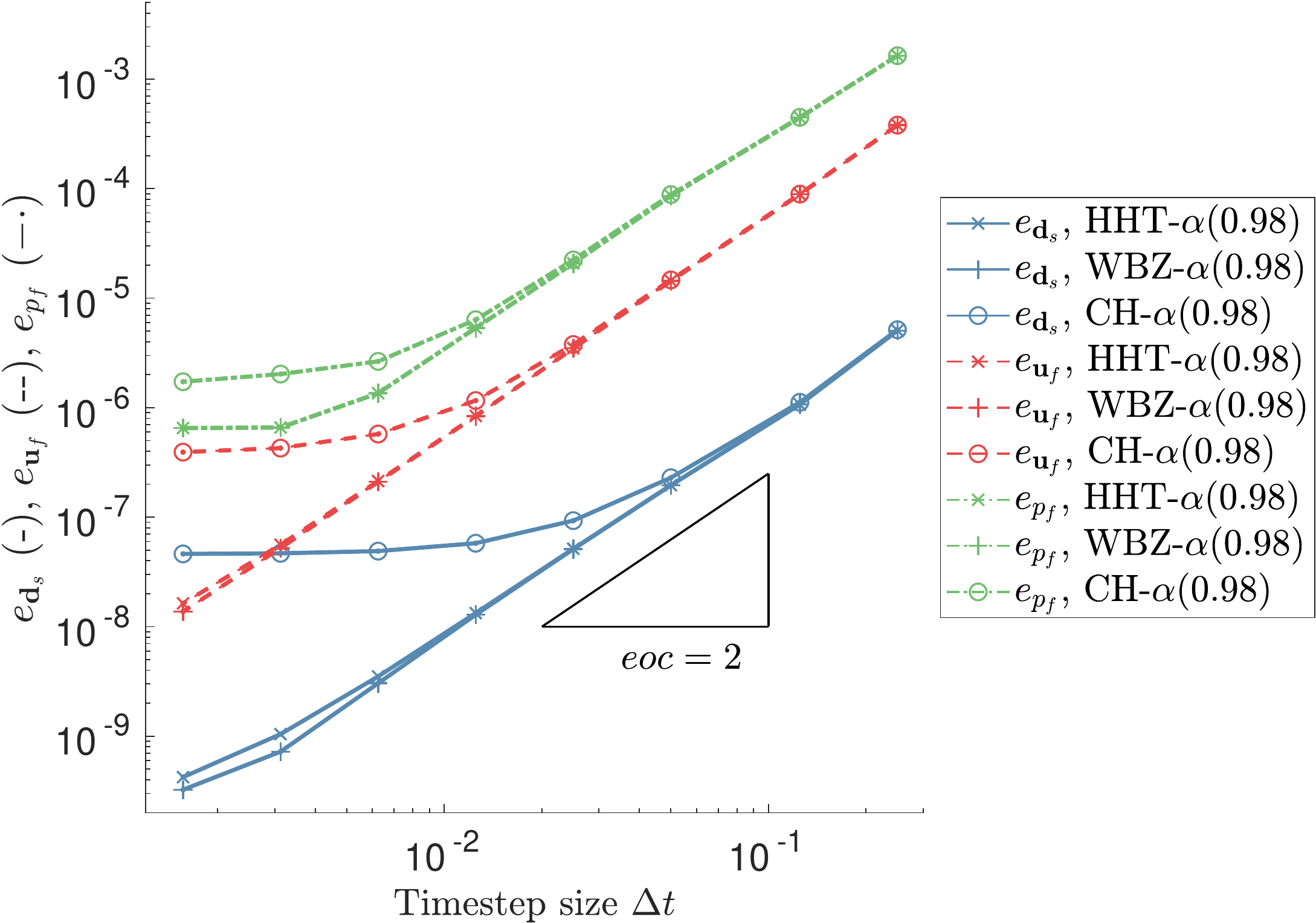}
	\caption{Generalised-$\alpha$ time integration with implicit DN coupling: Choosing a spectral radius in the high frequency limit $\rho_\infty=0.98$ results in an increased saturation error compared to $\rho_\infty=1.0$ or Newmark-$\beta$ time integration.}
	\label{fig:temp_GenA_0c98}
\end{figure}
A value of $\rho_\infty=1$ yields identical results to the N-$\beta$ scheme for the HHT-$\alpha$ and WBZ-$\alpha$, whereas the so-called asymptotic annihilation case, i.e., $\rho_\infty=0$, leads to a clearly linear convergence rate in $p_f$ for WBZ-$\alpha$ and CH-$\alpha$ methods as can be seen in Figure~\ref{fig:temp_GenA_special} (and is not admissible for the HHT-$\alpha$ scheme). Again, the earlier error saturation in $e_{\ve{d}_s}$ and $e_{\ve{u}_f}$ is most clearly observed for the CH-$\alpha$ scheme. The decreased convergence rate in the fluid pressure $p_f$ is linked to the term $\langle \varphi \ve{n}_f, \rho_f \ddot{\ve{d}}_s^{n+1}(\ve{d}_s^k)\rangle_{\Intt}$ appearing in the PPE~\eqref{eqn:fsi_ppe}, which is for the generalised-$\alpha$ scheme only first order accurate in time if $\alpha_f \neq \alpha_m$~\cite{Erlicher2002}.
\begin{figure}[!htbp]
	\centering
	\includegraphics[width=0.7\textwidth]{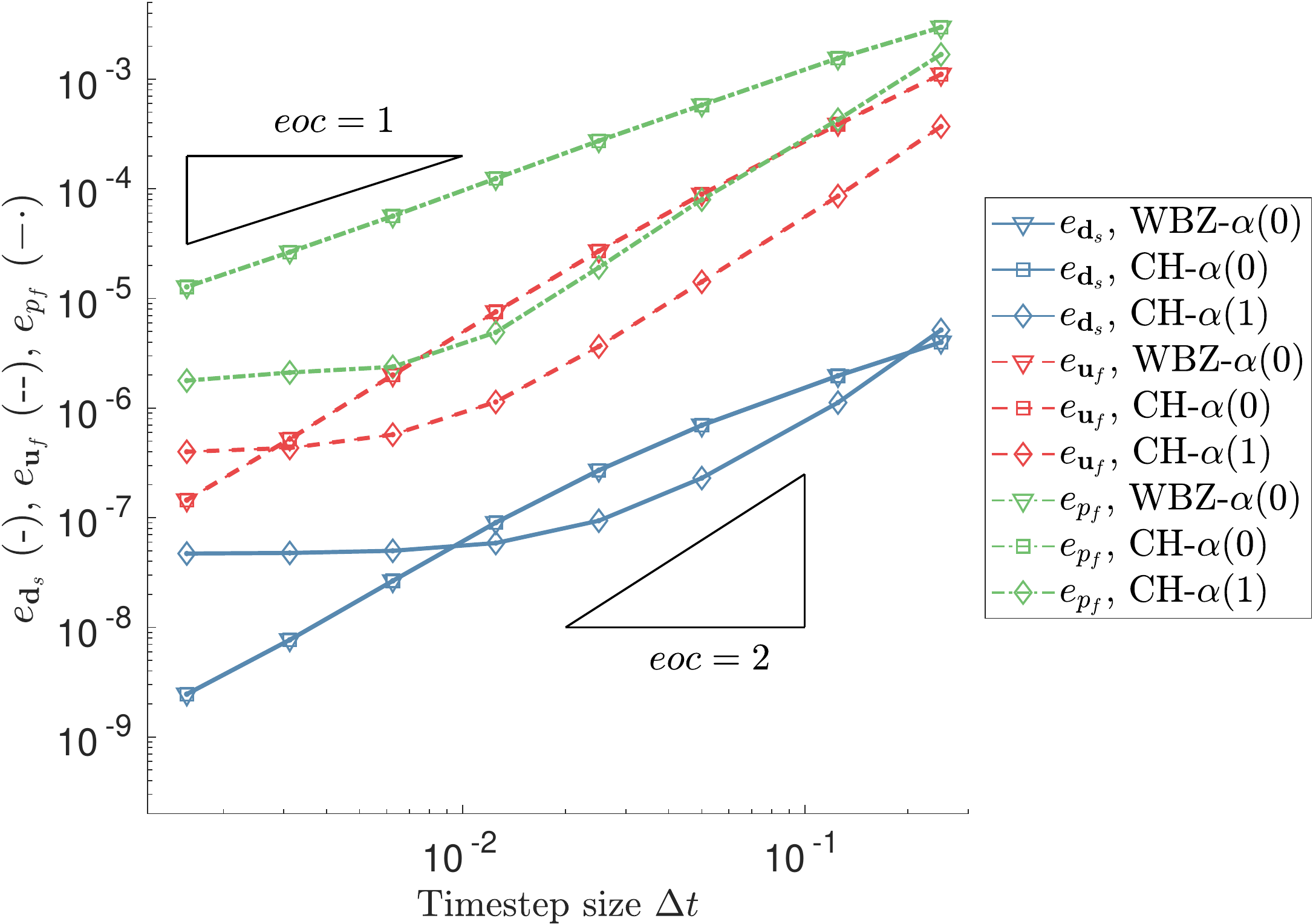}
	\caption{Generalised-$\alpha$ time integration with implicit DN coupling and various $\rho_\infty$: Increased algorithmic damping as $\rho_\infty \rightarrow 0$ results in an increased saturation error and reduces the temporal order of convergence of the fluid pressure from the optimal 2 down to 1.}
	\label{fig:temp_GenA_special}
\end{figure}
The fully implicit variant of the RR scheme yields similar results and is thus not further discussed at this point.

Regarding the semi-implicit Dirichlet--Neumann (SIDN) and Robin--Robin (SIRR) coupling schemes, we report the expected convergence rates in Figure~\ref{fig:temp_SI}, treating the mesh motion equation, fluid momentum equation, viscosity projection and divergence suppression explicitly. Additionally, taking the fully implicit N-$\beta$ scheme as a baseline, saturation errors in the fluid pressure increase for both semi-implicit variants. Using the RR coupling scheme, the saturation errors of the fluid pressure and velocity depend on the Robin parameters $\eta_s^R$ and $\eta_f^R$~\cite{Badia2008}
\begin{align}
	\eta_f^R = \frac{\rho_s H_s}{\Delta t} + \beta \Delta t
	\quad
	\text{and}
	\quad
	\eta_s^R = \frac{\rho_f}{\Delta t} \gamma \mu_{max}
	\,,
\label{eqn:robin_parameters}
\end{align}
where $H_s = 1$, $\beta = 0$ and $\gamma \mu_{max} = 0.01$ were chosen, the latter of which was found uncritical in this example. Considering the substantial increase in efficiency, the semi-implicit schemes are very attractive for practical applications as indicated at various places in the literature \cite{Fernandez2007,Astorino2010,Breuer2012,Quaini2007,Lozovskiy2015,He2015,Naseri2018}.
\begin{figure}[!htbp]
	\centering
	\includegraphics[width=0.7\textwidth]{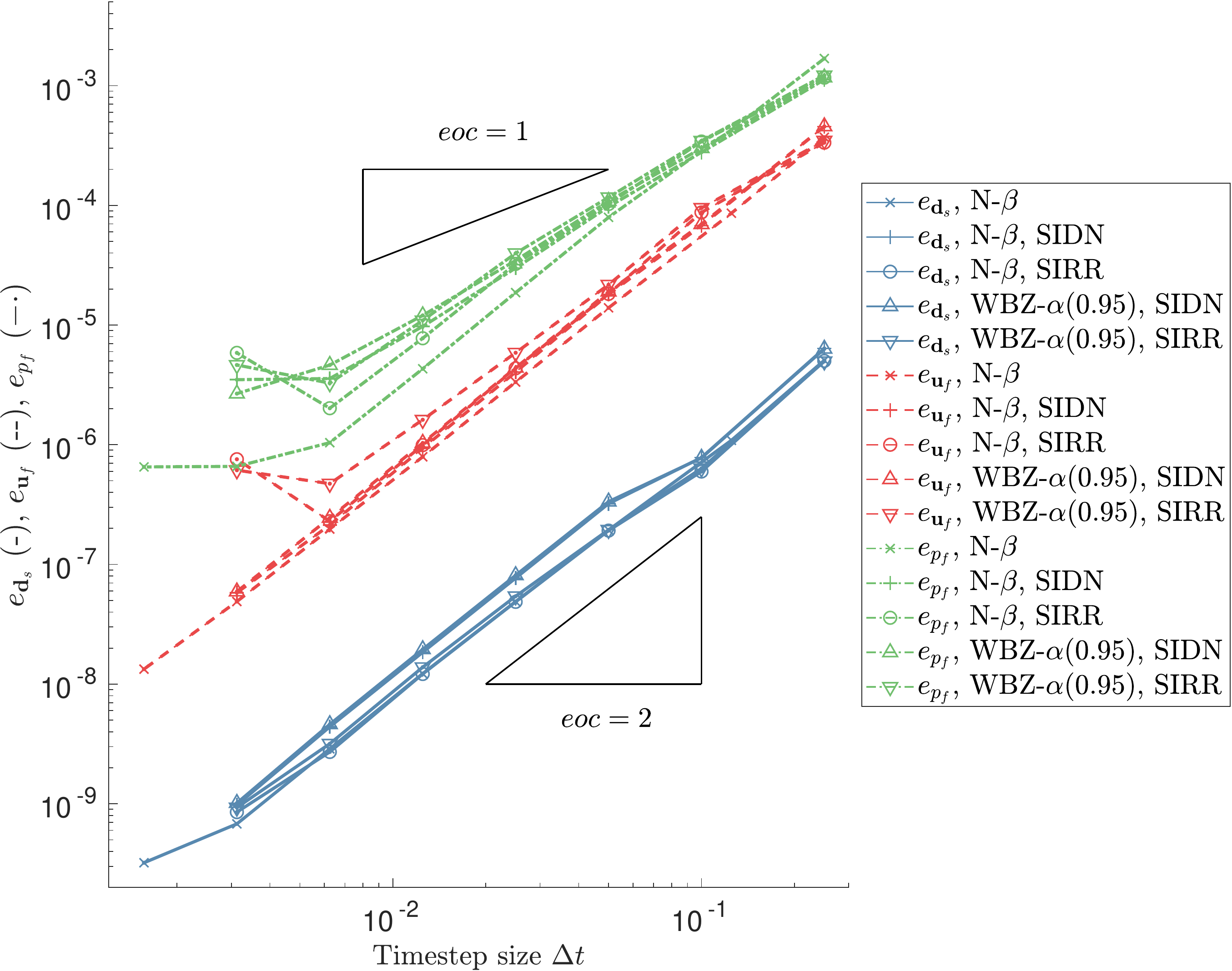}
	\caption{Newmark-$\beta$ and WBZ-$\alpha$ $(\rho_\infty=0.95)$ time integration: The semi-implicit SIDN and SIRR schemes yield almost identical errors compared to the fully implicit DN coupling independent of the time integration scheme applied.}
	\label{fig:temp_SI}
\end{figure}
\subsection{Analytical solution: circular piston}
Another analytical solution is employed to confirm the expected convergence rates in space. A circular piston as depicted in Figure~\ref{fig:domain_circ_piston}, where the structure occupies the region from $\hat{r} = 0$ to the interface at $\hat{r}_{\Sigma} = r_\Sigma (t=0)$ in the reference configuration, pulsates in radial direction, driving the fluid. \revised{The fluid adheres to the piston's periodic motion and since the time-dependent fluid domain ranging from $r\geq r_{\Sigma}(t)$ to $r = R$ changes in volume, the fluid freely exits and enters the computational domain due to incompressibility.} For a detailed derivation see \cite{Serino2019,Serino2019b}, whereas herein we conveniently express the solution in terms of the radial component of structure displacement
\begin{align*}
	\hat{\ve{d}}_{s,r} (\hat{r}, t)
	:= 
	\beta \mathcal{J}_1\left( \frac{\omega \hat{r}}{c_p}\right) \sin(\omega t)
	\,,
	\quad
	\text{such that }
	r_\Sigma (t) = r_\Sigma^0 + \hat{\ve{d}}_{s,r}(r_\Sigma^0,t)
\end{align*}
with frequency $\omega$, a parameter $\beta$ scaling the amplitude, initial piston radius $r_\Sigma^0$, $c_p$ as defined in Equation~\eqref{eqn:analyt_cp_def} and $\mathcal{J}_1$ denoting the Bessel function of the first kind and order one, which gives the fluid's radial velocity component
\begin{gather*}
	\ve{u}_{f,r}(r,t) = \frac{R}{r}V(t)
	\,,
	\quad
	\text{with}
	\quad
	V(t) = \frac{\omega \beta}{R} r_\Sigma (t) \mathcal{J}_1 \left(\frac{\omega r_\Sigma^0}{c_p}\right) \cos (\omega t)
	\,,
\end{gather*}
and the fluid pressure
\begin{align*}
	p_f(r,t) 
	=& 
	P(t) 
	+ 
	\frac{\rho_f}{2} \left[1-\left(\frac{R}{r}\right)^2\right] V(t)^2 
	+ 
	\rho_f R \log \left(\frac{R}{r}\right) \frac{\partial}{\partial t} V(t)
	\,,
	\\
	\text{with }
	P(t) 
	=& 
	-
	\frac{\rho_f}{2} \left[1-\left(\frac{R}{r_\Sigma (t)}\right)^2\right] V(t)^2
	-
	\rho_f R \log \left(\frac{R}{r_\Sigma(t)}\right) \frac{\partial}{\partial t} V(t)
	\\
	&
	- \beta \sin(\omega t) \left[
		(\lambda_s + 2\mu_s) \frac{\omega}{c_p} \mathcal{J}_1^\prime \left(\frac{\omega r_\Sigma^0}{c_p}\right) 
		+ 
		\frac{\lambda_s}{r_\Sigma^0} \mathcal{J}_1 \left(\frac{\omega r_\Sigma^0}{c_p}\right)
		\right]
	\,,
\end{align*}
where $\mathcal{J}_1^\prime$ denotes the first derivative of $\mathcal{J}_1$. The piston with initial radius $r_\Sigma^0=0.5$ pulsates with frequency $\omega = \pi$ and $\beta = 0.1$ is set. Fluid's density and dynamic viscosity are considered as $\rho_f=1\text{~g/m$^3$}$ and $\mu_f = \eta_\infty = 0.5\text{~mPa~s}$, respectively. The linear-elastic solid has a density of $\rho_s = 1\text{~kg/m$^3$}$, a Young's modulus of $E_s=100\text{~kPa}$ and a Poisson's ratio of $\nu_s=0.3$. To minimise the influence of time integration error, the time interval from $t=0$ to $t=0.05~\text{s}$ is divided into $80$ equal time steps of constant length $\Delta t = 0.625~\text{ms}$ and integrated using second-order time-stepping schemes and extrapolation.

The experimental orders of convergence are reported in Figure~\ref{fig:space_all}, allowing for a direct comparison of the errors obtained with $Q_2/Q_1$ and $Q_1/Q_1$ elements and various coupling schemes, where the Robin parameters are again set according to~\eqref{eqn:robin_parameters}. The deliberately chosen parameters result in directly recovering the exact solution of $\ve{d}_s$ up to the specified tolerance/time integration error, but allow easily measuring experimental convergence rates of fluid velocity and pressure. The fluid velocities converge at a rate of $eoc=2$ for the fully implicit and semi-implicit schemes with almost identical errors obtained. This is exactly as expected, even for the $Q_2/Q_1$ finite element pairing (cf. \cite{Liu2009,Johnston2004,Pacheco2021b}).
Pressure rates of order $1$ are observed for linear pressure interpolation and a slightly higher rate of $eoc\approx1.5$ when using $Q_2/Q_1$ interpolation. These results are optimal for the considered split-step scheme in the fluid pressure, where one might expect convergence rates of order $1$ using (bi-)linear elements in the observed norm. However, the measured rates of $\approx 1.5$ for the $Q_2/Q_1$ pair are lower than what one would hope for, judging from the basic split-step scheme, which gives rates of $2$ in the pressure norm observed here. These higher rates are found initially, but decrease under refinement. Additionally, the solid subproblem yielded optimal convergence rates when tested using different parameter settings, which is omitted for the sake of brevity. The sole reason for not choosing those different parameter settings altogether is the inherent difficulty in demonstrating all convergence rates at the same time, which was not found possible here due to the problem setup and relative sizes of physical quantities and tolerance choices.
\begin{figure}[!htbp]
	\centering
	\includegraphics[width=0.7\textwidth]{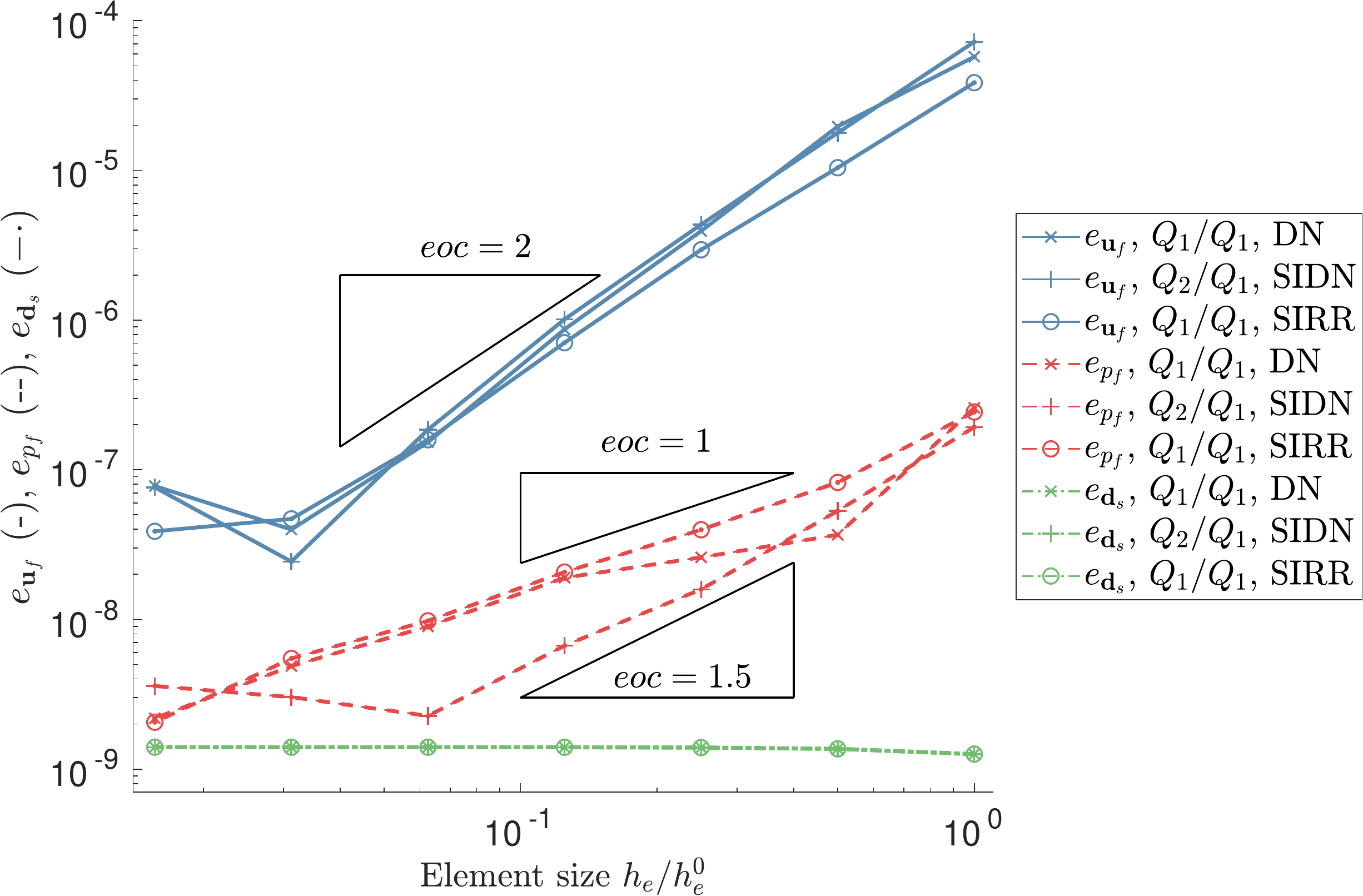}
	\caption{Spatial convergence rates obtained using $Q_1/Q_1$ or $Q_2/Q_1$ interpolation in combination with the DN coupling scheme or semi-implicit variants SIDN and SIRR.}
	\label{fig:space_all}
\end{figure}

\subsection{Pressure pulse benchmark}
\label{sec:examples_pressure_pulse}

To investigate the influence of different material models on the coupling algorithm, we consider a variant of the well-established benchmark example of a pressure pulse traveling through a straight flexible tube as used in \cite{Janela2010} based on \cite{Formaggia2001,Gerbeau2003}. The parameter choice is inspired by hemodynamic applications, fixing the tube length of $l=5~\text{cm}$ and inner radius of $r_\text{i}=0.5~\text{cm}$. The surrounding solid is divided into two layers $\hat{\Omega}_{s,1}$, $\hat{\Omega}_{s,2}$ of equal thickness $h_s = 0.05~\text{cm}$ indicated by different colors in Figure~\ref{fig:domain_tube_layers}, resulting in an outer radius of $r_\text{o}=0.6~\text{cm}$. 

\begin{figure}[!htbp]
	\begin{minipage}{.48\linewidth}
		\centering
		\subfloat[Fluid mesh with boundary layers and cut solid domains resolving medial and adventitial tissue layers.]{
			\label{fig:domain_tube_layers}
			\includegraphics[width=0.97\textwidth]{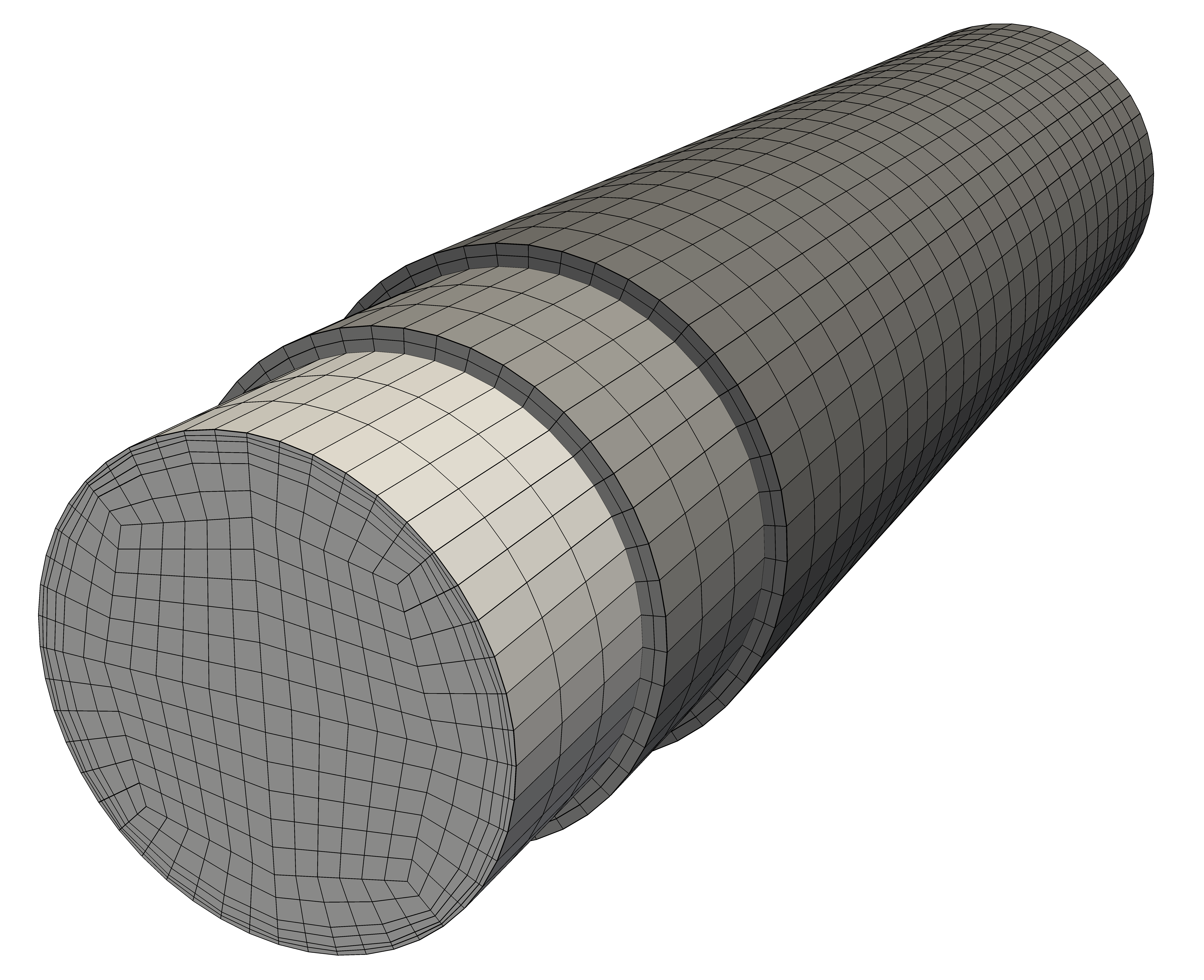}
			\put(-140,160){{$\hat{\Omega}_{s,2}$}}
			\put(-180,135){{$\hat{\Omega}_{s,1}$}}
			\put(-220,110){{$\hOf$}}
			%\put(-10,72.5){$\hInt$}
		}
	\end{minipage}%
	\hfil
	\begin{minipage}{.48\linewidth}
		\centering
		\subfloat[Radial (blue) and longitudinal (yellow) orientation vectors in cut solid domains used to construct $\hat{\ve{e}}_1$ and $\hat{\ve{e}}_2$.]{
			\label{fig:domain_tube_ori}
			\includegraphics[width=0.97\textwidth]{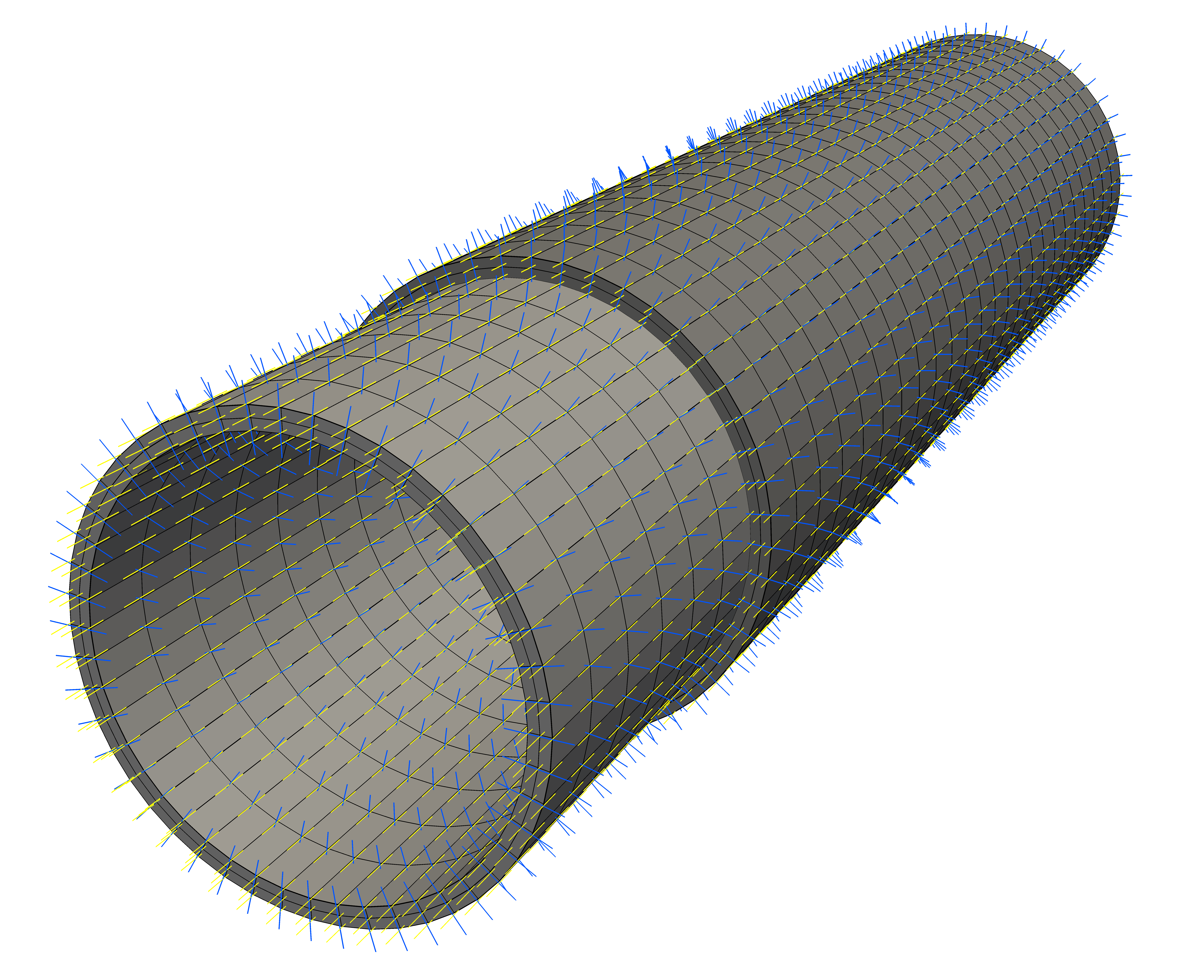}
			%\put(-70,65){{$\hOs$}}
			%\put(-5,105){{$\hOf$}}
			%\put(-40,75){$\hInt$}
		}
	\end{minipage}
	\caption{Finite element mesh considered for the pressure pulse benchmark. }
	\label{fig:domain_tube}
\end{figure}

A pressure pulse is generated setting $\te{\sigma}_f \ve{n}_f = -p_\mathrm{in} \ve{n}_f$ at the inlet, with \mbox{$p_\mathrm{in} = 10~\text{mmHg}\approx1333.22~\text{Pa}$} for the first $9~\text{ms}$ and zero otherwise. On the outflow boundary, zero Neumann conditions are enforced. The tube is fixed at both ends, and zero traction conditions are prescribed at the solid's external boundary at $r=r_o$. Effects of external tissue support and downstream vasculature are neglected at this point, but can be introduced easily using lumped parameter models \cite{Formaggia2001,ARBIA2016,Moireau2012,CROSETTO2011,REYMOND2013,EsmailyMoghadam2011,Bertoglio2013}. Concerning the material parameters, we set the fluid and solid densities to $\rho_s = 1200~\text{kg/m$^3$}$ and $\rho_f = 1060~\text{kg/m$^3$}$, which triggers strong added-mass effects. Further, we consider a Newtonian fluid with a viscosity of \mbox{$\mu_f = 3.5~\text{mPa~s}$}, which needs to be suitably chosen (see, e.g., \cite{Cho1991,Janela2010,Ranftl2021}) for comparison to the more general Carreau model with $\eta_0=56~\text{mPa~s}$, $\eta_\infty=3.45~\text{mPa~s}$, $\lambda_f=3.313~\text{s}$ and \mbox{$n=0.3568$} taken from \cite{Kim2000}.
For the solid phase, linear elasticity~\eqref{eqn:solid_const_rel_linelast} or a St.~Venant--Kirchhoff solid~\eqref{eqn:solid_const_rel_StVenant}, both with Young's modulus $E_s = 300~\text{kPa}$ and Poisson ratio $\nu_s = 0.3$ are compared to layered models of neo-Hookean material with and without additional fiber reinforcement (Equations~\eqref{eqn:solid_const_rel_qiNH} and~\eqref{eqn:solid_const_rel_qiHGO2006}). The latter two choices use $\nu_s=0.499$ and shear rates of $\mu_{s,1} = 62.1~\text{kPa}$ and $\mu_{s,2} = 21.6~\text{kPa}$ for the inner and outer layers, respectively. \revised{The fiber orientation is computed from radial and longitudinal orientation vectors as depicted in Figure~\ref{fig:domain_tube_ori}. These local systems are constructed by solving two auxiliary scalar Laplace problems with suitable boundary conditions in the solid domain as discussed in Section \ref{sec:subproblem_solid}. The two fiber families are oriented relative to the circumferential orientation vectors, rotated by $\pm \alpha_c$ into the longitudinal direction.} Fiber parameters are chosen as
$k_1=1.4~\text{kPa}$, $k_2=22.1$ and $\kappa_{c,1}=0.12$, $\alpha_{c,1}=27.47^\circ$ or $\kappa_{c,2}=0.25$, $\alpha_{c,2}=52.88^\circ$ in inner and outer layers according to \cite{Rolf-Pissarczyk2021,Weisbecker2012,Schussnig2021PAMMb}. 

We use a uniform time step of $\Delta t=0.5~\text{ms}$ in the second-order accurate scheme, i.e., BDF-2 and linear extrapolation as initial guess or possible linearisation. The WBZ$-\alpha$ and CH$-\alpha$ time integrators with $\rho_\infty=0$ are selected to counteract pressure oscillations in time caused by the jump in the Neumann condition and simultaneously give the lowest iteration counts. Aitken's relaxation is initiated with $\omega_0 = 0.01$, coupling the subproblems until reaching $\epsilon_{abs}=10^{-7}$ or $\epsilon_{rel} = 10^{-4}$, while a relative Newton tolerance of $\epsilon_N = 10^{-3}$ in~\eqref{eqn:newton_rel_tol} was found sufficient. 

Snapshots of the travelling pulse are shown in Figure~\ref{fig:pipe_pulse_snapshots}, where the HGO and Carreau models were employed. As opposed to the traditional benchmark, the hemodynamic-inspired setting features larger displacements given the lower material stiffness, as can be seen in Figure~\ref{fig:pipe_compare_const_d}. However, despite the maximum displacement being $\approx30$~\% of the tube's thickness (and also only $\approx0.6$\% of the tube's length), strains are small enough for linear elasticity~(E) and St.~Venant--Kirchhoff~(SVK) solid to yield almost identical values for both the observed quantities (see Figures~\ref{fig:pipe_compare_const_d}~and~\ref{fig:pipe_compare_const_p}). The fiber contribution present in the HGO model is already visible, but not dominant given the fiber parameters and small strains. Comparing the different rheological laws used, one can see large variations in viscosity at any point in time (see Figure~\ref{fig:pipe_pulse_snapshots}), but negligible effects on displacements and pressure.
Additionally, we observe ``backflow'' instabilities as a consequence of fluid entering over the inlet Neumann boundary due to mass conservation. Despite those oscillations close to the inlet, the solver still converges. Possible remedies to stabilise such effects are backflow stabilisation or similar techniques \cite{ARBIA2016,Vignon-Clementel2010,EsmailyMoghadam2011,Formaggia2001,Formaggia2003,Attaran2018}, which are not considered at this point.

\begin{figure}[!htbp]
	\begin{minipage}{.48\linewidth}
		\centering
		\subfloat[$\ve{d}_s$ and $\ve{u}_f$ at $t=15~\text{ms}$.]{
			\label{fig:pipe_d_u_step14}
			\includegraphics[width=0.97\textwidth]{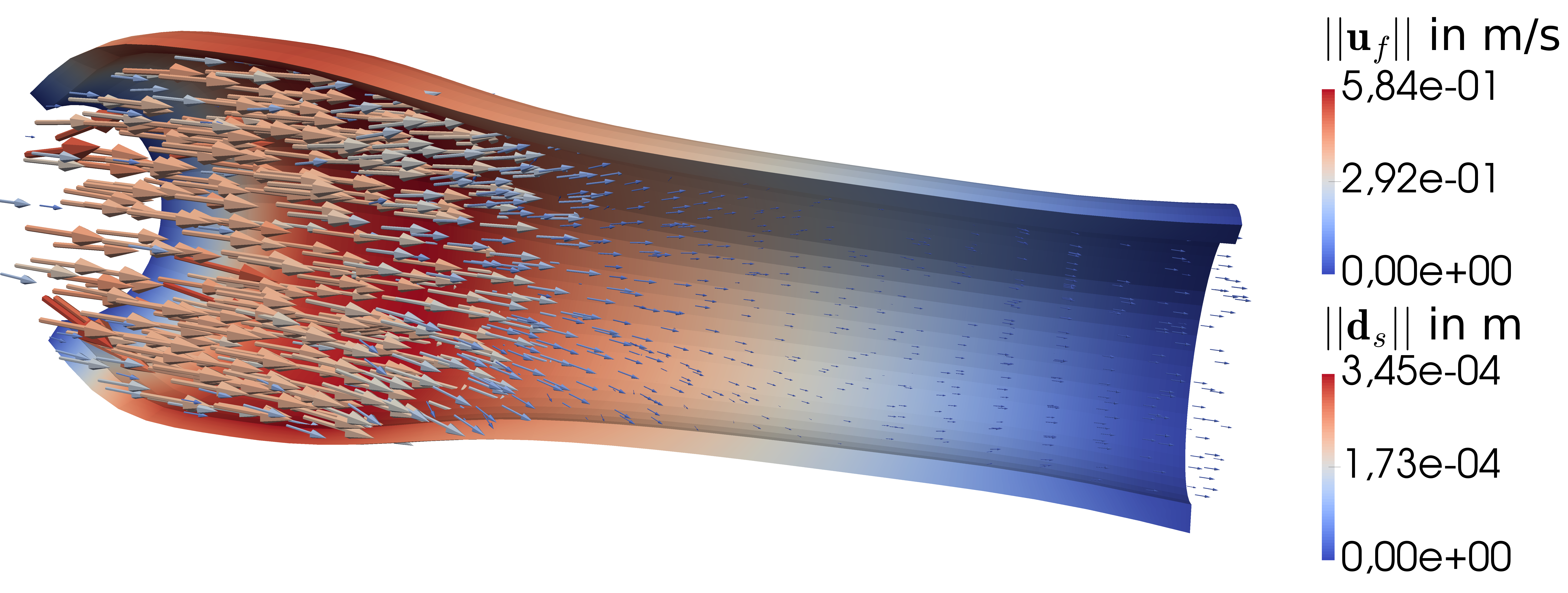}
			%\put(-10,72.5){$\hInt$}
		}
	\end{minipage}%
	\hfil
	\begin{minipage}{.48\linewidth}
		\centering
		\subfloat[$p_f$ (top) and $\mu_f$ (bottom) at $t=15~\text{ms}$.]{
			\label{fig:pipe_p_mu_step14}
			\includegraphics[width=0.97\textwidth]{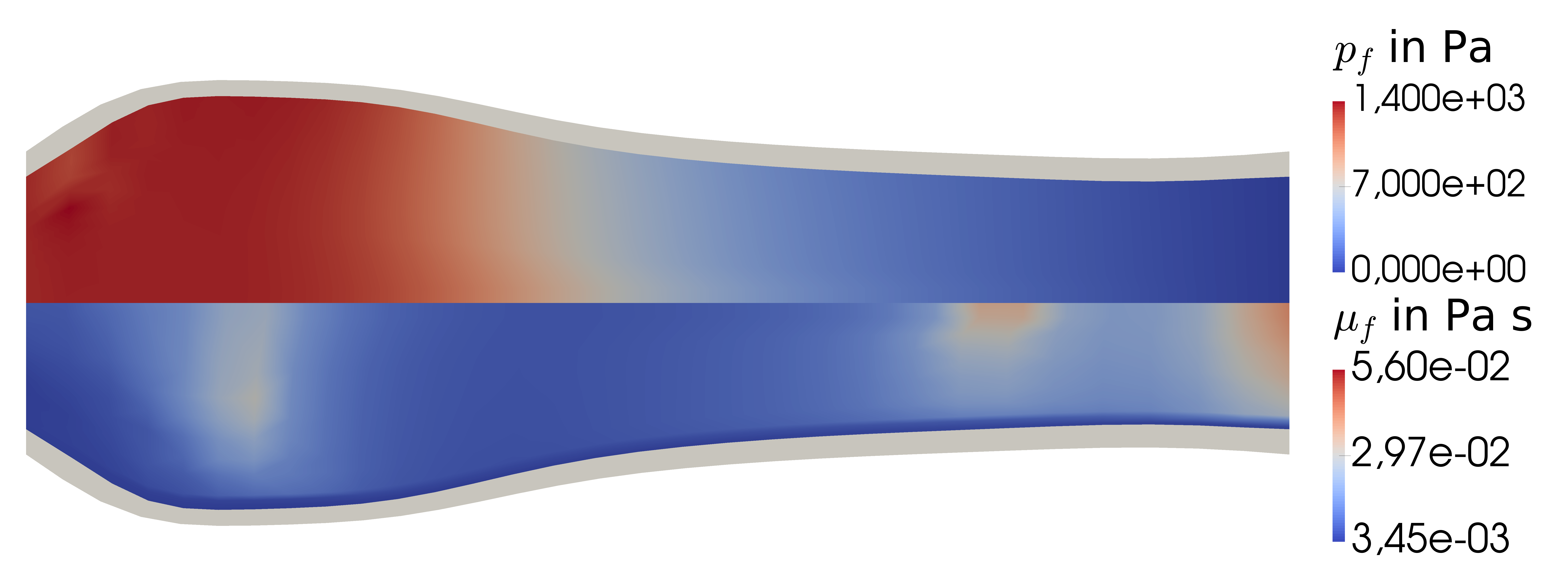}
			%\put(-40,75){$\hInt$}
		}
	\end{minipage}
	\\
	\begin{minipage}{.48\linewidth}
		\centering
		\subfloat[$\ve{d}_s$ and $\ve{u}_f$ at $t=22~\text{ms}$.]{
			\label{fig:pipe_d_u_step21}
			\includegraphics[width=0.97\textwidth]{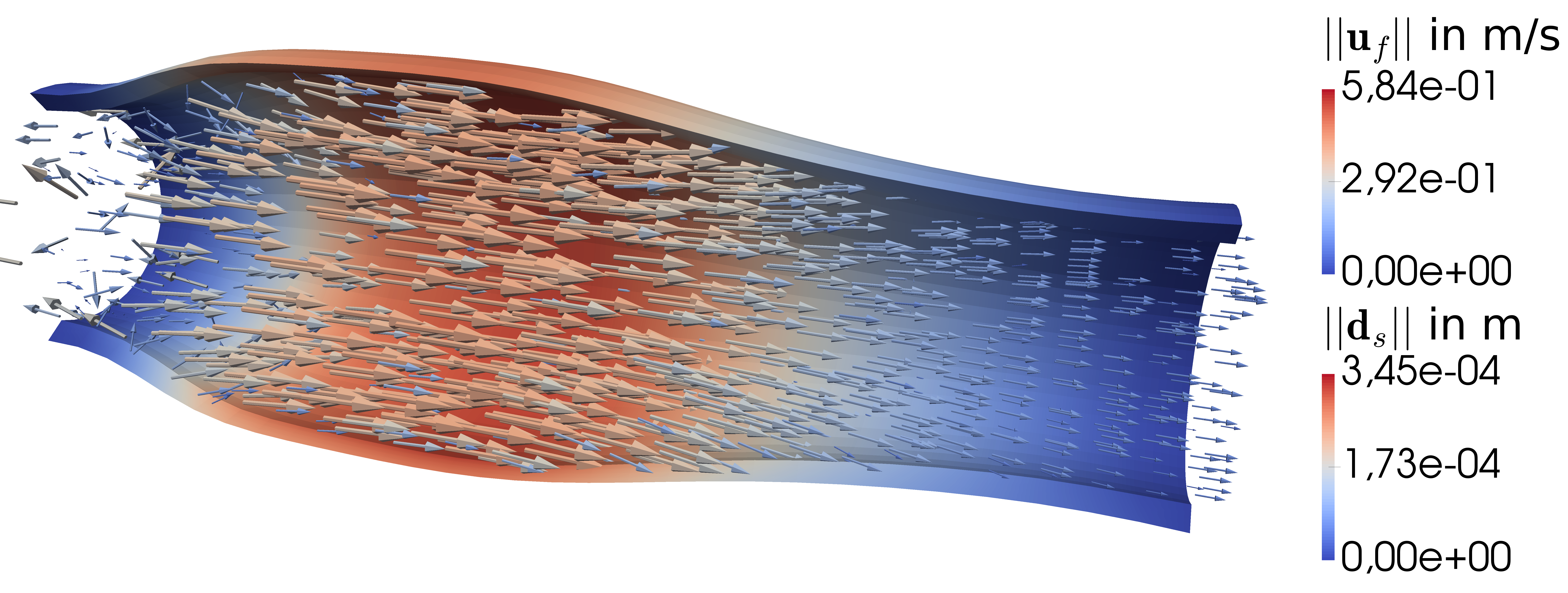}
			%\put(-10,72.5){$\hInt$}
		}
	\end{minipage}%
	\hfil
	\begin{minipage}{.48\linewidth}
		\centering
		\subfloat[$p_f$ (top) and $\mu_f$ (bottom) at $t=22~\text{ms}$.]{
			\label{fig:pipe_p_mu_step21}
			\includegraphics[width=0.97\textwidth]{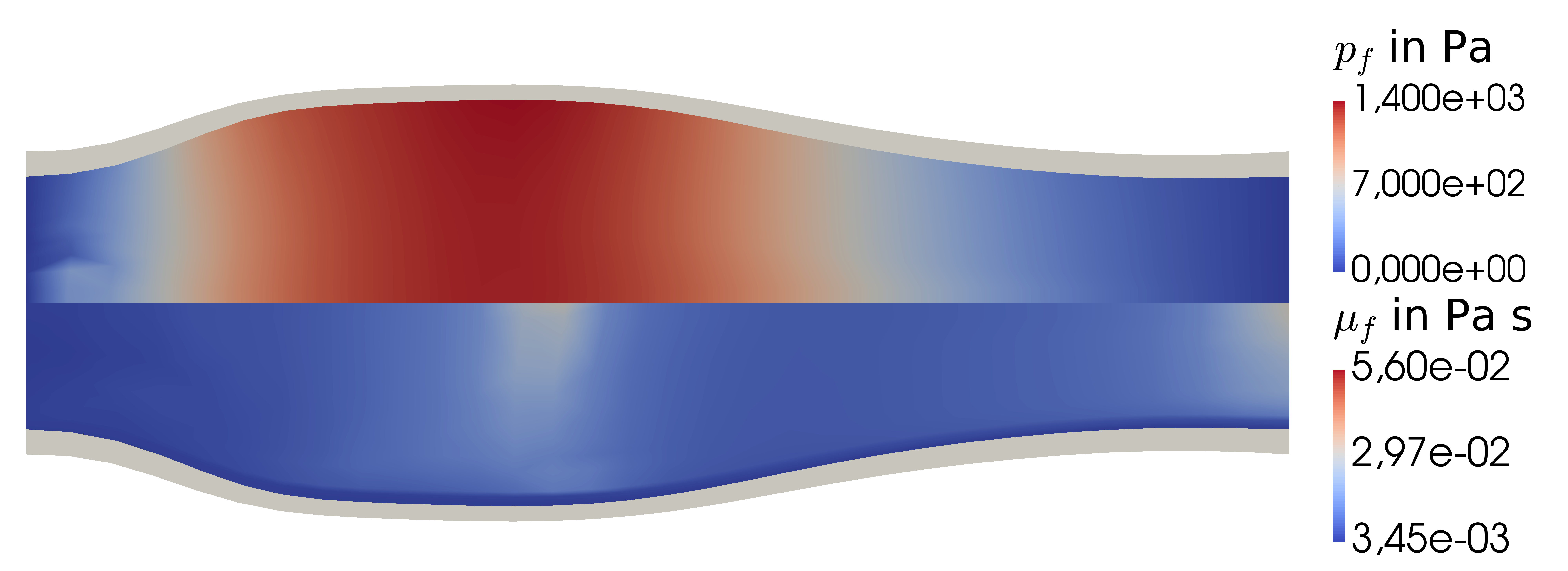}
			%\put(-40,75){$\hInt$}
		}
	\end{minipage}
	\\
	\begin{minipage}{.48\linewidth}
		\centering
		\subfloat[$\ve{d}_s$ and $\ve{u}_f$ at $t=29~\text{ms}$.]{
			\label{fig:pipe_d_u_step28}
			\includegraphics[width=0.97\textwidth]{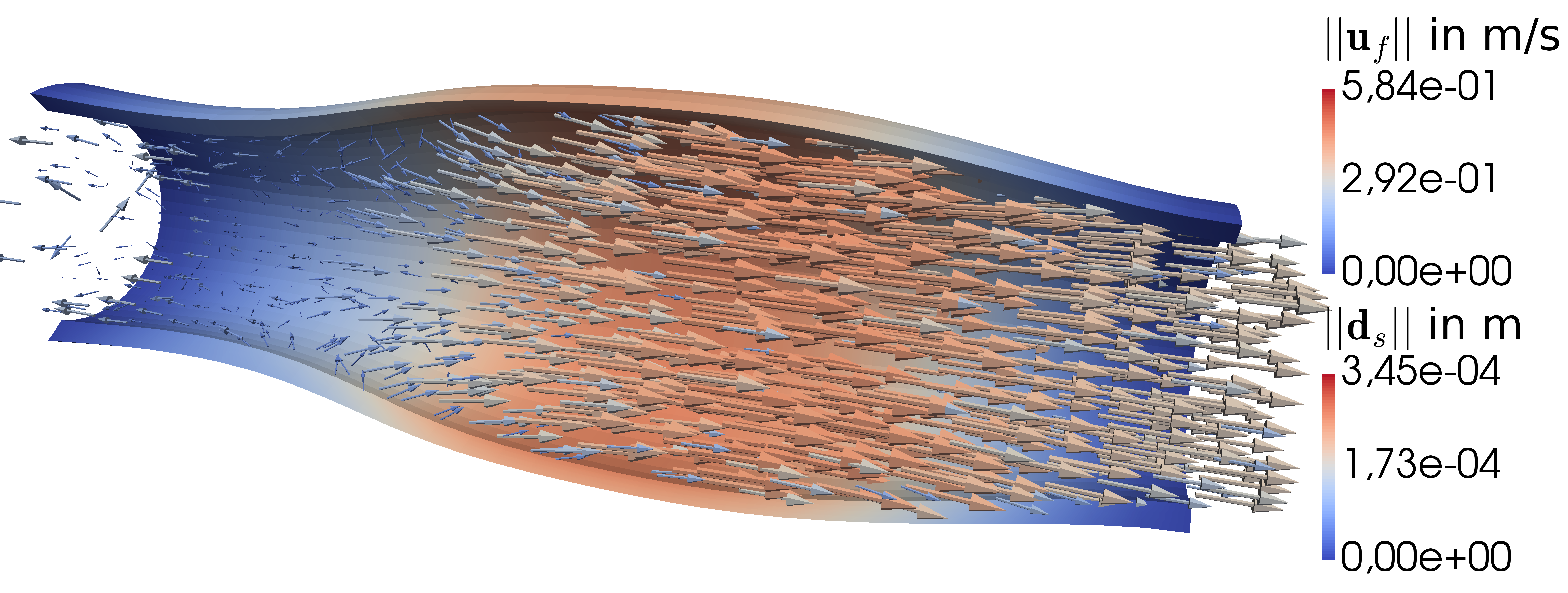}
			%\put(-10,72.5){$\hInt$}
		}
	\end{minipage}%
	\hfil
	\begin{minipage}{.48\linewidth}
		\centering
		\subfloat[$p_f$ (top) and $\mu_f$ (bottom) at $t=29~\text{ms}$.]{
			\label{fig:pipe_p_mu_step28}
			\includegraphics[width=0.97\textwidth]{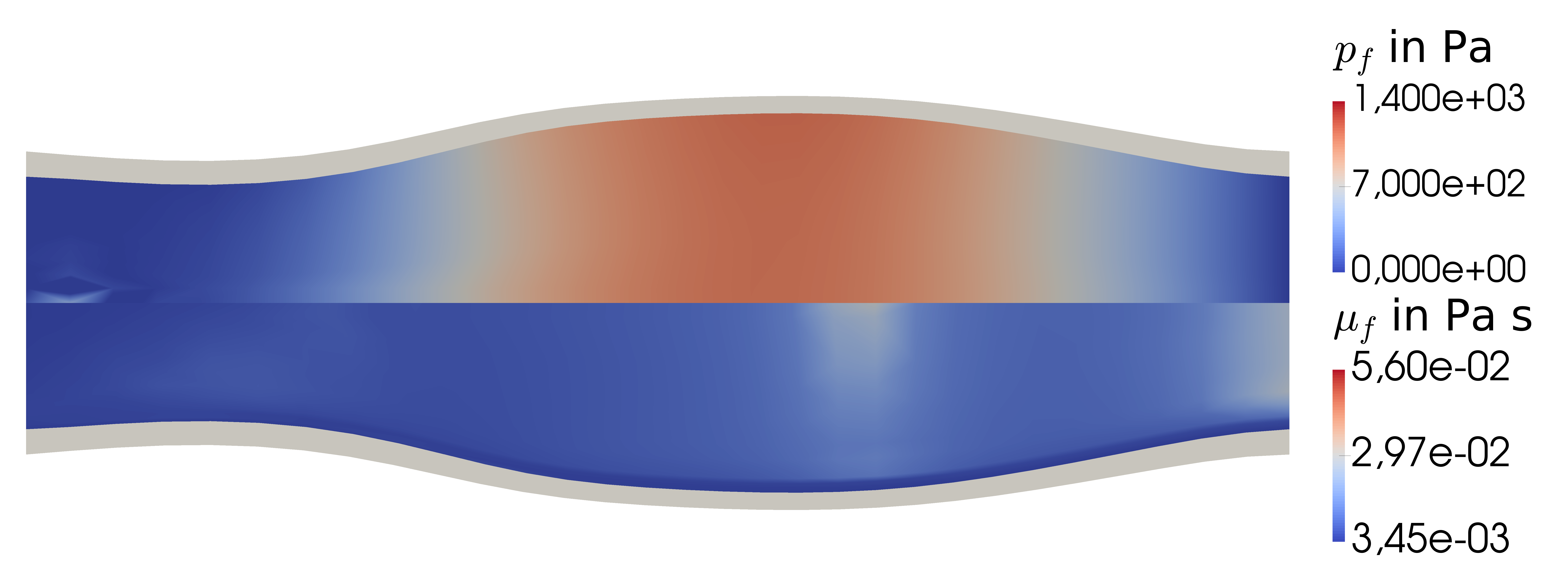}
			%\put(-40,75){$\hInt$}
		}
	\end{minipage}
	\caption{Snapshots at time $t=15,22,29~\text{ms}$ of the pressure pulse in the Carreau fluid traveling the tube of HGO material (deformation scaled by $10$): Solid displacement $\ve{d}_s$ and fluid velocity $\ve{u}_f$ in cut tube (left), pressure $p_f$ and viscosity $\mu_f$ in slice at $x_2=0$ (right).}
	\label{fig:pipe_pulse_snapshots}
\end{figure}
\begin{figure}[!htbp]
	\begin{minipage}{.48\linewidth}
		\centering
		\subfloat[Displacement comparison.]{
			\label{fig:pipe_compare_const_d}
			\includegraphics[width=0.97\textwidth]{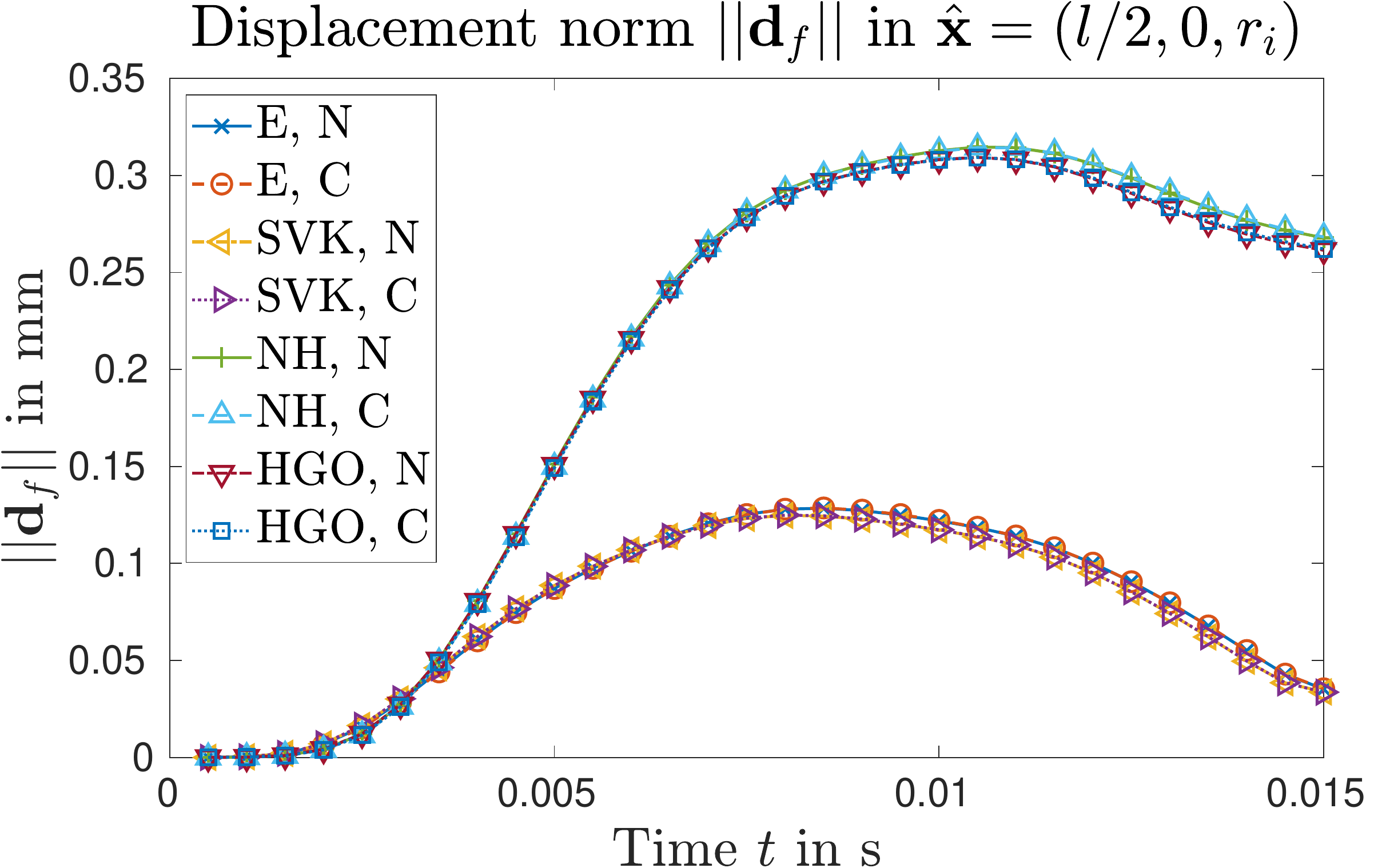}
			%\put(-10,72.5){$\hInt$}
		}
	\end{minipage}%
	\hfil
	\begin{minipage}{.48\linewidth}
		\centering
		\subfloat[Pressure comparison.]{
			\label{fig:pipe_compare_const_p}
			\includegraphics[width=0.97\textwidth]{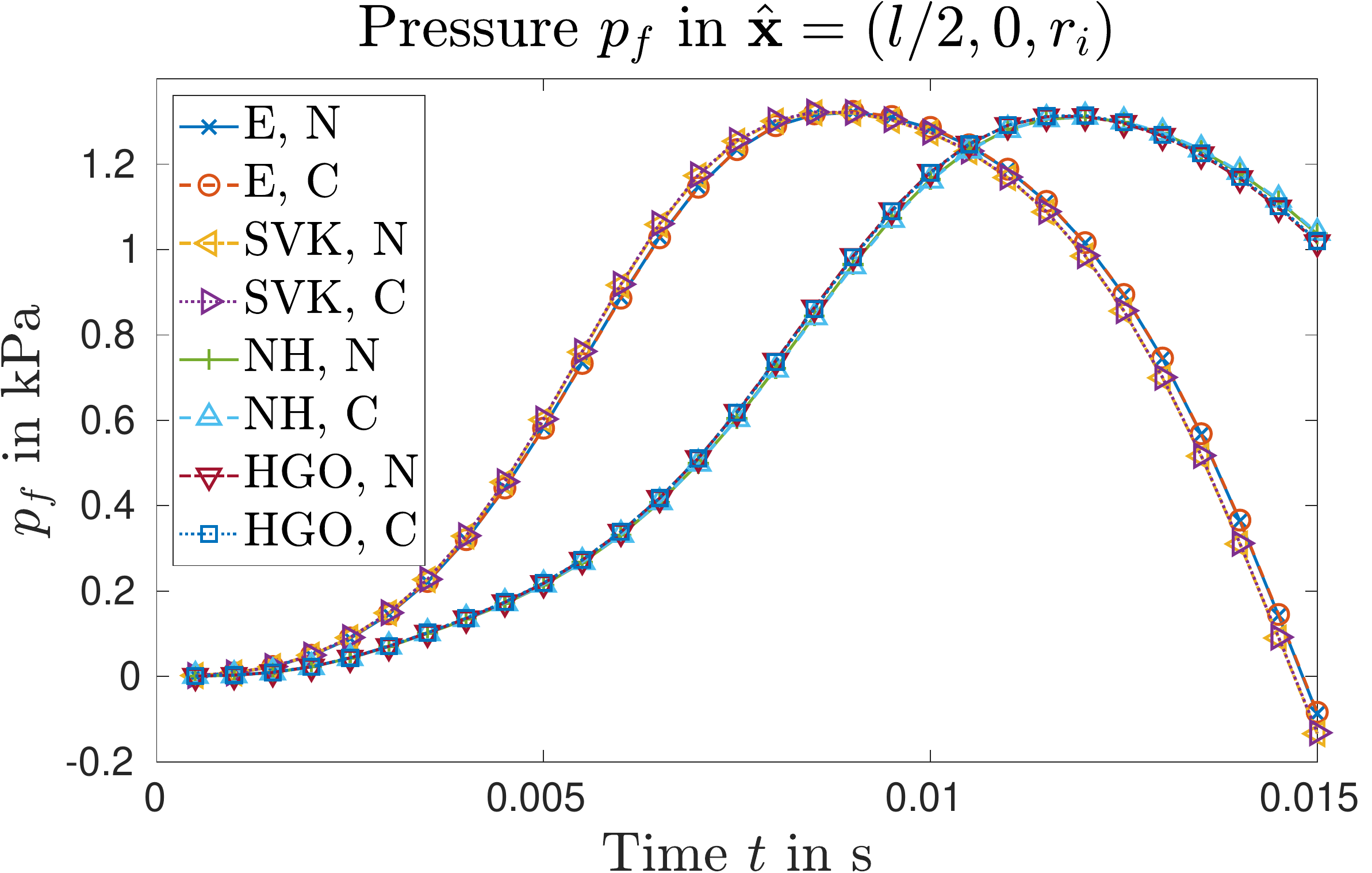}
			%\put(-40,75){$\hInt$}
		}
	\end{minipage}
	\caption{Pressure-pulse benchmark with linear elasticity~(E), St.~Venant--Kirchhoff~(SVK), neo-Hookean~(NH) or Holzapfel--Gasser--Ogden~(HGO) material models for the solid phase and Newtonian~(N) or Carreau~(C) fluids.}
	\label{fig:pipe_compare_const}
\end{figure}

Let us note here that the presented solutions can merely indicate that suitable parameters and material laws themselves are central aspects in the description of the system behavior. This, however, does not lie within the scope of this contribution, since the aim here is simply to demonstrate versatility by easily switching between constitutive equations of particular interest depending on the available data or application. Within this work, we present this problem setup to assess computational performance of the various coupling schemes presented. From now on, we only consider linear elasticity with a Newtonian fluid and the HGO model together with the Carreau law. 

Inspecting the accumulated FSI iterations depicted in Figure~\ref{fig:pipe_compare_iters}, we see that the total iteration counts are decreasing with the semi-implicit schemes. Interestingly, treating the fluid mesh motion and momentum balance equations explicitly does not only reduce the computational cost of one individual coupling step, but also reduces the number of total iterations. This effect is independent of the material laws applied, potentially boosting computational performance depending on the fluid and solid subproblem sizes. But on the flip side of the coin, we rather surprisingly see that the Robin--Robin variants of the scheme do not improve convergence in contrast to the results presented in~\cite{Astorino2010,Badia2008}. In those papers, the Robin condition on the fluid phase was found to substantially decrease the number of needed coupling steps, which does not seem to transfer to the present split-step approach. To diminish a possible influence of the Robin parameter, tests are conducted in the linear elasticity/Newtonian fluid case varying the Robin parameters in the fluid momentum balance and PPE. The latter option modifies Equations~\eqref{eqn:fsi_press_proj}~and~\eqref{eqn:fsi_ppe} to include Robin interface conditions as well, but was found diverging for any $\eta_f^R$. Scaling the Robin parameter $\eta_f^R$ computed via Equation~\eqref{eqn:robin_parameters} in the fluid momentum balance by some $\alpha_R$, iteration counts stay almost identical for $\alpha_R$ large enough, as can be seen in Figure~\ref{fig:pressure_pulse_IDN_linE_Newtonian_Robin_vary}. Choosing $\alpha_R\rightarrow0$ and $\alpha_R\rightarrow\infty$ correspond to Neumann--Neumann and the standard Dirichlet--Neumann coupling schemes, first of which needs more than the preset maximum of 150 steps to converge leading to divergence in the fifth time step.

The Robin parameter $\eta_s^R$ in the solid's balance of linear momentum has two distinct interpretations depending on the coupling scheme applied: in implicit schemes, a standard Robin--Robin coupling is recovered, possibly mildly increasing efficiency as reported by~\citet{Badia2008}. In our semi-implicit scheme, however, it is recalled that the interface Robin term is responsible for information transfer from fluid to solid phase (with $n$ indicating the time step, $k$ the iterate in the FSI coupling scheme and $l$ the iterate in Newton's method) as
\begin{align*}
	\alpha_f' \langle \ve{\varphi} , \te{P}(\ve{d}_s^{n+1}) \hns \rangle_{\hInt}
	= &
	\alpha_f'
	\langle \ve{\varphi},
	\ve{h}_s^{n+1} - \eta_s^R \dot{\ve{d}}_s^{n+1}(\ve{d}_s^l)
	\rangle_{\hInt}
	\\
	= &
	\alpha_f'
	\langle \ve{\varphi},
	\eta_s^R \ve{u}_f^k + J_f^k \te{\sigma}(\ve{u}_f^k,p_f^{k+1},\mu_f^{k+1}) \te{F}_{\! f}^{-T}\hns - \eta_s^R \dot{\ve{d}}_s^{n+1}(\ve{d}_s^l)
	\rangle_{\hInt}
	,
\end{align*} 
which as $\eta_s^R\rightarrow0$ reduces to a Neumann condition on the interface, but as $\eta_s^R\rightarrow\infty$ smoothly transitions to an explicit coupling scheme based on the extrapolated velocity $\ve{u}_f^k$. Despite the latter greatly decreasing the number of coupling iterations depending on the specific choice for $\eta_s^R$, one also observes decreased temporal stability. Therefore, $\eta_f^R = 10^2 \frac{\rho_s}{\Delta t}$ and $\eta_s^R = 10^{-4} \frac{\rho_f}{\Delta t}$ are chosen for all of the RR results presented in Figure~\ref{fig:pipe_compare_iters}.

\begin{figure}%[!htbp]
	\begin{minipage}{.48\linewidth}
		\centering
		\subfloat[Linear Elasticity, Newtonian fluid.]{
			\label{fig:pipe_compare_iters_FSI_linENewton}
			\includegraphics[width=0.97\textwidth]{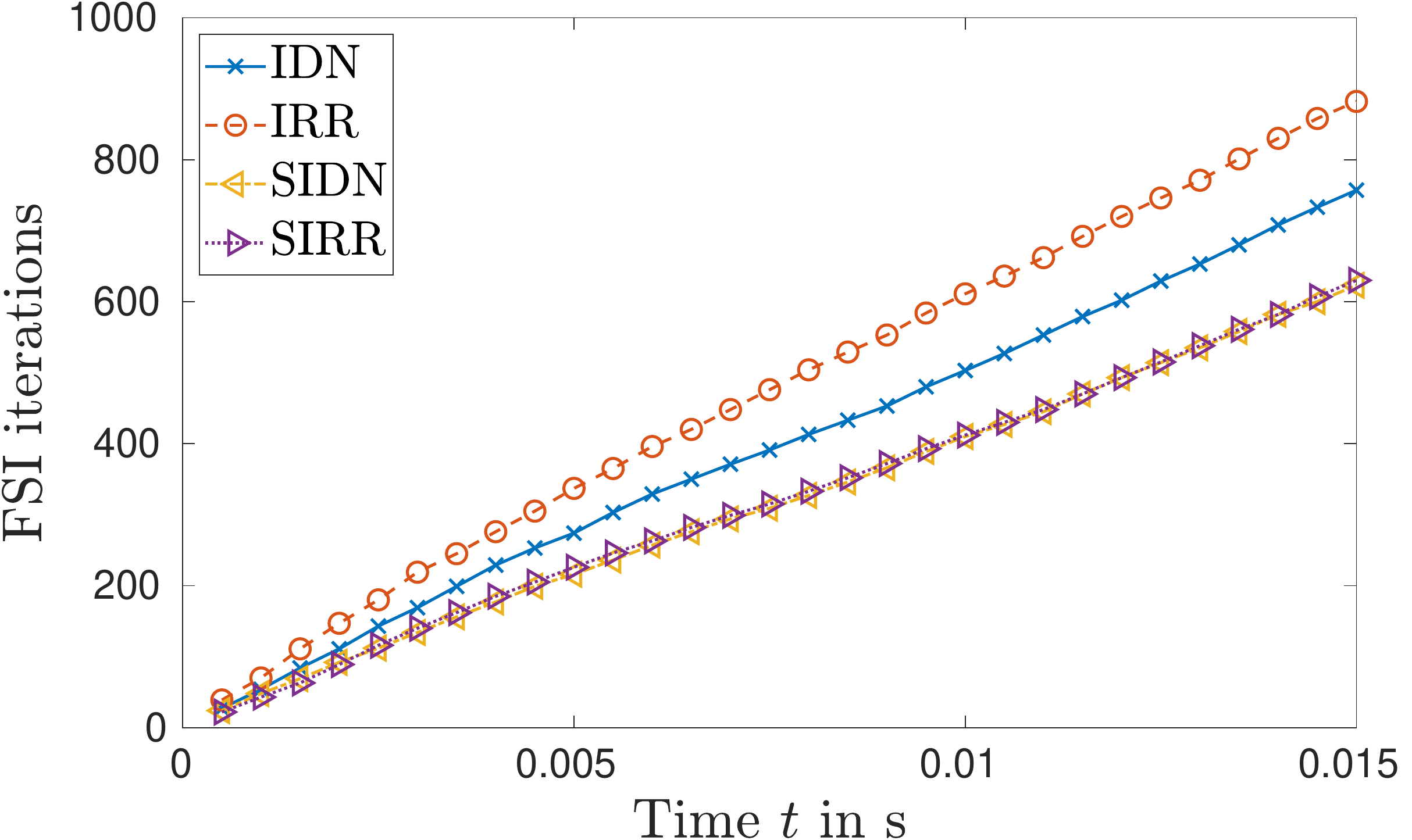}
			%\put(-10,72.5){$\hInt$}
		}
	\end{minipage}%
	\hfil
	\begin{minipage}{.48\linewidth}
		\centering
		\subfloat[HGO model, Carreau fluid.]{
			\label{fig:pipe_compare_iters_FSI_qiHGOCarreau}
			\includegraphics[width=0.97\textwidth]{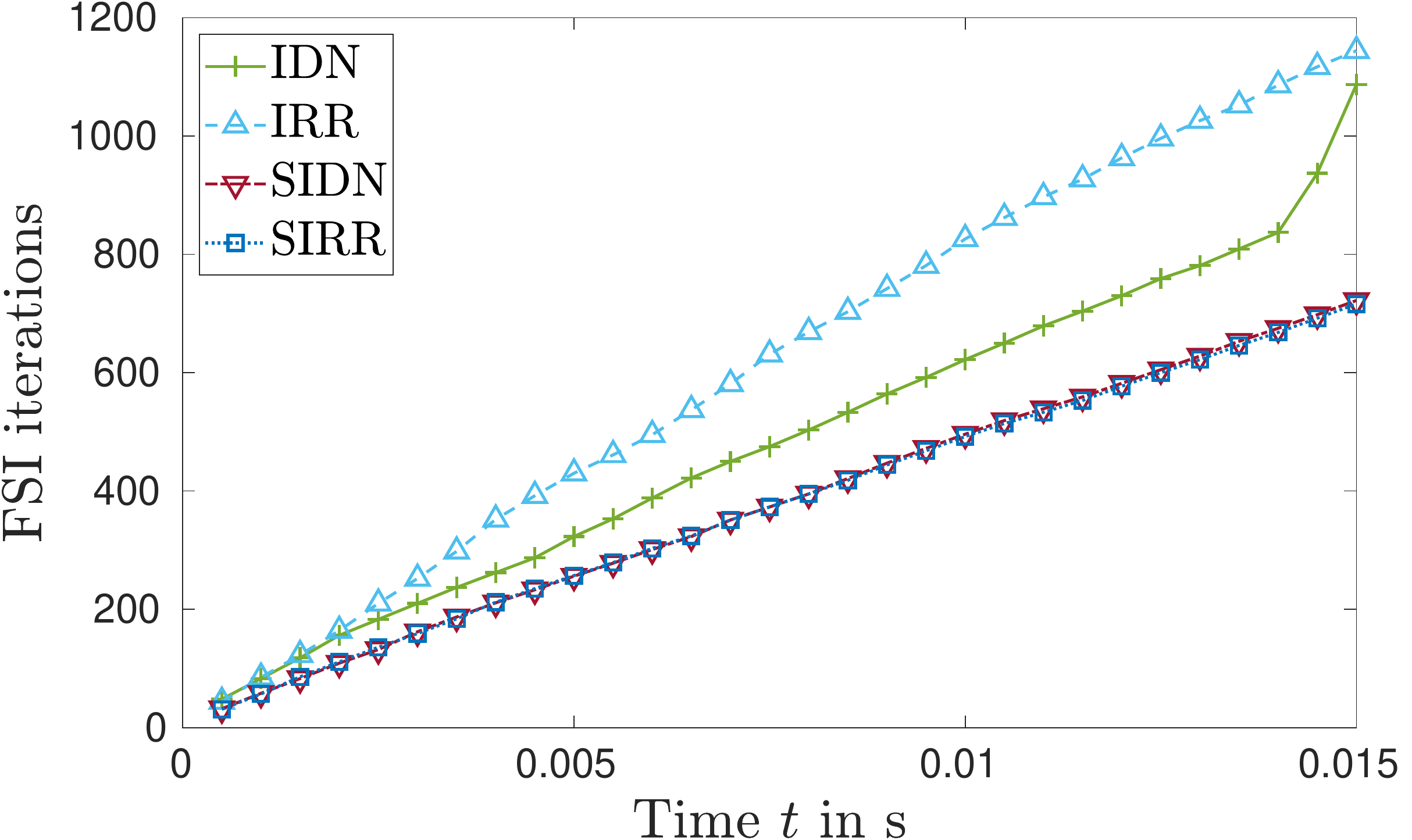}
			%\put(-40,75){$\hInt$}
		}
	\end{minipage}
	\caption{Accumulated FSI iterations using implicit Dirichlet--Neumann~(IDN) or Robin--Robin~(IRR) variants and semi-implicit counterparts~(SIDN,~SIRR) in the pressure pulse benchmark.}
	\label{fig:pipe_compare_iters}
\end{figure}

Before moving on to a final, more challenging example, let us summarise the insights gained from this simple pressure pulse benchmark: As expected, semi-implicit treatment of both the fluid mesh motion and fluid balance of linear momentum is found sufficiently accurate, while the constitutive equations considered for fluid and solid phases can be exchanged effortlessly similar to other partitioned approaches. In this simple setup, however, neither the generalised Newtonian rheology nor the nonlinear contributions, which differentiate linear elasticity from the St.~Venant--Kirchhoff model or the fiber reinforcement in the HGO model compared to the neo-Hookean one, change the observed quantities much. Moreover, introducing Robin conditions in the framework did not improve convergence properties, only mild improvements are seen in some cases, but instabilities arise from unsuitable choices. The semi-implicit Robin--Robin or Dirichlet--Robin schemes transition to fully explicit schemes as $\eta_s^R\rightarrow\infty$, which are found rather unstable in first numerical tests. Consequently, the method of choice is the SIDN algorithm, which delivered low iteration counts, stable solutions and requires less data transfer than the Robin variants.

\begin{figure}%[!htbp]
	\centering
	\includegraphics[width=0.49\linewidth]{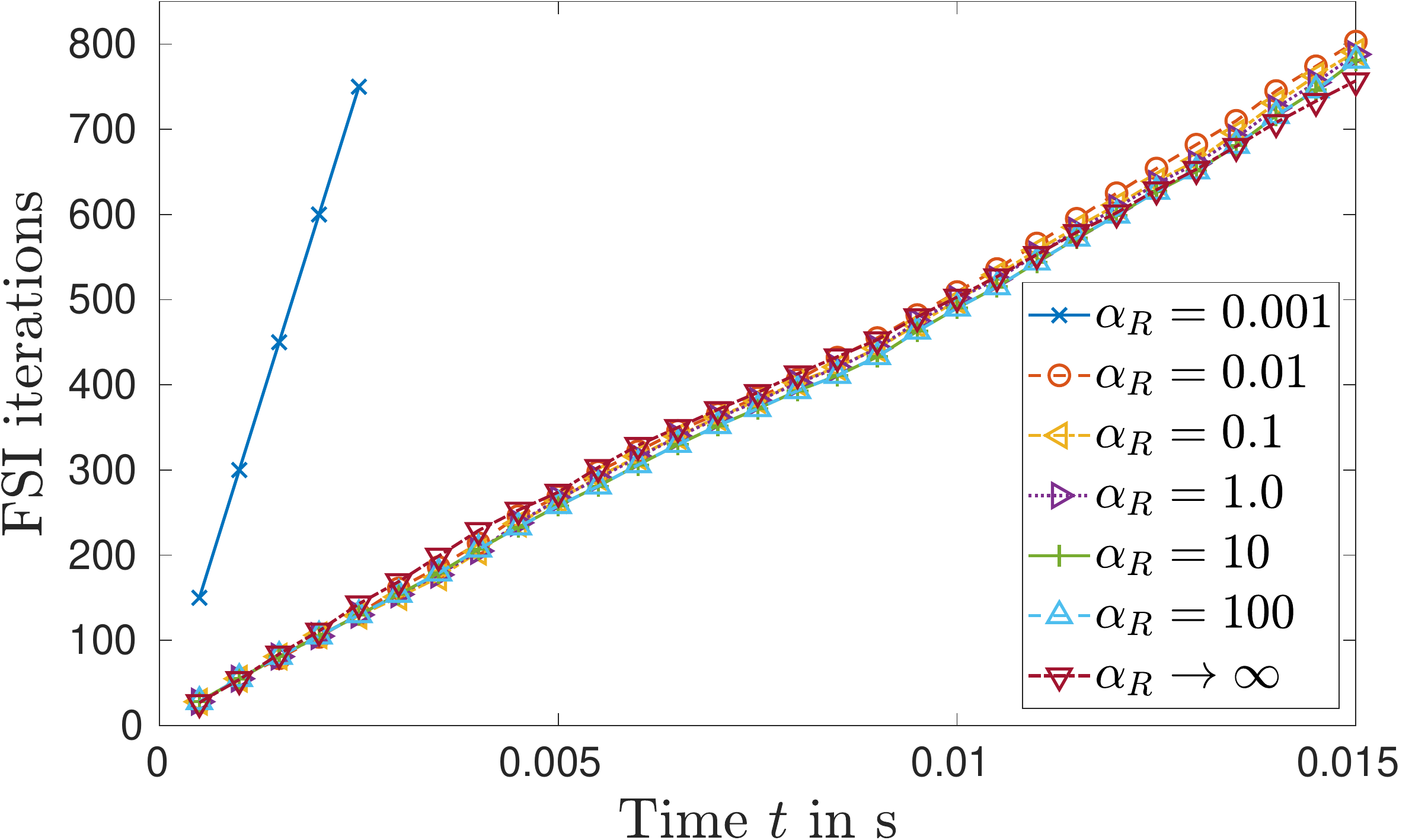}
	\caption{Accumulated FSI iterations using implicit Robin--Neumann coupling with Robin parameter $\eta_f^R = \alpha_R \frac{\rho_s}{\Delta t}$, linear elasticity and a Newtonian fluid in the pressure pulse benchmark.}
	\label{fig:pressure_pulse_IDN_linE_Newtonian_Robin_vary}
\end{figure}

\subsection{Flow through an idealised aneurysm}

In this final numerical example, we consider the flow through an idealised abdominal aortic aneurysm (AAA) using physiological parameter sets to challenge the devised framework. The setup chosen is similar to \cite{Lin2017}, combining a prototypical geometry as recommended by surgeons \cite{Meyer2011,Deplano2007} with flow data from \cite{Mills1970}. Comparable settings in a biomedical context with benchmark character were presented by \citet{Turek2011} and \citet{Balzani2016}, while patient-specific geometries were considered, e.g., in \cite{Gee2011,Kuttler2010,Bazilevs2010b}.

The AAA geometry with a length of $l=20~\text{cm}$ and with inner/lumen radius of $r_i=1~\text{cm}$ at the in- and outlet is discretised using $\approx1.3 \times 10^{5}$ trilinear elements as depicted in Figure~\ref{fig:AAA_mesh}, resulting in $\approx1.3 \times 10^{5}$ nodes. These radii are expanded to a maximum of $\approx 6.5~\text{cm}$ in lateral direction and $\approx5.5~\text{cm}$ in anterior-posterior direction in the middle of the aneurysm. \revised{The medial and adventitial layers of the aorta}, $\hat{\Omega}_{s,1}$ and $\hat{\Omega}_{s,2}$, are considered with a uniform and equal thickness of $h_s=0.75~\text{mm}$, where a fiber orientation is constructed by solving two auxiliary Laplace equations as in the pressure pulse benchmark and \revised{further processing the resulting radial and longitudinal orientation vectors shown in Figure~\ref{fig:AAA_ori} to obtain mean fiber directions rotated by $\pm \alpha_c$ from circumferential into longitudinal direction}. The computational domain is distributed to 8 processors as indicated by different colors in Figure~\ref{fig:AAA_processors}, ignoring the fluid structure interface.

\begin{figure}%[!htbp]
	\begin{minipage}{.48\linewidth}
		\centering
		\subfloat[Fluid mesh with boundary layers and cut solid domains resolving medial and adventitial tissue layers.]{
			\label{fig:AAA_layers}
			\includegraphics[width=0.97\textwidth]{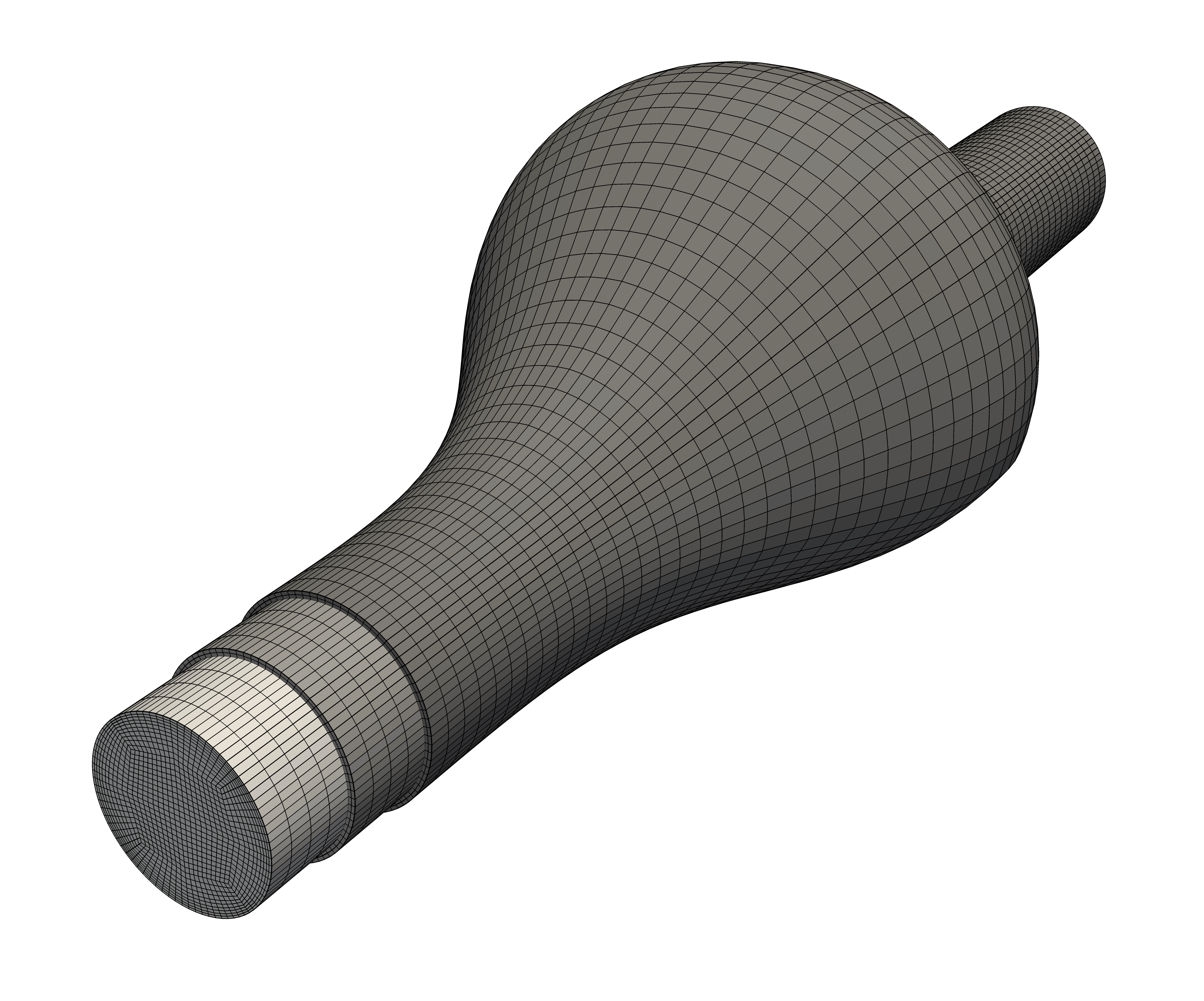}
			\put(-190,90){{$\hat{\Omega}_{s,2}$}}
			\put(-205,75){{$\hat{\Omega}_{s,1}$}}
			\put(-220,60){{$\hOf$}}
			%\put(-10,72.5){$\hInt$}
		}
	\end{minipage}%
	\hfil
	\begin{minipage}{.48\linewidth}
		\centering
		\subfloat[Radial (blue) and longitudinal (yellow) orientation vectors in cut solid domains used to construct $\hat{\ve{e}}_1$ and $\hat{\ve{e}}_2$.]{
			\label{fig:AAA_ori}
			\includegraphics[width=0.97\textwidth]{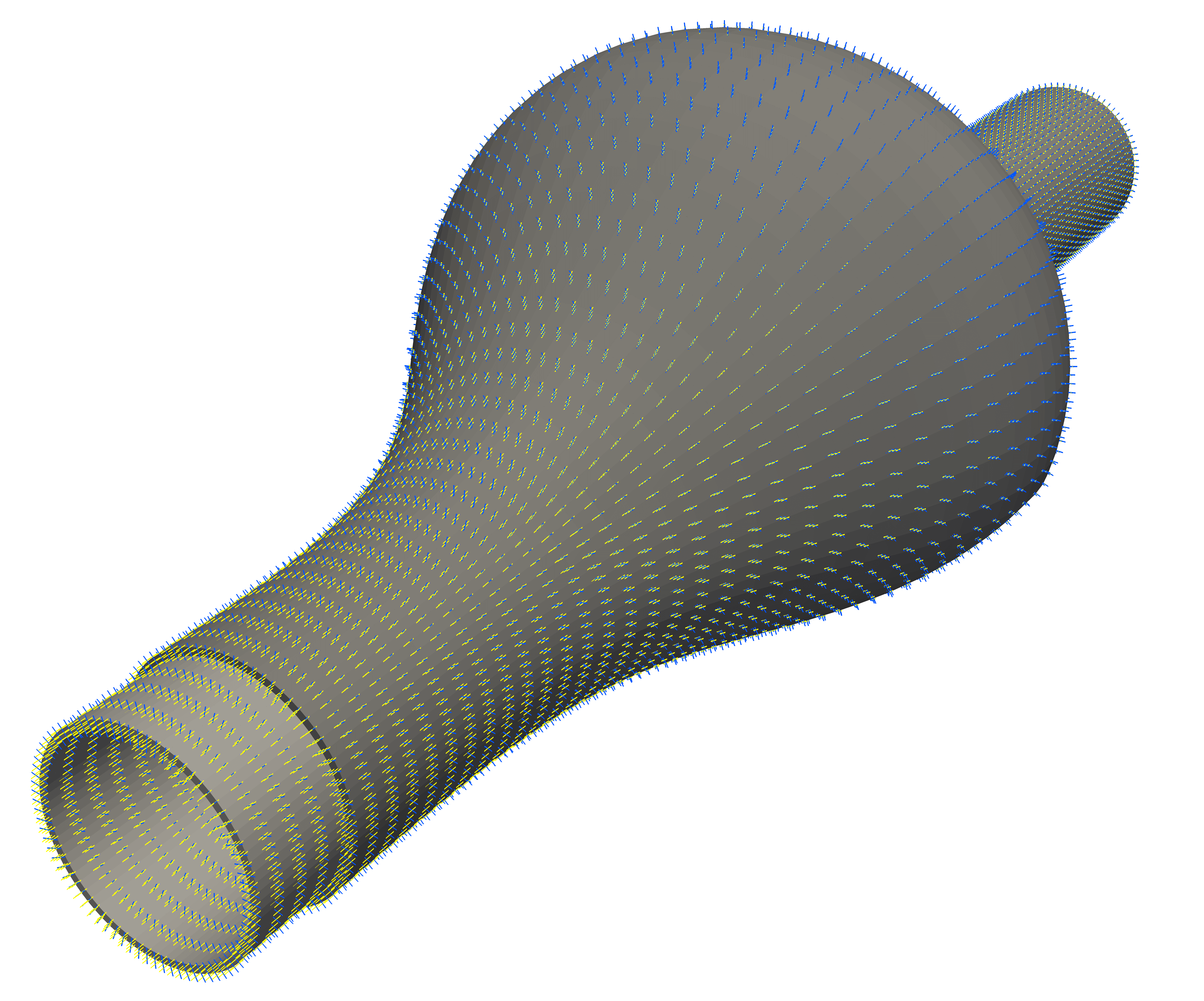}
			%\put(-70,65){{$\hOs$}}
			%\put(-5,105){{$\hOf$}}
			%\put(-40,75){$\hInt$}
		}
	\end{minipage}
	\caption{Finite element mesh used in the idealised abdomial aortic aneurysm example.}
	\label{fig:AAA_mesh}
\end{figure}
\begin{figure}%[!htbp]
	\begin{minipage}{.48\linewidth}
		\centering
		\subfloat[Discretised AAA geometry.]{
			\label{fig:AAA_processors}
			\includegraphics[height=5cm]{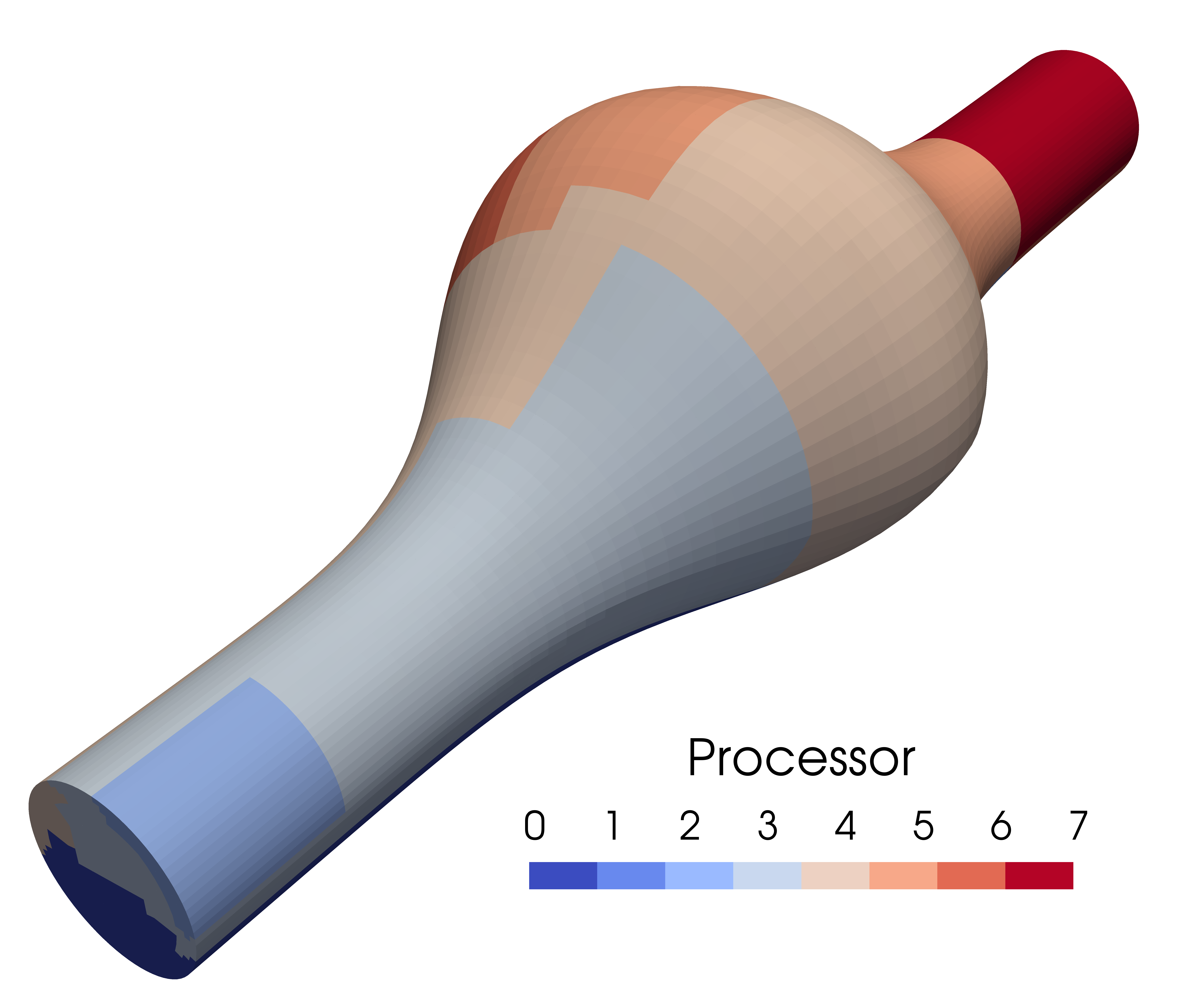}
			%\put(-190,90){{$\hat{\Omega}_{s,2}$}}
			%\put(-205,75){{$\hat{\Omega}_{s,1}$}}
			%\put(-220,60){{$\hOf$}}
			%\put(-10,72.5){$\hInt$}
		}
	\end{minipage}%
	\hfil
	\begin{minipage}{.48\linewidth}
		\centering
		\subfloat[In- and outlet data.]{
			\label{fig:AAA_vin_pout}
			\includegraphics[height=5cm]{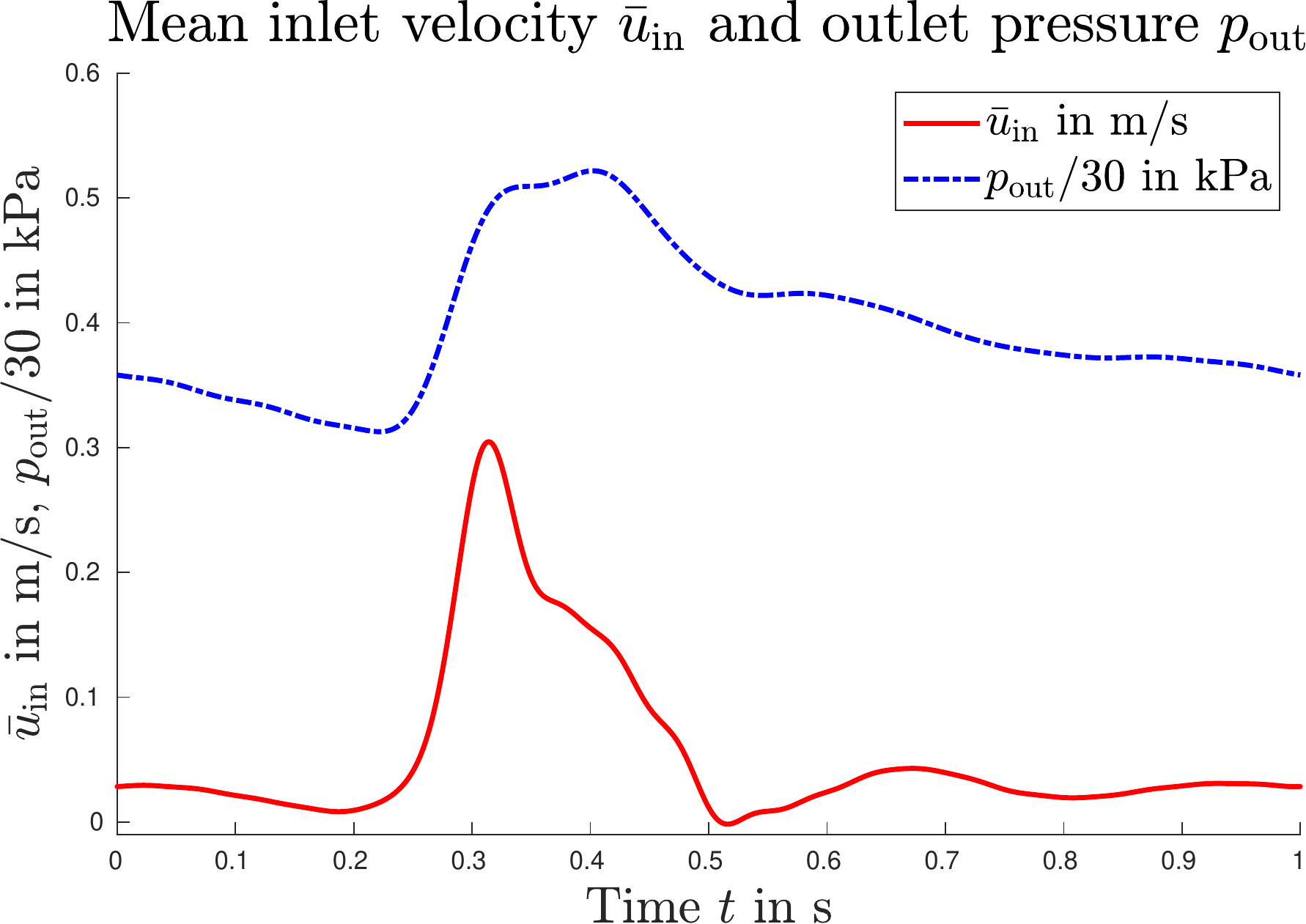}
			%\put(-70,65){{$\hOs$}}
			%\put(-5,105){{$\hOf$}}
			%\put(-40,75){$\hInt$}
		}
	\end{minipage}
	\caption{Realistic geometry distributed to 8 processors (left) and flow data (right) \cite{Lin2017,Meyer2011,Deplano2007}.}
	\label{fig:AAA_processors_and_vinpout}
\end{figure}

Regarding boundary conditions, we fix the nodes at the in- and outlet and consider a viscoelastic external tissue supporting the solid by enforcing
\begin{align*}
    \te{P}\hns =\ve{t}_s = - k_{e} \ve{d}_s - c_{e} \frac{\partial}{\partial t} \ve{d}_s - p_e \hns
\end{align*}
via the boundary term on $\hGNs$ with $k_e=10^7~\text{N/m}^3$, $c_e=10^5~\text{Ns/m}^3$ and $p_e = 0~\text{N/m}^2$ chosen similar to \cite{Moireau2012,REYMOND2013,Baumler2020}. The fluid inlet condition is computed from periodic mean inlet velocity $\bar{u}_\mathrm{in}$ as shown in Figure~\ref{fig:AAA_vin_pout}, prescribing the normal inlet velocity component $u_1$ in terms of the distance from the circular inlet center $r$:
\begin{align*}
    u_1 =  2 \bar{u}_\mathrm{in} \left( 1 - \frac{r}{r_\text{i}}\right) \xi_t(t)
    \, ,
    \text{ with} 
    \quad 
	\xi_t(t)
	=
	\begin{cases}
	\sin^2\left(\frac{\pi t}{0.4}\right) & \text{for} \ t \leq 0.2, \\
	1 & \text{otherwise},
	\end{cases}
\end{align*}
and a factor of 2 to match the volumetric flow rate computed from the mean velocity $\bar{u}_\mathrm{in}$ with a parabolic velocity profile. The outlet pressure is approximated via a three-element Windkessel model (see, e.g., \cite{EsmailyMoghadam2011,Bertoglio2013,ARBIA2016, Baumler2020})
\begin{align*}
    C \frac{\partial}{\partial t} p_p + \frac{p_p}{R_d} = Q(\ve{u}_f)
    \, ,
    \quad
    p_c - p_p = R_p Q(\ve{u}_f)
    \, ,
\end{align*}
which represents the characteristics of the excluded downstream vasculature. Therein, the flow over the outlet is denoted as $Q(\ve{u}_f)$ and parameters for the capacitance $C=1.25\times 10^{-9}~\text{m$^4$s$^2$/kg}$ and the proximal and distal resistances are specified as $R_p=266.66\times 10^{5}~\text{kg/m$^4$s}$ and $R_d=6.8\times 10^{8}~\text{kg/m$^4$s}$, respectively.
The outlet pressure is then weakly enforced through the standard traction boundary integral term, setting 
\begin{align*}
    \te{\sigma}_f \ve{n}_f = -p_c \ve{n}_f - \rho_f \revised{\ve{u}_f} (\ve{u}_f \cdot \ve{n}_f)_{-}
    \, ,
\end{align*}
also including backflow stabilisation according to \cite{Bazilevs2009}, where $(\ve{u}_f \cdot \ve{n}_f)_{-} = - \ve{u}_f \cdot \ve{n}_f$ for $\ve{u}_f \cdot\ve{n}_f \leq 0$ and $0$ otherwise. Moreover, Galerkin least-squares stabilisation \cite{Hughes1987} is added to the fluid momentum balance equation to counteract dominant convective terms.

Concerning material properties, we keep the physiological parameters as set in the pressure pulse benchmark, i.e., densities of $\rho_f=1060~\text{kg/m$^3$}$ and $\rho_s=1200~\text{kg/m$^3$}$, either a Newtonian fluid with viscosity $\mu_f = 3.5~\text{mPa~s}$ or a Carreau fluid with $\eta_0 = 56~\text{mPa~s}$, $\eta_\infty = 3.45~\text{mPa~s}$, $\lambda_f = 3.313~\text{s}$ and $n = 0.3568$, again, taken from \cite{Kim2000}. For the solid, we consider a St.~Venant--Kirchhoff solid or linear elasticity, both using a Young's modulus $E_s = 300 ~\text{kPa}$ and a Poisson's ratio $\nu_s=0.3$ and compare with layered neo-Hookean models with $\nu_s=0.499$ and a shear rate of $\mu_{s,1}=62.1~\text{kPa}$ and $\mu_{s,2}=21.6~\text{kPa}$ for inner and outer layers, respectively. Fibers are included for the HGO model, setting $k_1=1.4~\text{kPa}$, $k_2=22.1$ and $\kappa_{c,1} = 0.12$, $\alpha_{c,1} = 27.47^\circ$ and $\kappa_{c,2} = 0.25$, $\alpha_{c,2} = 52.88^\circ$ for the media and adventitia layers of the aorta \cite{Rolf-Pissarczyk2021,Weisbecker2012}. The prestress present in the tissue in a geometry reconstructed from medical image data is not considered in this setup. However, we can interpret the present mesh as the zero-stress geometry being the result of the prestress strategy ``deflating'' the model as in \cite{Tezduyar2008,Takizawa2018}. This slight discrepancy is simply ignored, since we do not investigate prestress effects.

Altogether, three pulses are considered, i.e., $t\in(0,3]$, using a uniform time step of $\Delta t=1~\text{ms}$ in the second-order accurate scheme, i.e., BDF-2 and linear extrapolation for linearisation. The WBZ$-\alpha$ and CH$-\alpha$ time integrators with $\rho_\infty=0$ are selected, since they were again found to be more robust when using the Windkessel model. Aitken's relaxation is initiated with $\omega_0 = 0.001$, coupling the pressure Poisson (PPE) and solid momentum subproblems in the semi-implicit Dirichlet--Neumann (SIDN) approach until reaching $\epsilon_{abs}=10^{-6}$ or $\epsilon_{rel} = 5\times10^{-4}$, while a relative tolerance in the Newton solver~\eqref{eqn:newton_rel_tol} of $\epsilon_N = 10^{-5}$ was selected conservatively.

Looking at solution snapshots at three distinct time instances at $t=2.2, 2.4 \text{ and } 2.6~\text{s}$, we observe strong recirculations in the AAA, especially after the rapid drop of the inflow velocity. In Figure~\ref{fig:AAA_snapshots}, selected streamlines indicate areas of recirculatory flow, leading to large gradients in the velocity fields and hence to a large shear rate, which ultimately results in strong gradients in the viscosity field. Further, the maximal and minimal viscosities observed at any point in time span almost the entire range from $\eta_\infty$ to $\eta_0$. The pressure in the lumen is spatially rather uniform, but rapidly changes in time due to the Windkessel model setting the pressure level, which dominates the deformation rather than the velocity acting on the vessel. Inspecting the time evolution of the displacement norm and fluid pressure at the apex point at $\hat{\ve{x}}=(0.1,0.039,0)^T$ depicted in Figure~\ref{fig:AAA_compare}, we see that the periodic state is not yet reached. This is due to the Windkessel model applied on the outflow boundary, which only gradually increases the pressure level in the AAA. Observing the displacement norm and fluid pressure only, effects of the more advanced solid constitutive models cannot be investigated, especially since the displacement remains rather small. The parameter choice for the tissue, however, leads to a difference easily observable with the naked eye.

Regarding the difference when applying Newtonian or Carreau rheological models, Figure~\ref{fig:AAA_compare} is not sufficient, since the relevant quantities are neither the displacement nor the pressure, but rather the time averaged shear rate and shear stress. After defining the time average $\bar{f}(\ve{x})$ of a function $f(\ve{x},t)$ over a period $T_p$ by 
\begin{align*}
    \bar{f} = \frac{1}{T_p} \int_{i T_p}^{(i+1)T_p} f(\mathcal{A}_t(\hat{\ve{x}},t),t) 
    \text{d} t
    \, ,
\end{align*}
calculate $\bar{\dot{\gamma}}$  setting $i=T_p=1$, while the shear stress is computed following \citet{John2017}:
\begin{align*}
    \te{\tau} 
    &= 
    \te{\sigma}_f\ve{n}_f - \left[ (\te{\sigma}_f\ve{n}_f) \cdot \ve{n}_f \right] \ve{n}_f
    =
    \left(
        2 \mu_f \nabla^S \ve{u}_f
    \right)
    \ve{n}_f
    -
    \left\{
        \left[
            \left( 
            2 \mu_f
                \nabla^S\ve{u}_f
            \right)
            \ve{n}_f
        \right]
        \cdot
        \ve{n}_f
    \right\}
    \ve{n}_f
    \, .
\end{align*}
Inspecting these time averaged quantities in the third cycle, as shown in Figure~\ref{fig:AAA_compare_t_avg}, a striking difference is observed as expected. Nonetheless, within this work we focus on the FSI solver rather than the phenomenological influence of different model decisions, but still want to demonstrate the versatility of the framework.

\begin{figure}%[!htbp]
	\begin{minipage}{.48\linewidth}
		\centering
		\subfloat[$\ve{d}_s$ and $\ve{u}_f$ at $t=2.2~\text{s}$.]{
			\label{fig:streamlines_219}
			\includegraphics[width=0.97\textwidth]{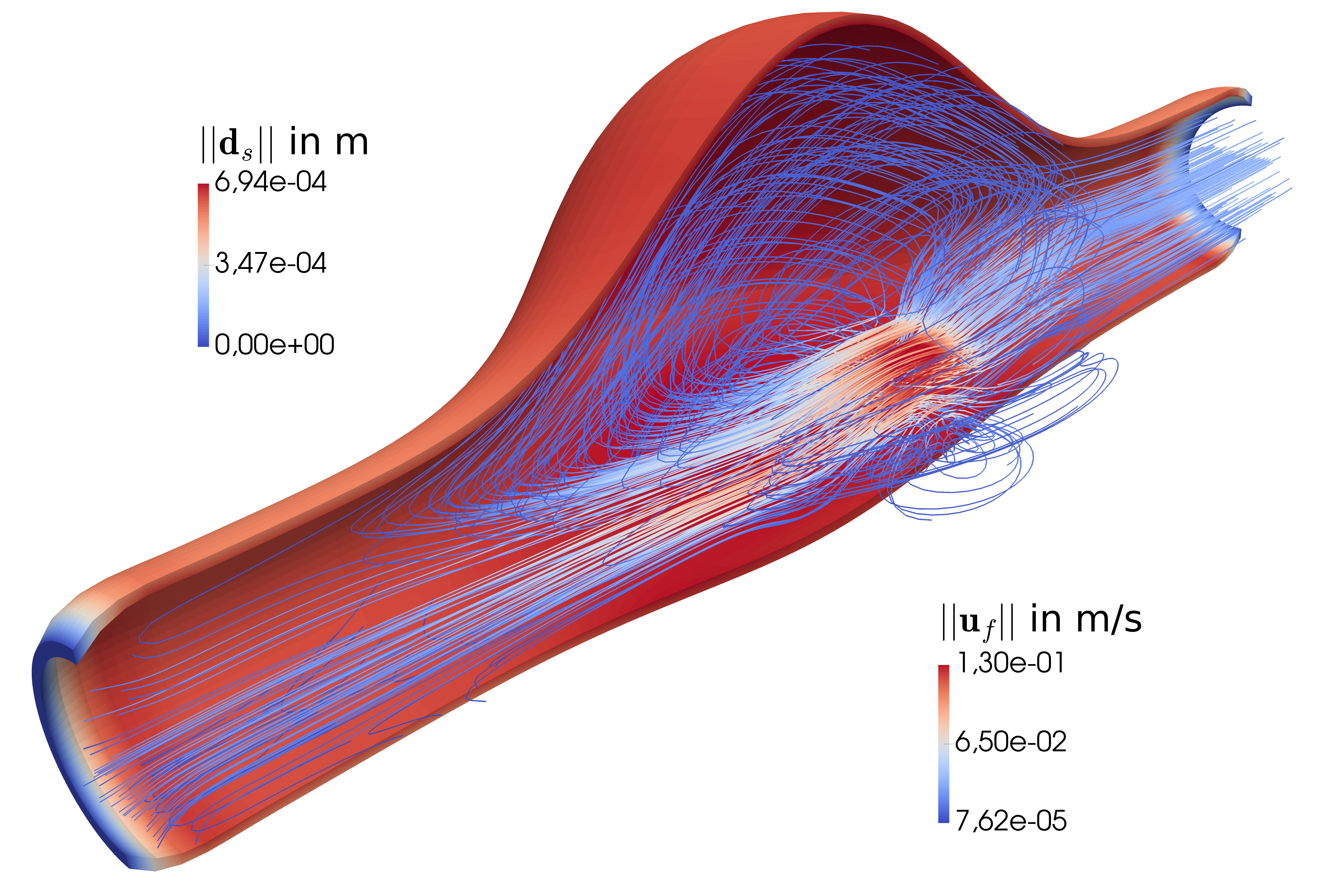}
			\put(-110,10){
			    \includegraphics[width=0.15\textwidth]{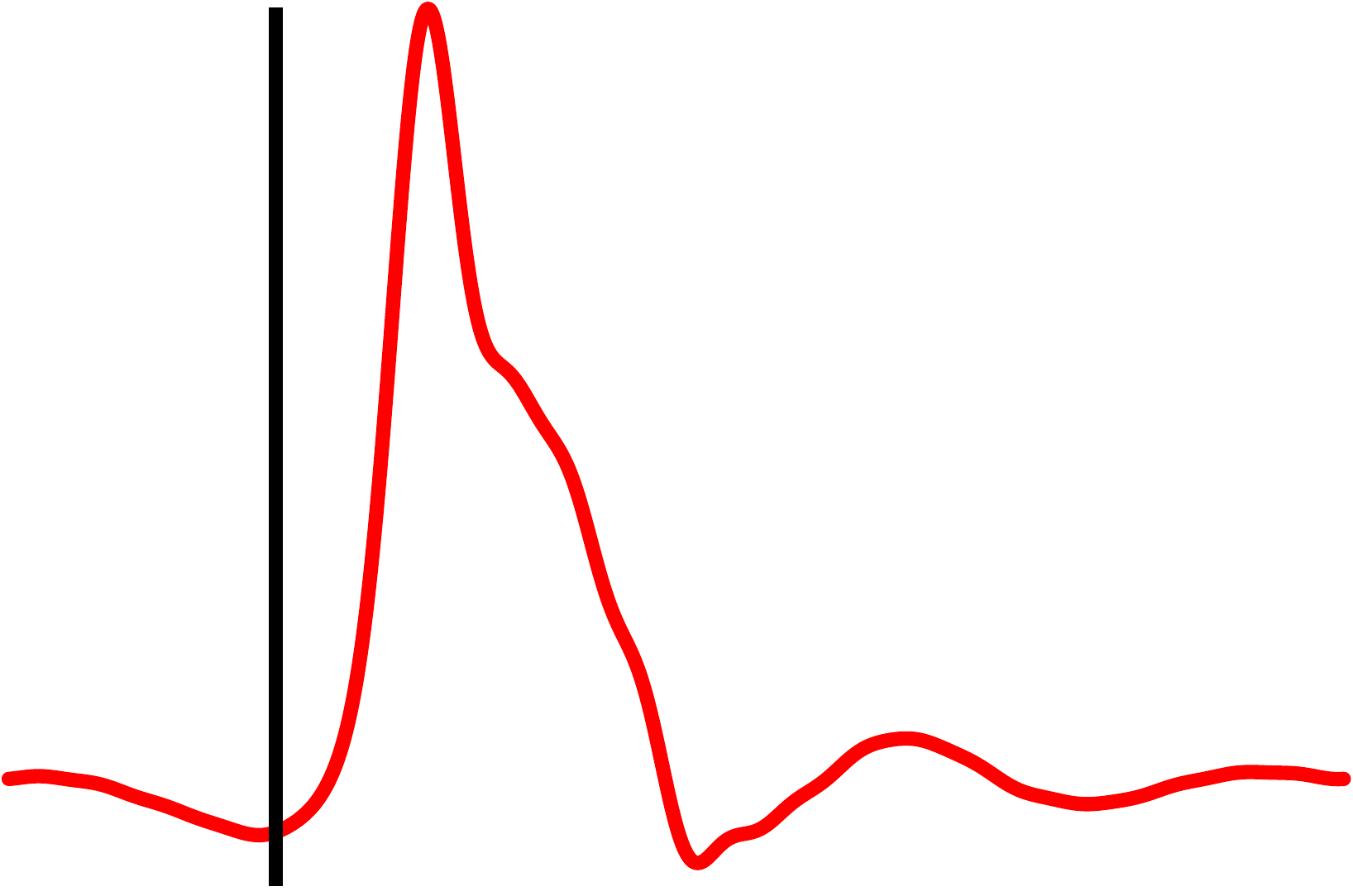}
			}
		}
	\end{minipage}%
	\hfil
	\begin{minipage}{.48\linewidth}
		\centering
		\subfloat[$\ve{d}_s$ and $\mu_f$ at $t=2.2~\text{s}$.]{
			\label{fig:viscosity_219}
			\includegraphics[width=0.97\textwidth]{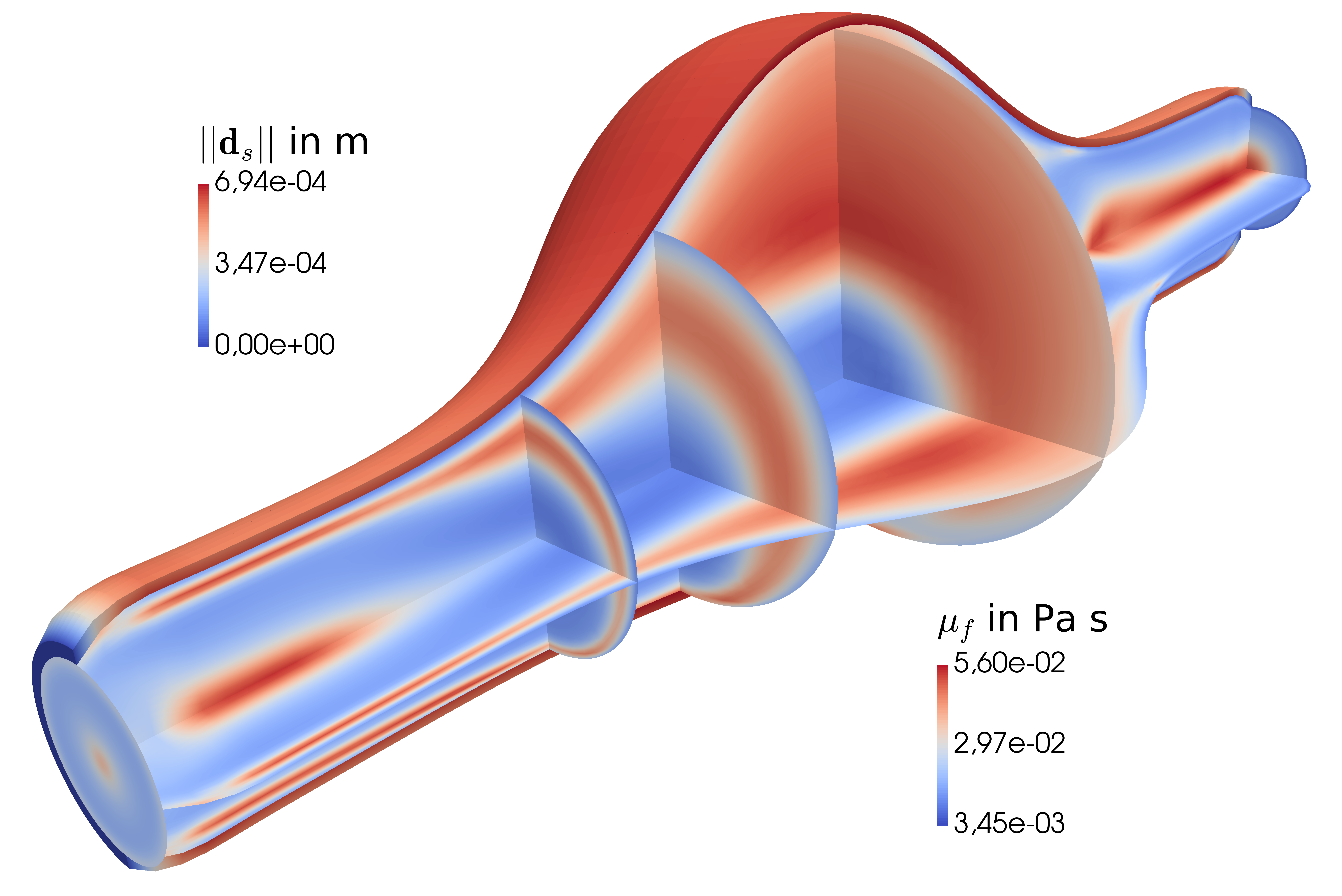}
			%\put(-40,75){$\hInt$}
		}
	\end{minipage}
	\\
	\begin{minipage}{.48\linewidth}
		\centering
		\subfloat[$\ve{d}_s$ and $\ve{u}_f$ at $t=2.4~\text{s}$.]{
			\label{fig:streamlines_239}
			\includegraphics[width=0.97\textwidth]{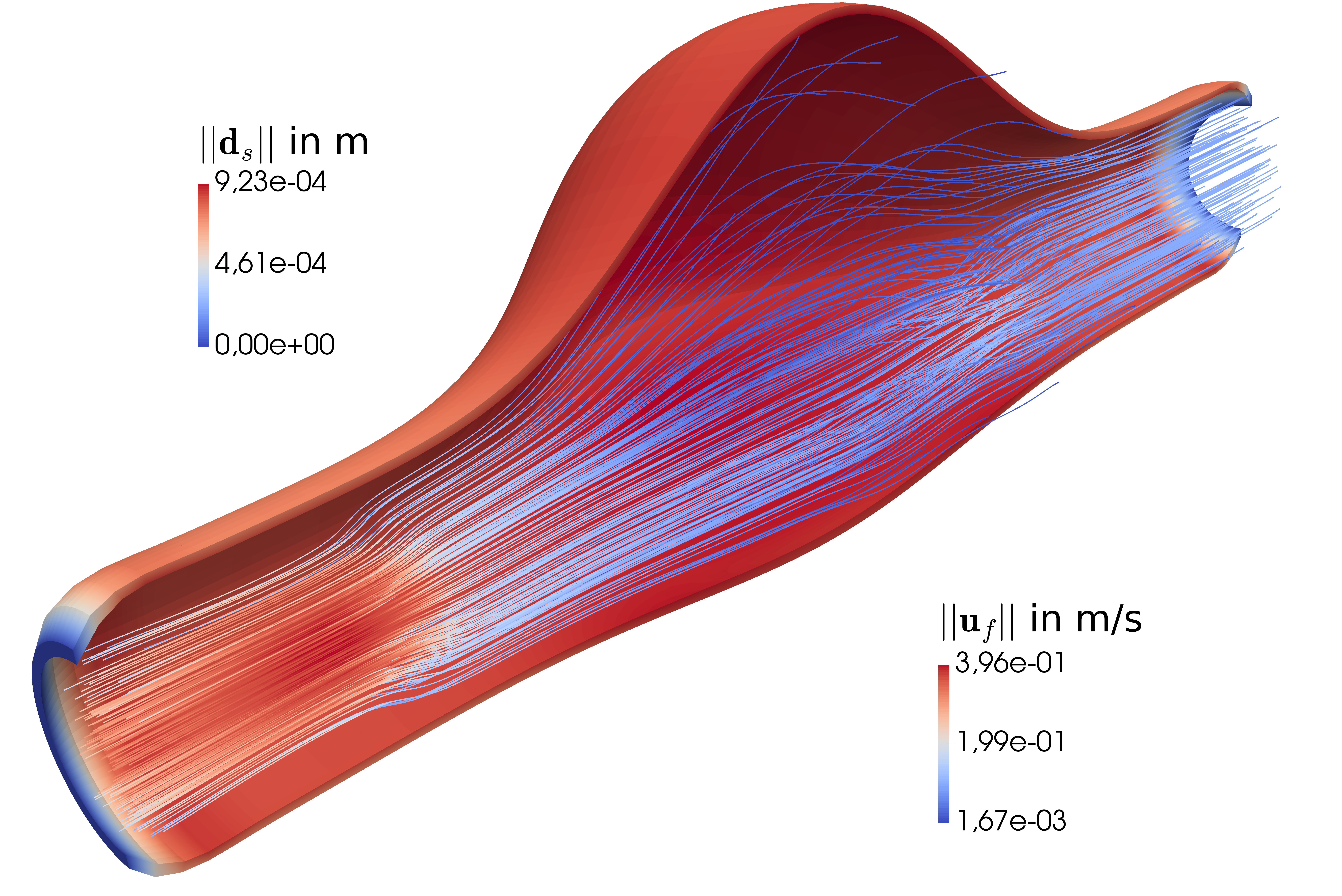}
			\put(-110,10){
			    \includegraphics[width=0.15\textwidth]{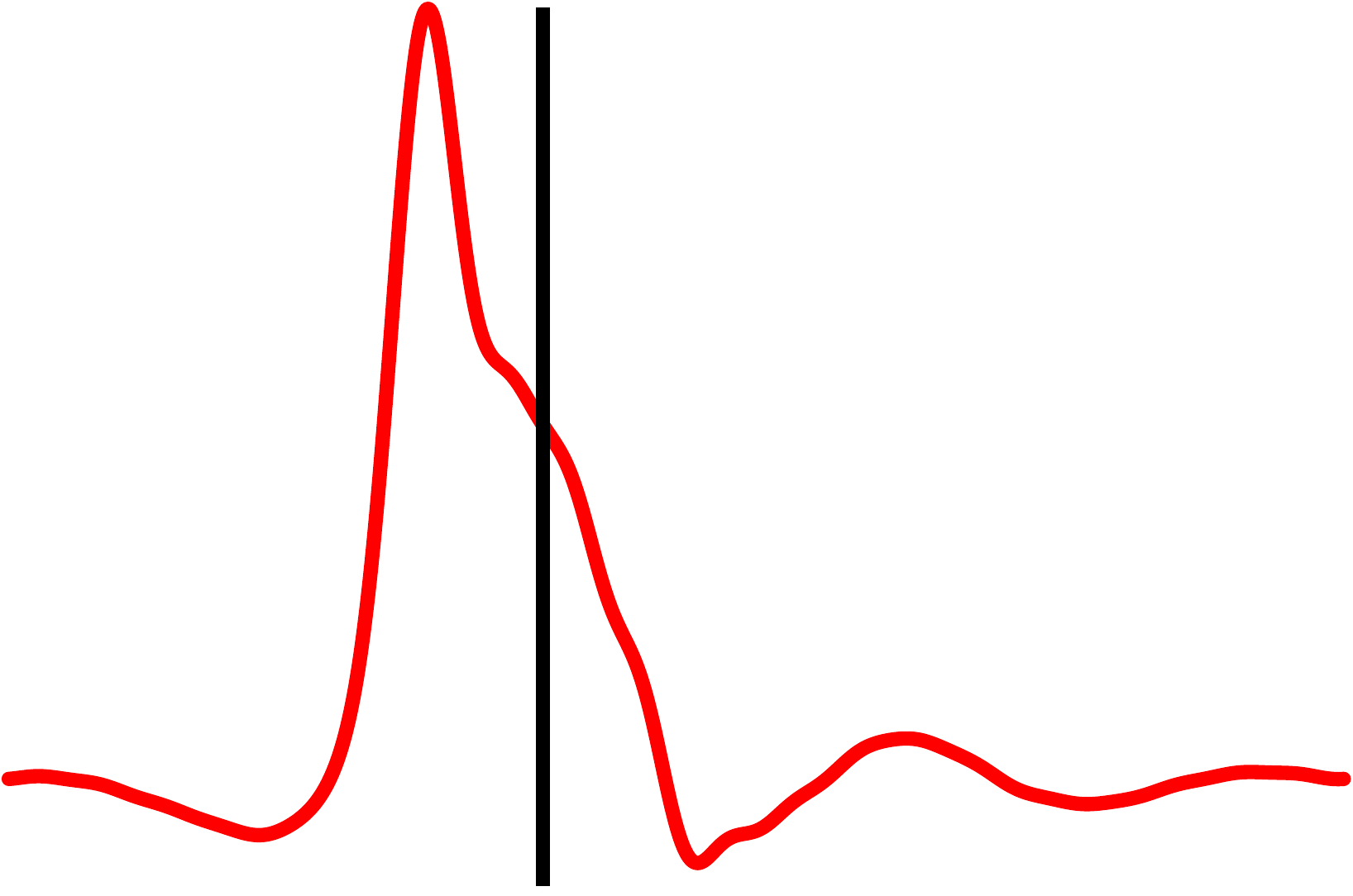}
			}
		}
	\end{minipage}%
	\hfil
	\begin{minipage}{.48\linewidth}
		\centering
		\subfloat[$\ve{d}_s$ and $\mu_f$ at $t=2.4~\text{s}$.]{
			\label{fig:viscosity_239}
			\includegraphics[width=0.97\textwidth]{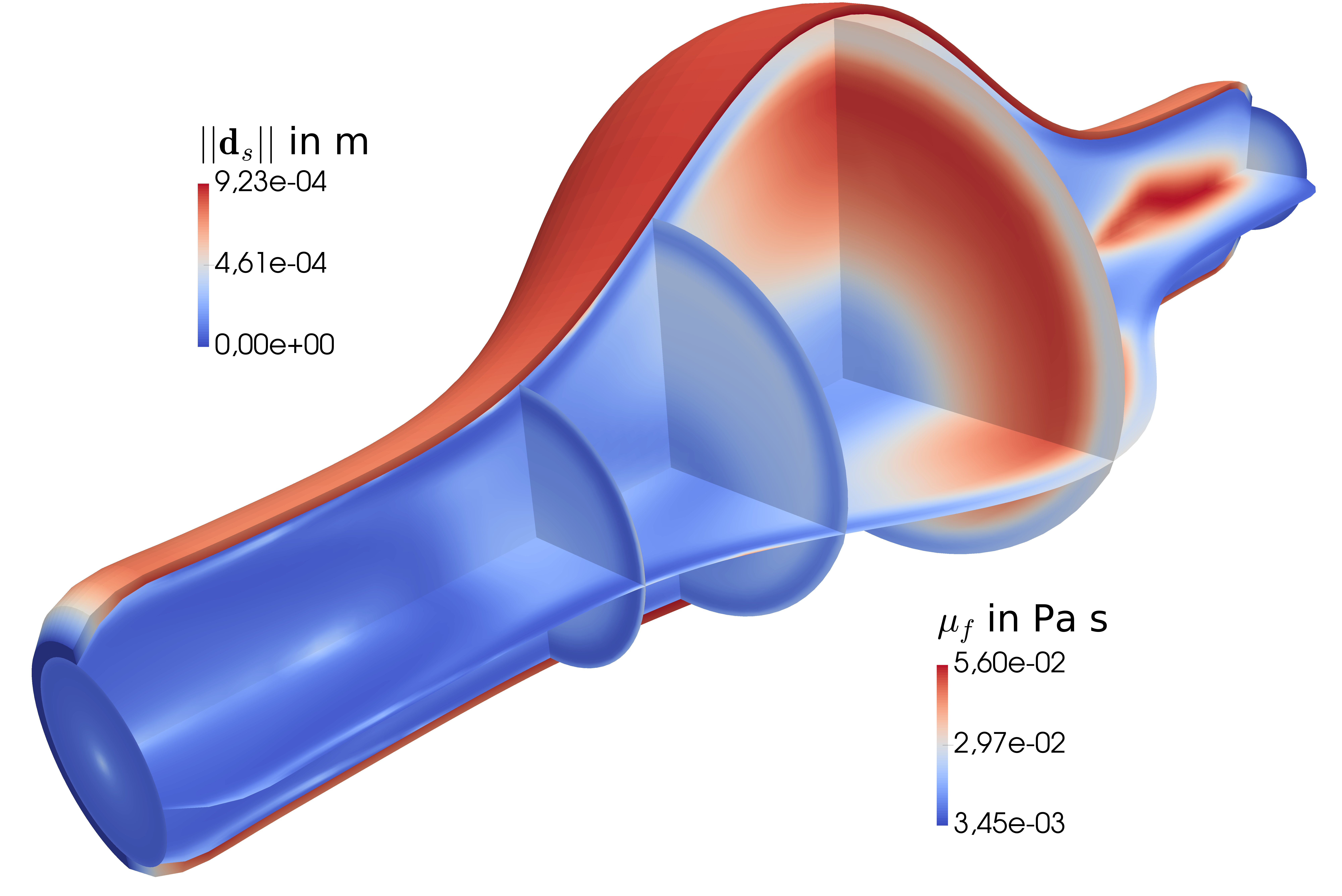}
			%\put(-40,75){$\hInt$}
		}
	\end{minipage}
	\\
	\begin{minipage}{.48\linewidth}
		\centering
		\subfloat[$\ve{d}_s$ and $\ve{u}_f$ at $t=2.6~\text{s}$.]{
			\label{fig:streamlines_259}
			\includegraphics[width=0.97\textwidth]{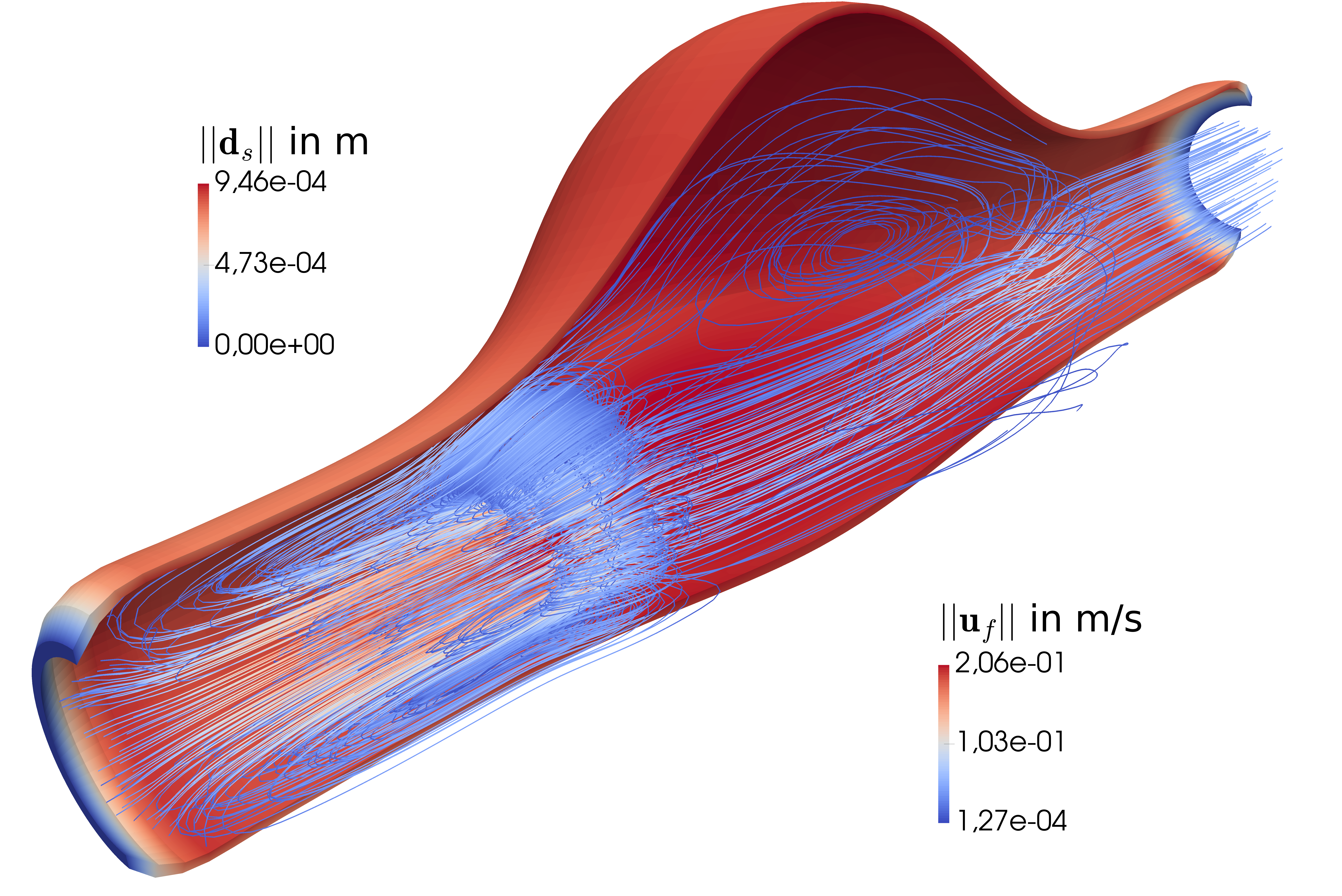}
			\put(-110,10){
			    \includegraphics[width=0.15\textwidth]{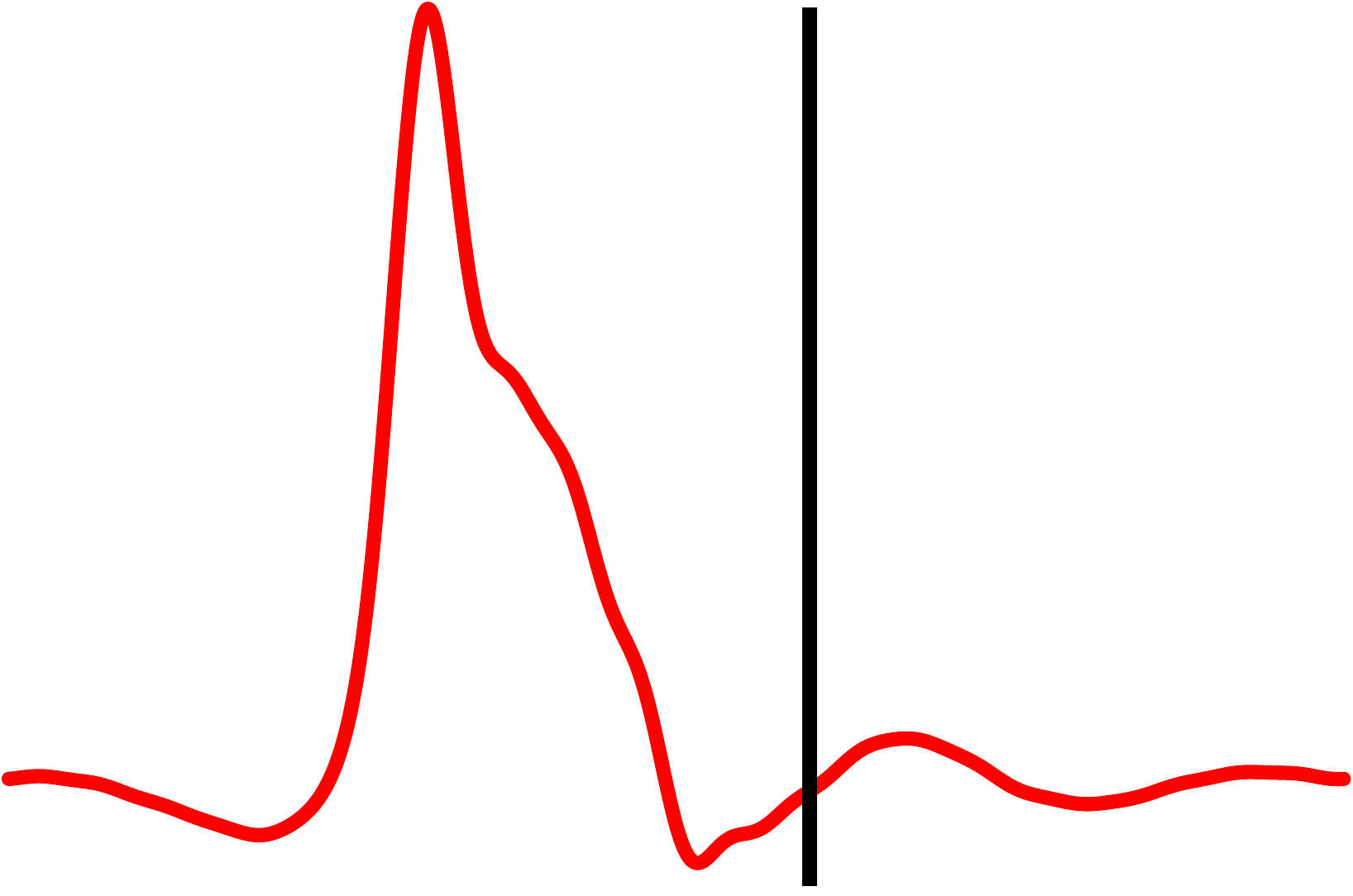}
			}
		}
	\end{minipage}%
	\hfil
	\begin{minipage}{.48\linewidth}
		\centering
		\subfloat[$\ve{d}_s$ and $\mu_f$ at $t=2.6~\text{s}$.]{
			\label{fig:viscosity_259}
			\includegraphics[width=0.97\textwidth]{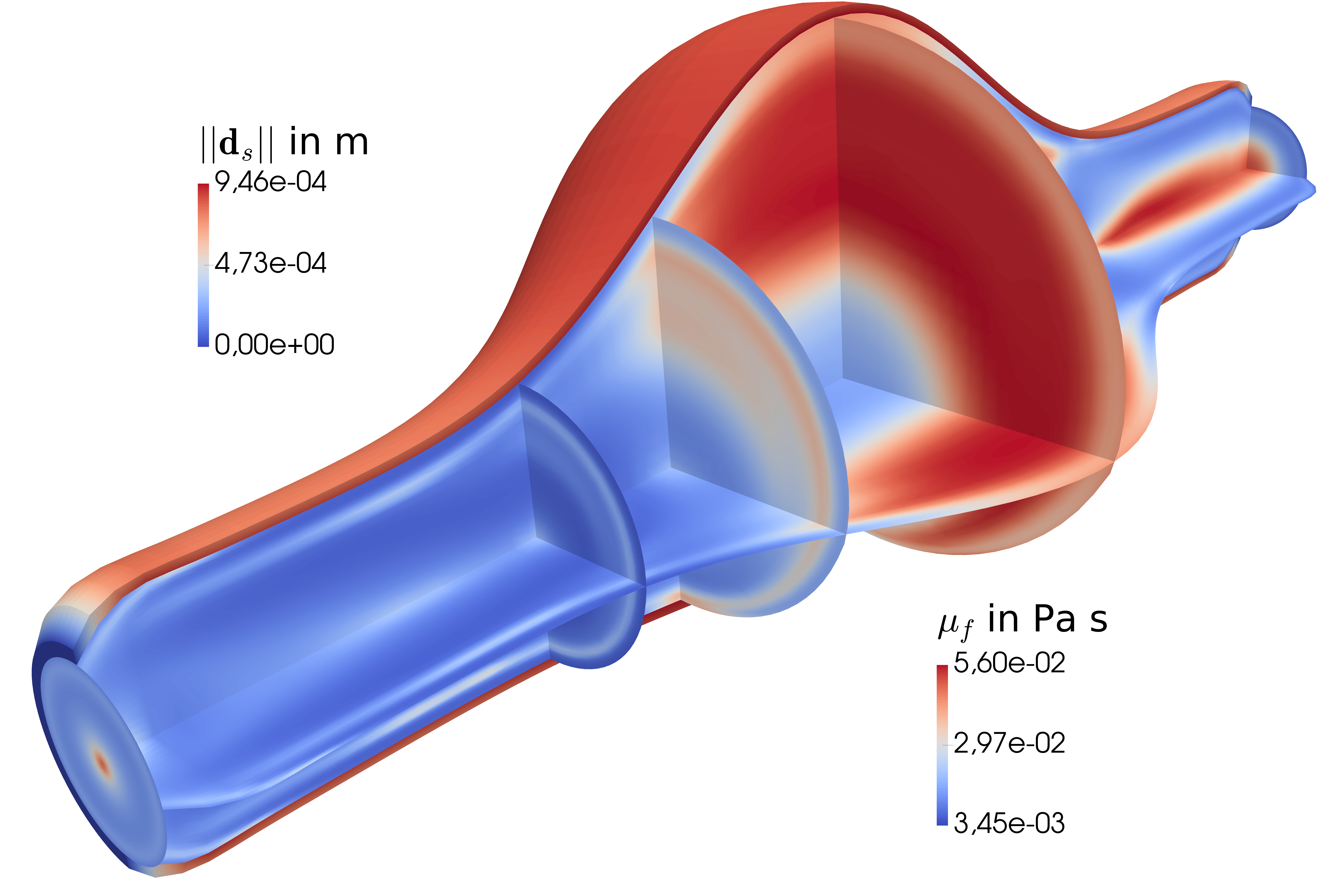}
			%\put(-40,75){$\hInt$}
		}
	\end{minipage}
	\caption{Snapshots at time $t=2.2,2.4,2.6~\text{s}$ \revised{indicated by a vertical line in the periodic mean inflow scaling:} Carreau fluid flowing through the abdominal aortic aneurysm of HGO material (deformation scaled by $5$): Solid displacement $\ve{d}_s$ and selected streamlines of the fluid velocity $\ve{u}_f$ (left) or viscosity $\mu_f$ in selected slices (right).}
	\label{fig:AAA_snapshots}
\end{figure}
\begin{figure}%[!htbp]
	\begin{minipage}{.48\linewidth}
		\centering
		\subfloat[$||\ve{d}_s||$ in the apex point.]{
			\label{fig:AAA_apex_d}
			\includegraphics[width=0.97\textwidth]{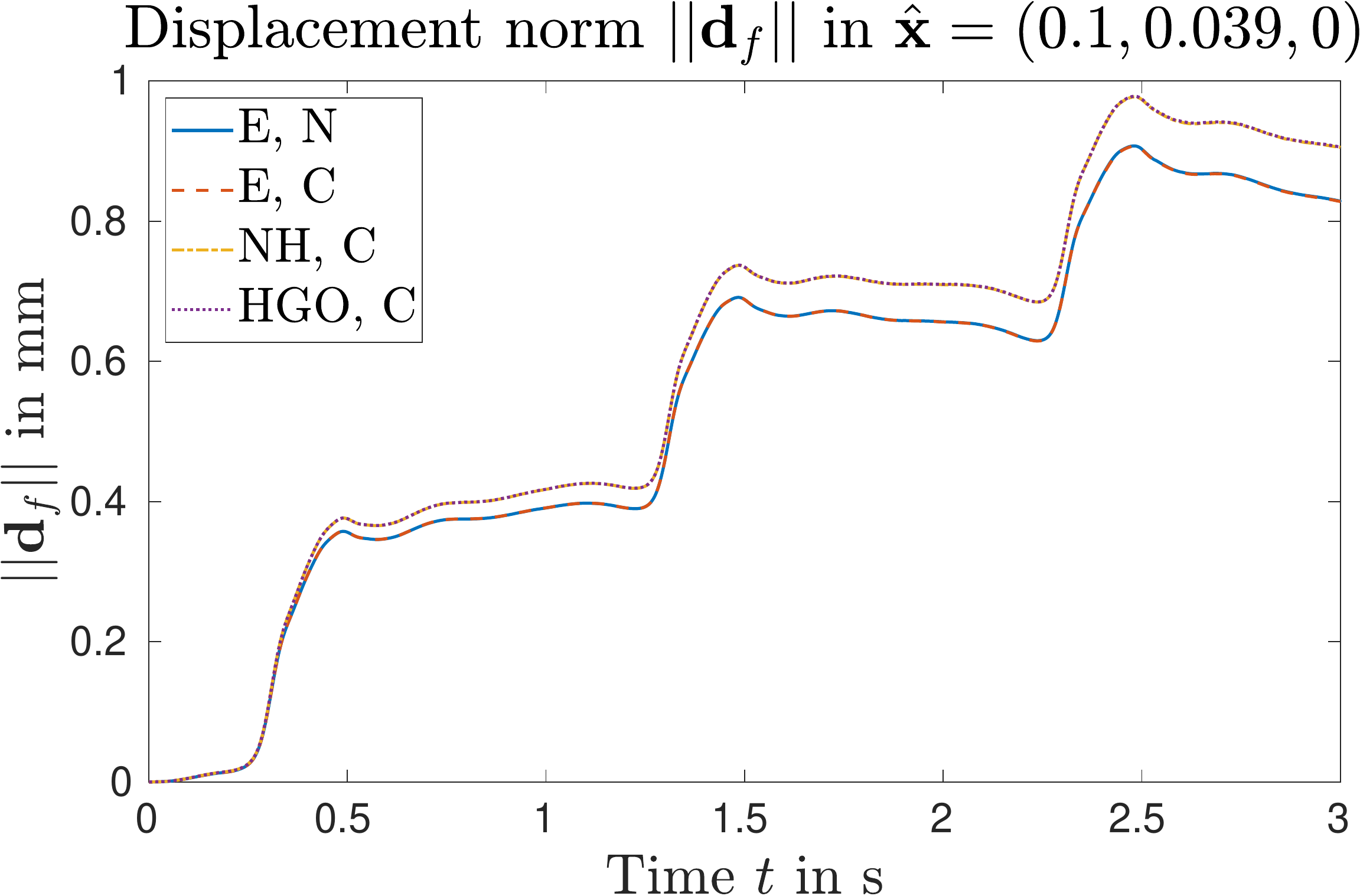}
			%\put(-10,72.5){$\hInt$}
		}
	\end{minipage}%
	\hfil
	\begin{minipage}{.48\linewidth}
		\centering
		\subfloat[$||p_f||$ in the apex point.]{
			\label{fig:AAA_apex_p}
			\includegraphics[width=0.97\textwidth]{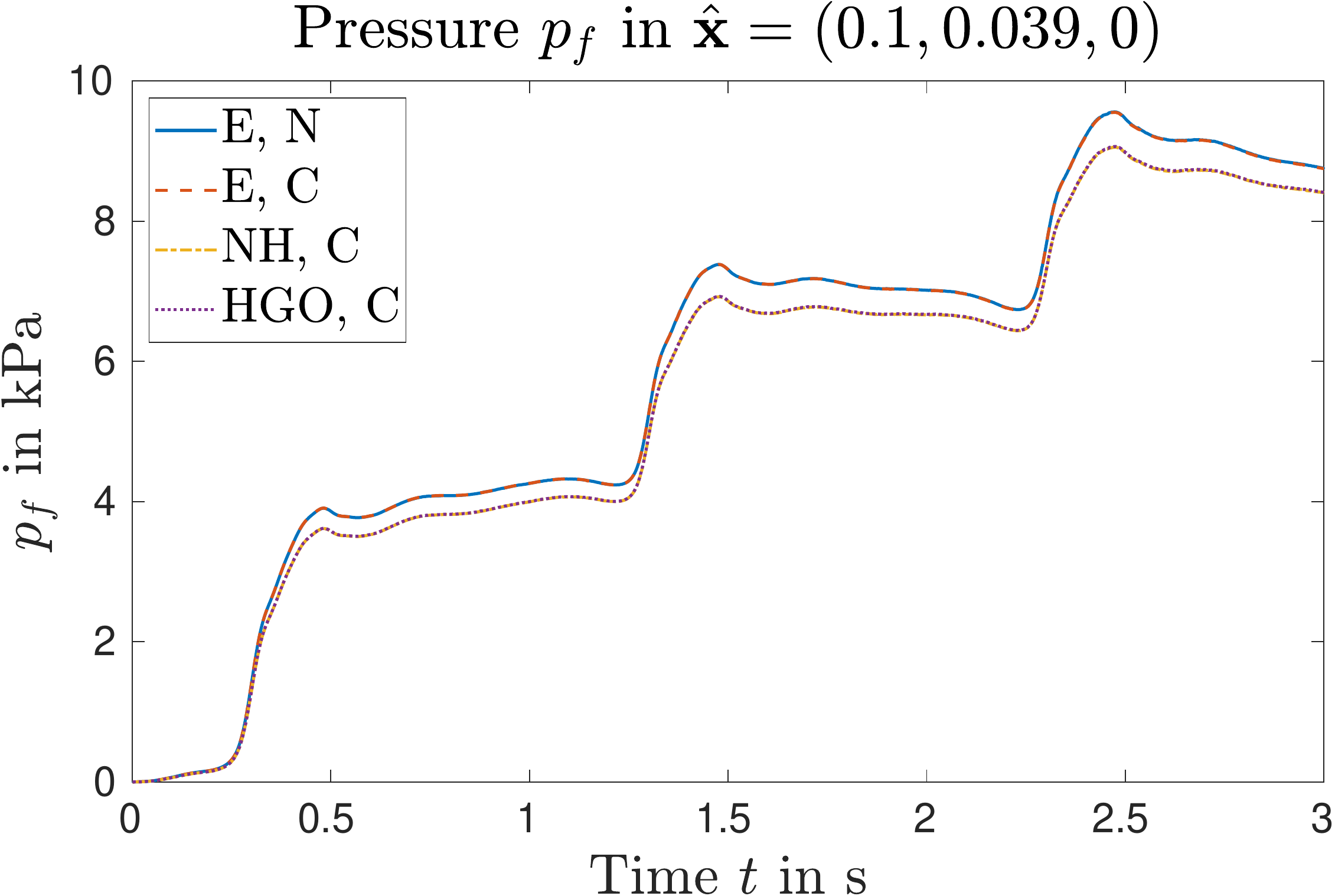}
			%\put(-40,75){$\hInt$}
		}
	\end{minipage}
	\caption{Solid displacement and fluid pressure in the apex point using linear elasticity~(E), neo-Hookean~(NH) or Holzapfel--Gasser--Ogden~(HGO) material models for the solid phase and Newtonian~(N) or Carreau~(C) fluids. The periodic state is not yet reached due to the Windkessel model. Effects of nonlinearities are small in the observed quantities.}
	\label{fig:AAA_compare}
\end{figure}

Having thoroughly inspected the solution itself, let us now turn our attention to the numerical aspects of this test case. Motivated by the previous example, the semi-implicit Dirichlet--Neumann (SIDN) coupling scheme is considered. Throughout the entire simulation time, less than 30 steps coupling solid momentum balance and PPE are needed, as shown in Figure~\ref{fig:AAA_fsi_iters}. Only a slight dependence on the inflow profile and pressure level in the geometry are observed. Interestingly, accumulated FSI iteration counts show that NH and HGO models lead to fewer FSI iterations, which is most likely triggered by higher pressure levels due to a stiffer material response caused by the selected (higher) Young's modulus. For comparison, the implicit Dirichlet--Neumann (IDN) scheme and a scheme treating the mesh motion explicitly, leading to a geometry explicit (GEDN) approach are also included in Figure~\ref{fig:AAA_fsi_iters_accum1}, showing a decrease of up to 31\% in FSI coupling steps needed when using the semi-implicit variant.

\begin{figure}%[!htbp]
	\begin{minipage}{.48\linewidth}
		\centering
		\subfloat[$\bar{\te{\tau}}$ on the fluid--structure interface $\Intt$.]{
			\label{fig:AAA_t_avg_tau}
			\includegraphics[width=0.97\textwidth]{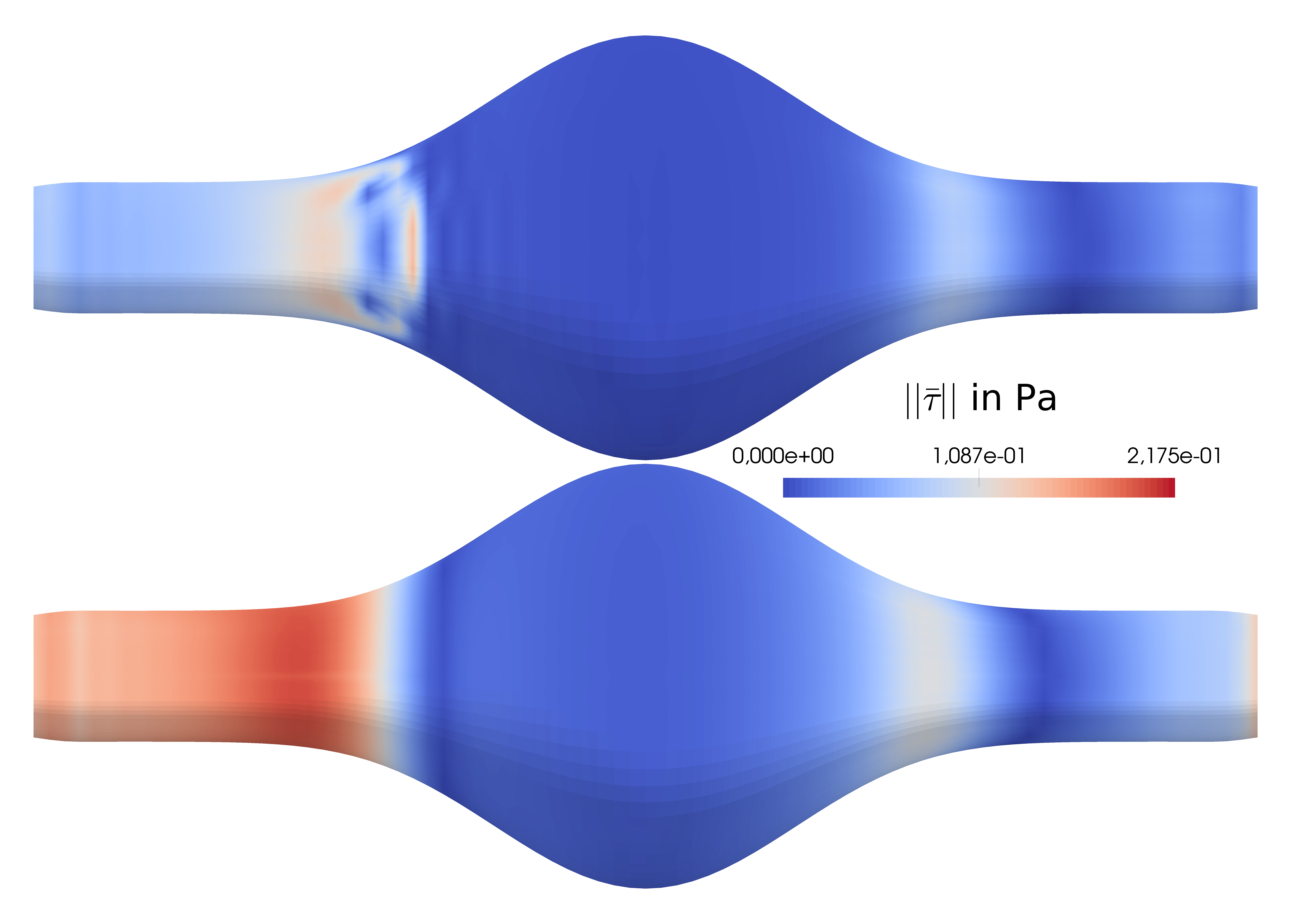}
			%\put(-40,75){$\hInt$}
		}
	\end{minipage}
	\hfil
	\begin{minipage}{.48\linewidth}
		\centering
		\subfloat[$\bar{\dot{\gamma}}$ in cut fluid domain $\Oft$.]{
			\label{fig:AAA_t_avg_gamma}
			\includegraphics[width=0.97\textwidth]{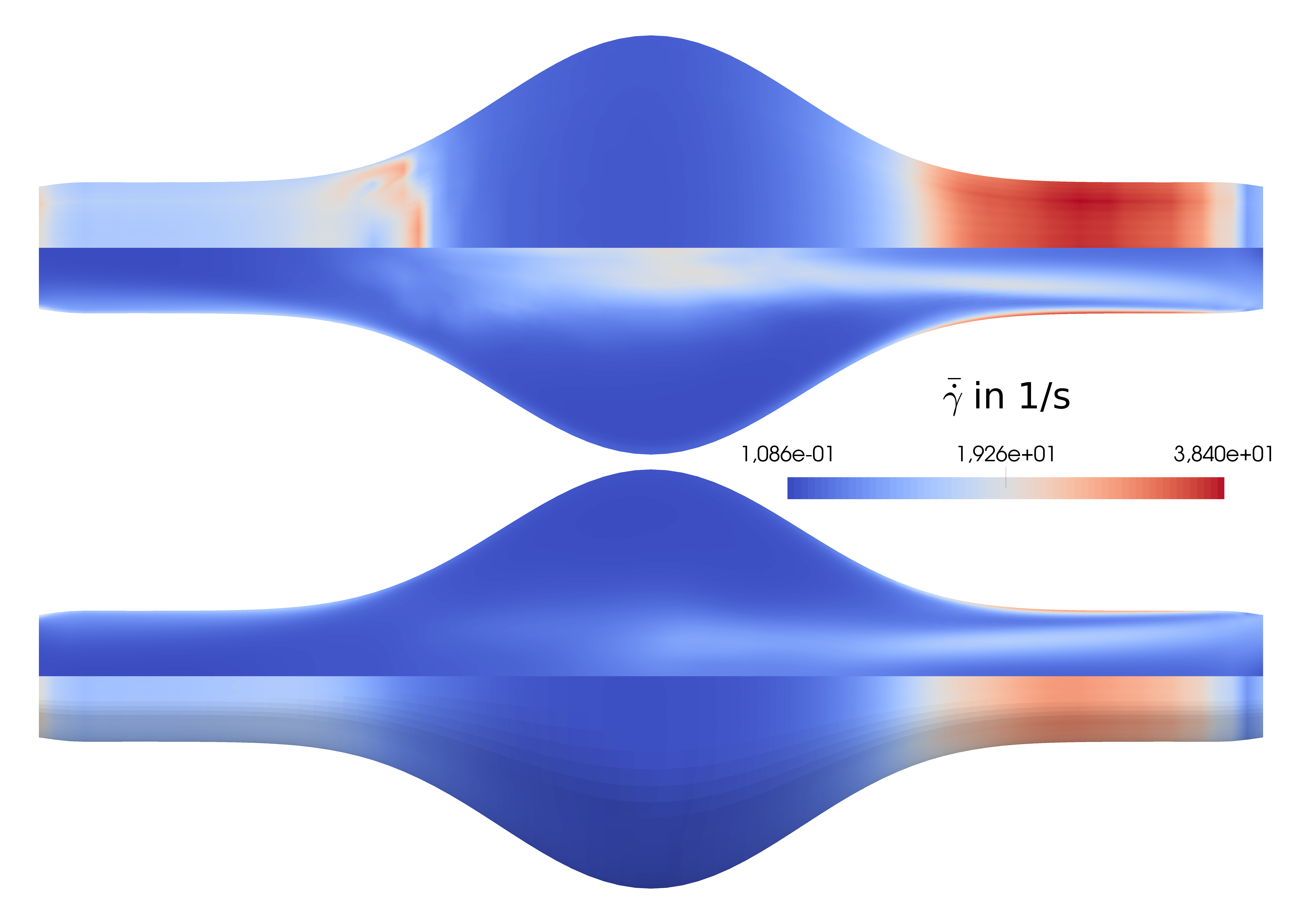}
			%\put(-10,72.5){$\hInt$}
		}
	\end{minipage}%
	\caption{Time averaged quantities in anterior-posterior view, inlet on the right: the symmetric solutions show vast differences using Newtonian (top row) and Carreau (bottom row) models.}
	\label{fig:AAA_compare_t_avg}
\end{figure}
\begin{figure}%[!htbp]
	\begin{minipage}{.48\linewidth}
		\centering
		\subfloat[Coupling steps per time step.]{
			\label{fig:AAA_fsi_iters_accum0}
			\includegraphics[width=0.97\textwidth]{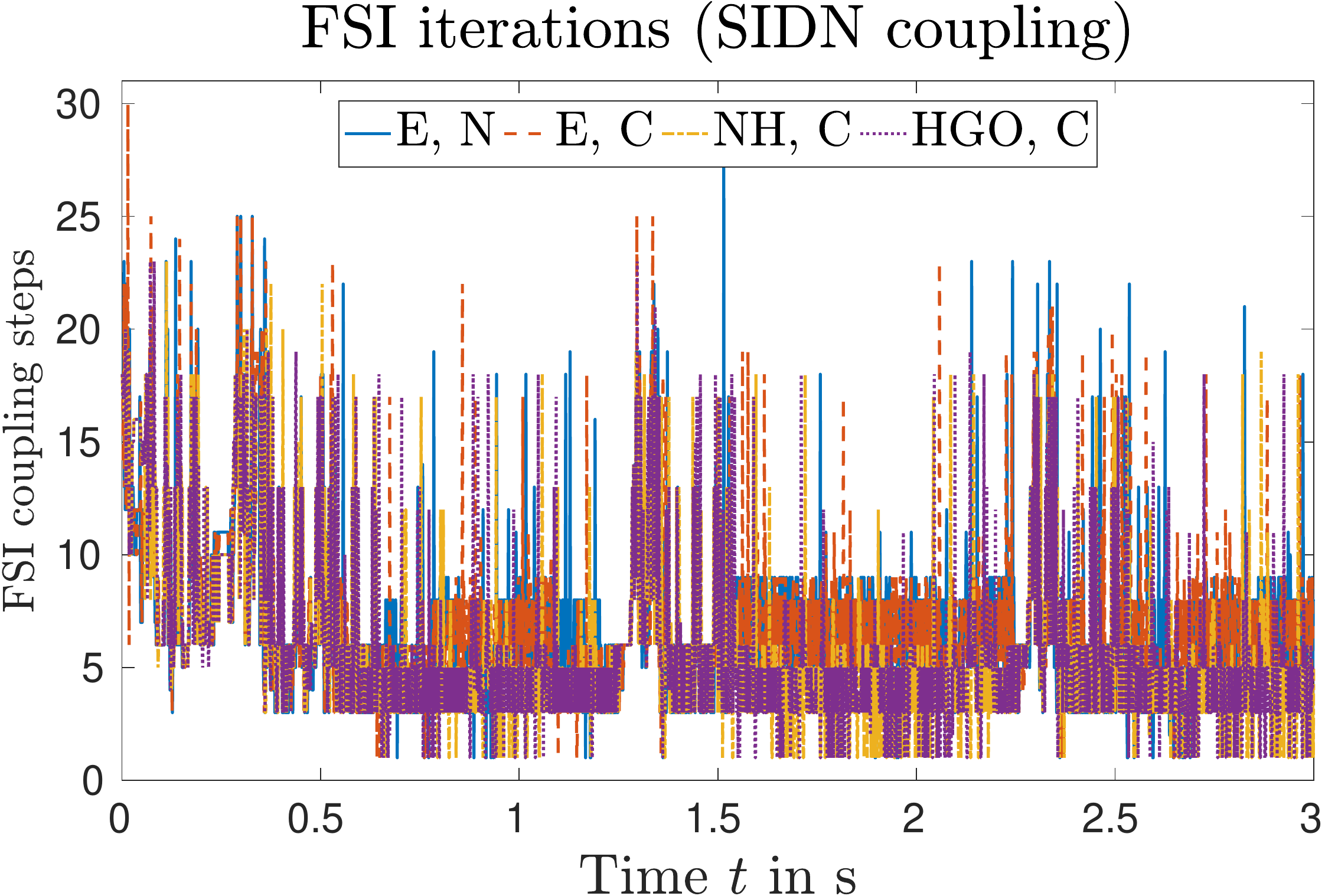}
			%\put(-40,75){$\hInt$}
		}
	\end{minipage}
	\hfil
	\begin{minipage}{.48\linewidth}
		\centering
		\subfloat[Accumulated coupling steps.]{
			\label{fig:AAA_fsi_iters_accum1}
			\includegraphics[width=0.97\textwidth]{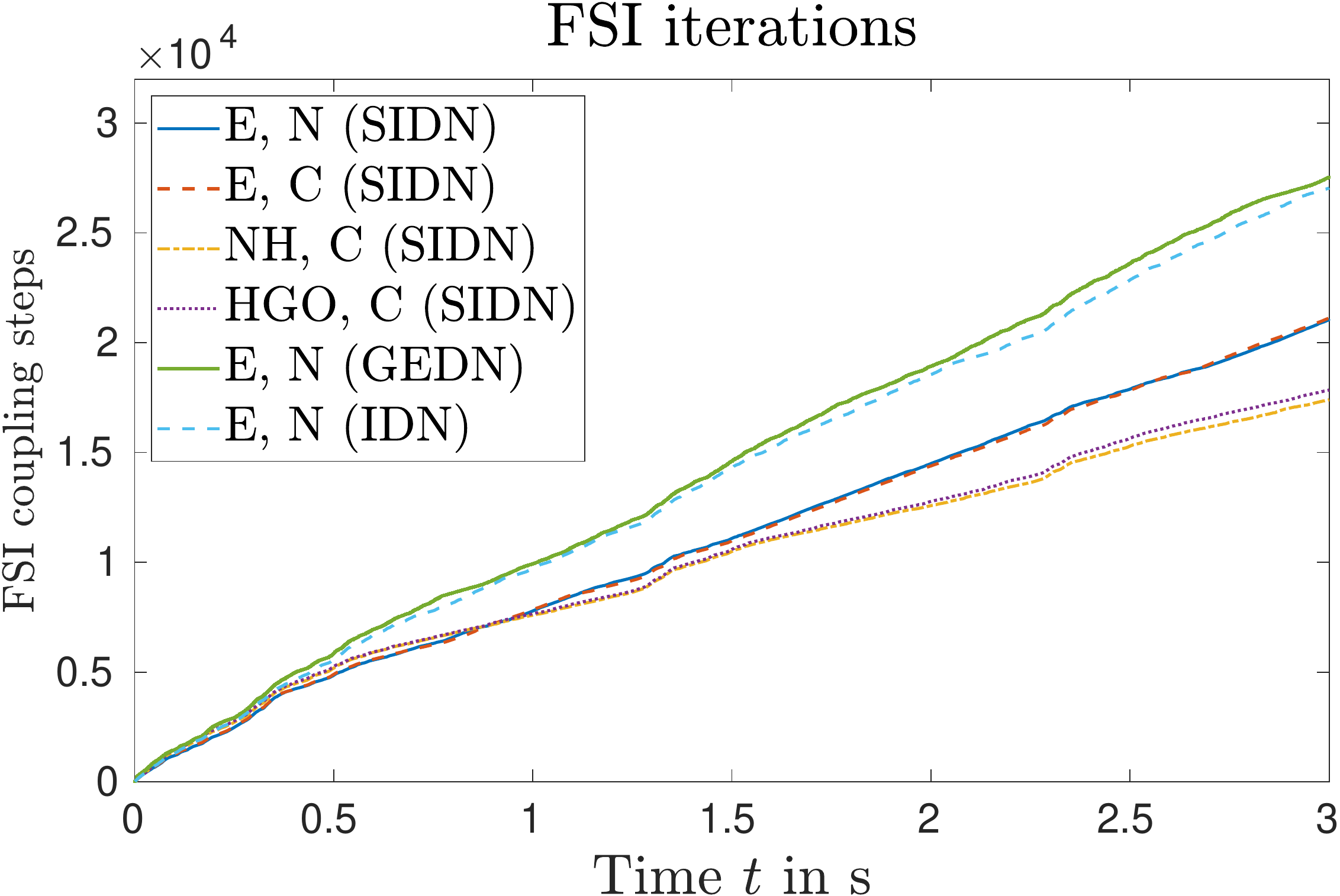}
			%\put(-10,72.5){$\hInt$}
		}
	\end{minipage}%
	\caption{Coupling steps needed using the semi-implicit (SIDN), implicit (IDN) or geometry explicit (GEDN) Dirichlet--Neumann schemes: iterations per time step (left) and accumulated (right) comparing linear elasticity~(E), neo-Hookean~(NH) or Holzapfel--Gasser--Ogden~(HGO) material models for the solid phase and Newtonian~(N) or Carreau~(C) fluids.}
	\label{fig:AAA_fsi_iters}
\end{figure}

Regarding the iteration counts in the subproblems, consider first systems solved only once per time step. Solving the linear system corresponding to the fluid momentum balance equations needs an almost constant number of 3 iterations to reach a relative error of $\leq 10^{-8}$ applying the AMG-preconditioned FGMRES method using a Chebyshev smoother due to small enough time steps. The iteration counts over time are depicted in Figure~\ref{fig:AAA_iters_uf_df} together with the iteration counts needed so solve the mesh motion equation with an AMG-preconditioned CG solver. The mass matrix corresponding to the viscosity projection step is lumped and therefore easily inverted, simply scaling the assembled right-hand side by a vector representing the diagonal. The iteration counts for the Leray projection step stay nicely bounded around $\approx 35$. For brevity, we do not show a corresponding plot, since it behaves similar to the PPE and is only computed once per time step -- in fact, one can combine PPE and Leray projection steps of the past velocities when using divergence damping, solving only one Poisson problem for a combined variable. This detail does barely pay off using the SIDN scheme, which is why we refer to \cite{Pacheco2021b} for further details.  

\begin{figure}%[!htbp]
	\begin{minipage}{.48\linewidth}
		\centering
		\subfloat[FGMRES steps needed for linear solver convergence in the fluid momentum balance.]{
			\includegraphics[width=0.97\textwidth]{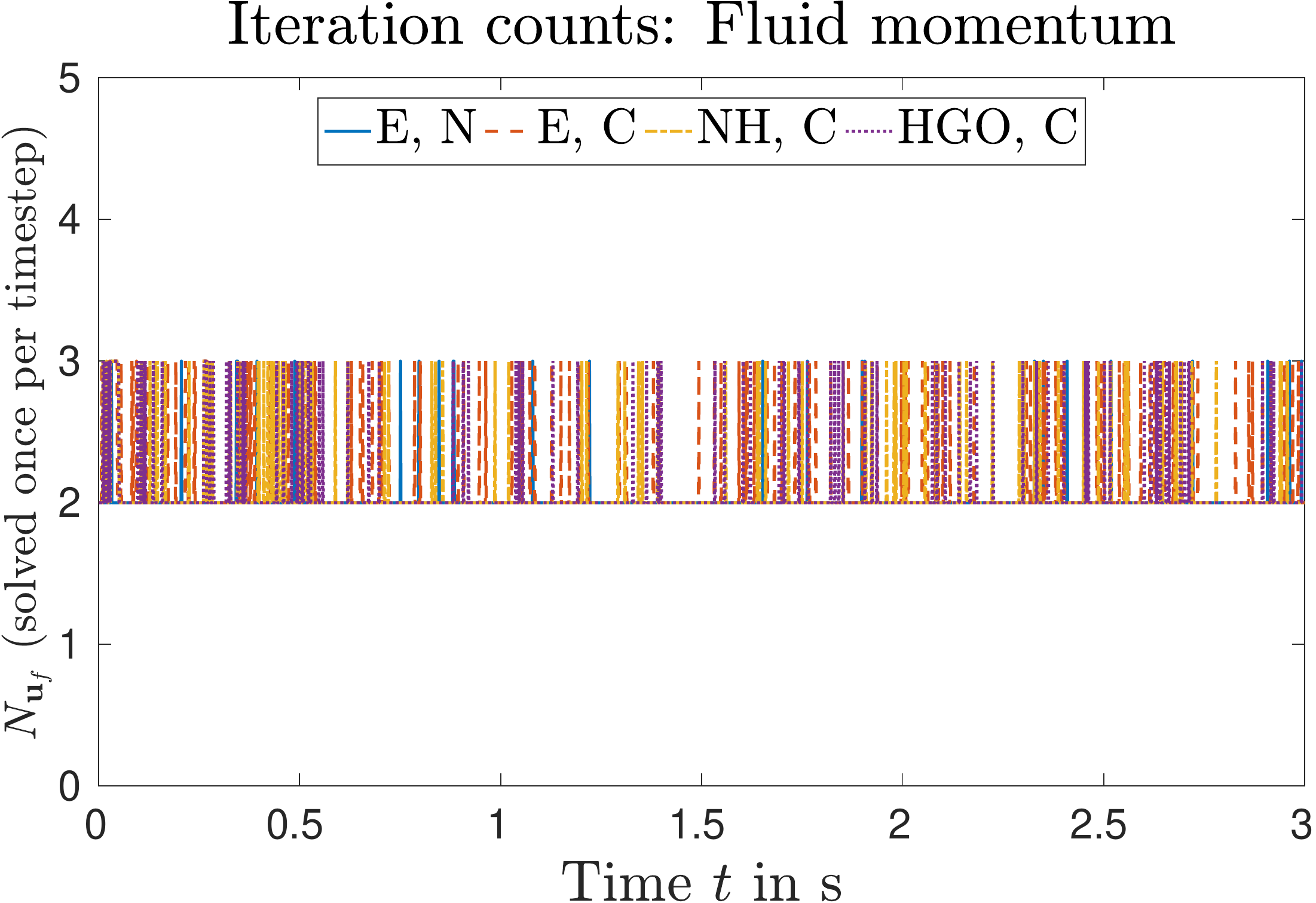}
			%\put(-40,75){$\hInt$}
		}
	\end{minipage}
	\hfil
	\begin{minipage}{.48\linewidth}
		\centering
		\subfloat[CG steps needed for linear solver convergence in the mesh motion equation.]{
			\includegraphics[width=0.97\textwidth]{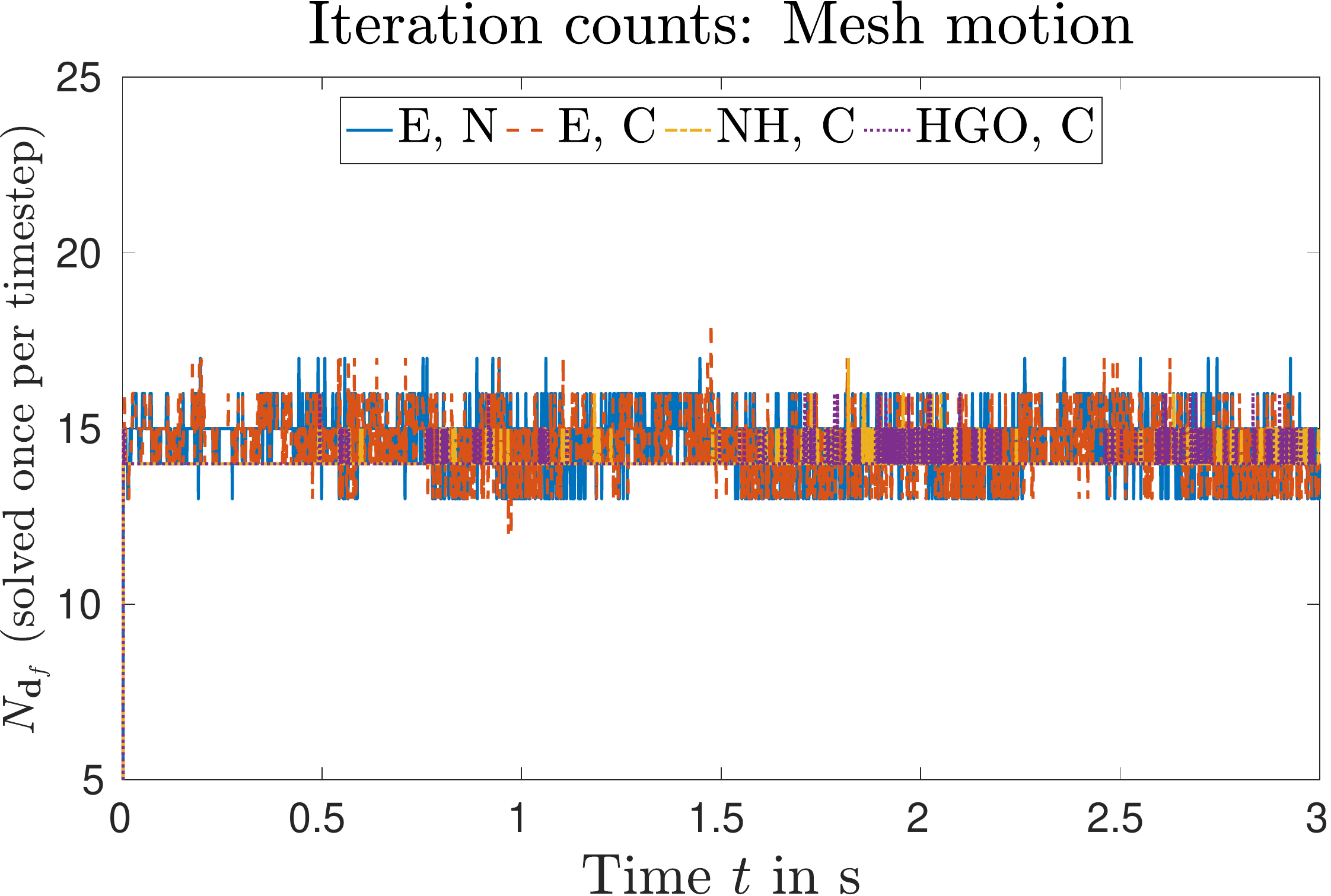}
			%\put(-10,72.5){$\hInt$}
		}
	\end{minipage}%
	\caption{Iteration counts in the fluid momentum balance (left) and mesh motion (right) solvers using linear elasticity~(E), neo-Hookean~(NH) or Holzapfel--Gasser--Ogden~(HGO) material models for the solid phase, Newtonian~(N) or Carreau~(C) fluids and the SIDN scheme.}
	\label{fig:AAA_iters_uf_df}
\end{figure}

For the overall performance, the steps within the semi-implicit coupling loop, namely the PPE and solid momentum balance solves are dominant in terms of computational effort. Thus, the good performance of the linear solvers as depicted in Figure~\ref{fig:AAA_pf_ds} is crucial. We observe little dependence on the fluid velocity and pressure level, given the relatively small time step size. Additionally, nonlinear convergence in the Newton solver is reached in any time step within 3 iterations, which is also a result of the small time step size and the quadratic extrapolation used as an initial guess. Comparing the constitutive models, the linear solvers show equal behaviour independent of the fluid model employed. The solid model, however, heavily influences the iteration count in the linear system solve, resulting from the more complex terms and matrices when considering NH or HGO models.  

\begin{figure}%[!htbp]
	\begin{minipage}{.48\linewidth}
		\centering
		\subfloat[CG steps needed for linear solver convergence in the PPE.]{
			\includegraphics[width=0.97\textwidth]{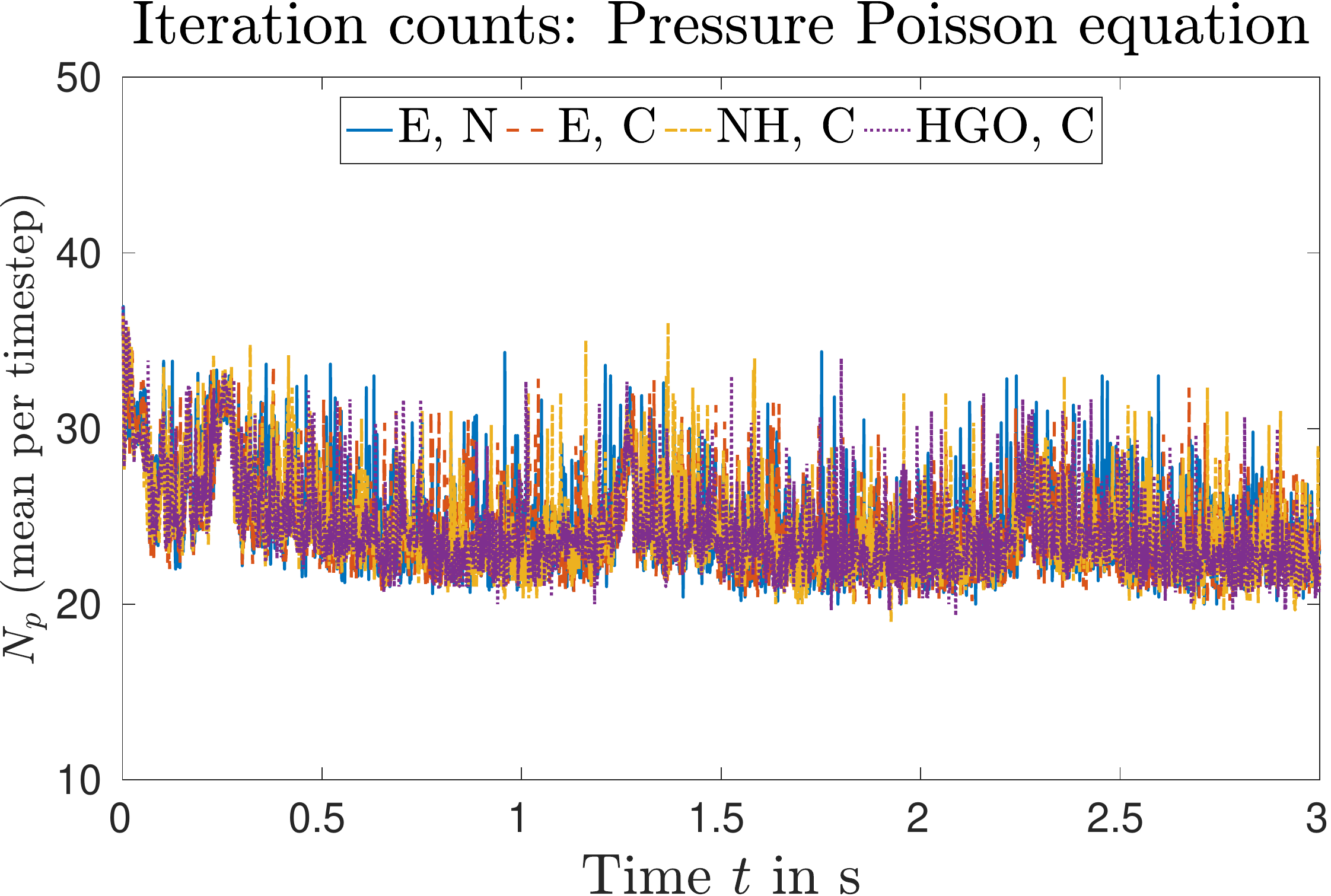}
			%\put(-40,75){$\hInt$}
		}
	\end{minipage}
	\hfil
	\begin{minipage}{.48\linewidth}
		\centering
		\subfloat[FGMRES steps needed for linear solver convergence in the solid momentum balance.]{
			\includegraphics[width=0.97\textwidth]{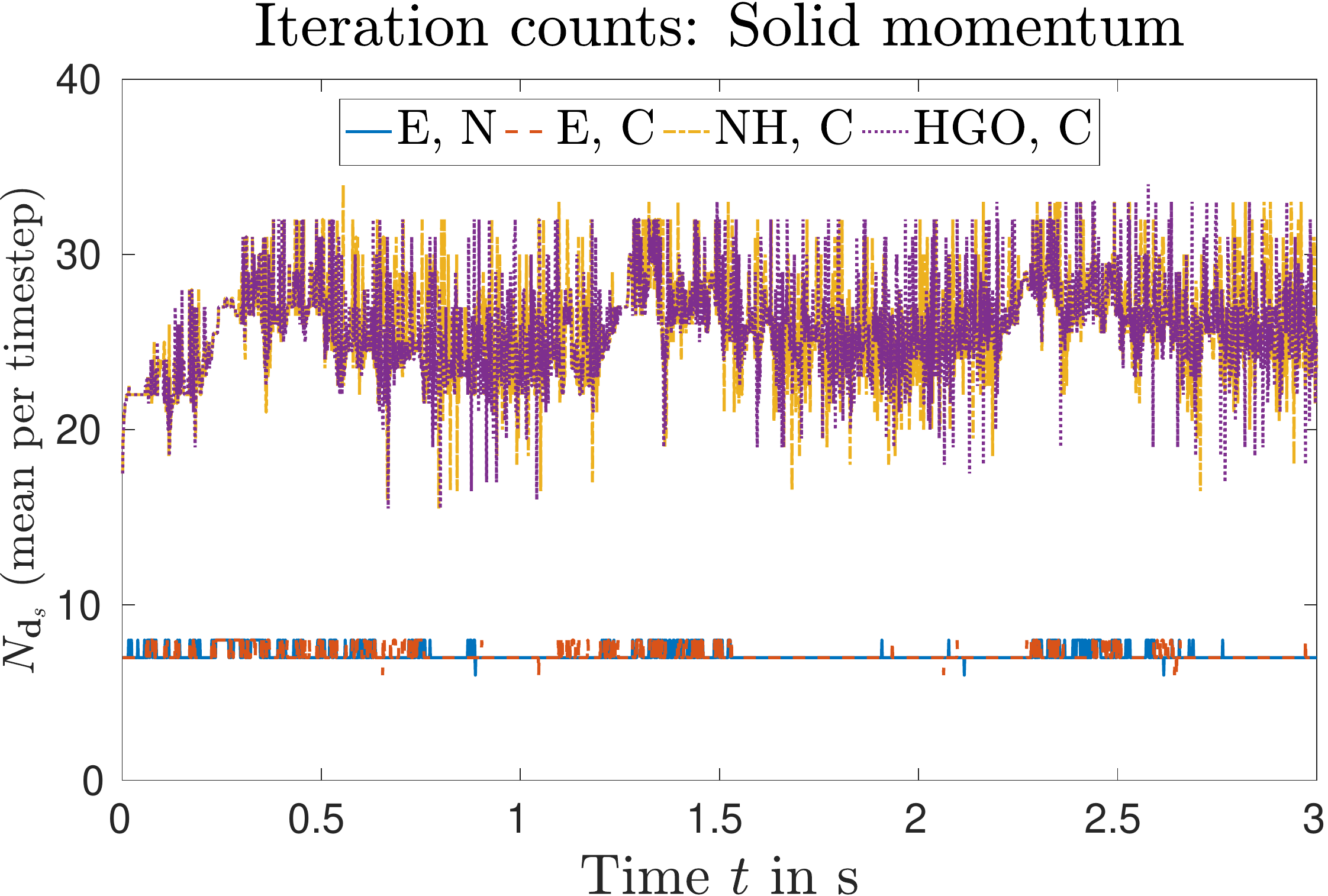}
			%\put(-10,72.5){$\hInt$}
		}
	\end{minipage}%
	\caption{Iteration counts in the PPE (left) and solid momentum balance (right) solvers using linear elasticity~(E), neo-Hookean~(NH) or Holzapfel--Gasser--Ogden~(HGO) material models for the solid phase, Newtonian~(N) or Carreau~(C) fluids and the SIDN scheme.}
	\label{fig:AAA_pf_ds}
\end{figure}

Summing up, the solutions of the linear systems corresponding to each of the subproblems are obtained within a satisfactory number of iterations and show low iteration counts given the relevant parameter ranges including realistic Reynold's numbers and tissue stiffness. As a last note on the computational performance of the implementation, compare the absolute and relative time spent per time step using different constitutive models. As can be seen from Table~\ref{tab:AAA_compare_time_table}, the number of FSI coupling steps needed for convergence are slightly higher for stiffer material parameters, while the viscosity projection step is negligible in terms of computational effort. The solid constitutive models, however, do have a strong influence on the time spent per time step. This additional effort is caused by both the increased complexity of element integration and $\approx2.5\times$ more iterations in the linear system solve (see also Figure~\ref{fig:AAA_pf_ds}). The SIDN variant outperforms the GEDN and IDN schemes being at least twice as fast overall, needing less FSI iterations and less time per coupling step. Interestingly, the number of coupling steps increases when using the GEDN scheme, but individual steps and the overall time interval are completed faster compared to the IDN scheme, since the mesh motion equation is only solved once per time step.

\begin{table}%[ht!]
	\centering
	\begin{tabular}{||l|c|c|c||} 
		\hline
		&&&\\[-2ex]
		& computing time & FSI steps & time/FSI step \\
		\hline\hline
		&&&\\[-1ex]
		E, N (SIDN)   & $5.53\times10^5~\text{s}$ (100.0\%) & 21046 (121.0\%) & $26.3~\text{s}$ (100.0\%)\\[0.5ex]
		E, C (SIDN)   & $5.57\times10^5~\text{s}$ (100.1\%) & 21118 (121.4\%) & $26.4~\text{s}$ (100.4\%)\\[0.5ex]
		NH, C (SIDN)  & $9.22\times10^5~\text{s}$ (166.7\%) & 17395 (100.0\%) & $53.0~\text{s}$ (201.5\%)\\[0.5ex]
		HGO, C (SIDN) & $1.09\times10^6~\text{s}$ (196.4\%) & 17846 (102.6\%) & $56.5~\text{s}$ (214.8\%)\\[0.5ex]
		E, N (GEDN)   & $1.07\times10^6~\text{s}$ (193.5\%) & 27532 (158.3\%) & $38.9~\text{s}$ (147.9\%)\\[0.5ex]
		E, N (IDN)    & $1.41\times10^6~\text{s}$ (255.0\%) & 27045 (155.5\%) & $52.3~\text{s}$ (198.9\%)
		\\[1.4ex]
		\hline
	\end{tabular}
	\caption{Absolute (relative) computing times using linear elasticity~(E), neo-Hookean~(NH) or Holzapfel--Gasser--Ogden~(HGO) material models for the solid phase, Newtonian~(N) or Carreau~(C) fluids and the semi-implicit~(SIDN), implicit~(IDN) or geometry explicit~(GEDN) Dirichlet--Neumann coupling schemes.}
	\label{tab:AAA_compare_time_table}
\end{table}
\begin{table}%[ht!]
	\centering
	\begin{tabular}{||c|c|c|c|c|c|c||} 
		\hline
		&&&&&&\\[-2ex]
		& $\ve{d}_f$ & $\ve{u}_f$ & $\psi$ & $\zeta$ & $p_f$ & $\ve{d}_s$ \\
		\hline\hline
		&&&&&&\\[-1ex]
		E, N   & 6.0 & 8.0 & 4.8 & 2.6 & (6.4+4.2+19.7) & (20.3+4.9+16.2) \\[0.5ex]
		E, C   & 5.9 & 8.0 & 4.9 & 2.6 & (6.3+4.2+19.5) & (20.2+4.9+16.2) \\[0.5ex]
		NH, C  & 2.3 & 2.2 & 3.3 & 0.8 & (1.6+2.6+14.5) & (29.5+5.1+35.7) \\[0.5ex]
		HGO, C & 1.6 & 1.7 & 2.2 & 0.6 & (1.3+1.9+10.4) & (47.6+3.7+26.9) 
		\\[1.4ex]
		\hline
	\end{tabular}
	\caption{Relative computing times of relevant steps, indicated by the unknown solved for in the SIDN scheme: in \%, split into (assembly+AMG setup+linear solve) for PPE and solid momentum balance using linear elasticity~(E), neo-Hookean~(NH) or Holzapfel--Gasser--Ogden~(HGO) material models for the solid phase and Newtonian~(N) or Carreau~(C) fluids.}
	\label{tab:AAA_compare_iter_steps}
\end{table}

Timings for the computationally relevant steps within the SIDN coupling scheme are further listed in Table~\ref{tab:AAA_compare_iter_steps}, where we notice the increased (relative) time spent in those steps. Also, this demonstrates how little time is spent on the semi-implicit steps, being the mesh motion equation ($\ve{d}_f$), fluid momentum balance ($\ve{u}_f$), Leray projection ($\psi$) and viscosity projection, the last one of which is not even listed since the time spent is less than $0.5\%$. The pressure boundary projection ($\zeta$) is performed in each coupling step, but still does not lead to a major computational load. Solving the PPE ($p_f$) and solid momentum balance ($\ve{d}_s$) has the largest influence on the overall time spent in the SIDN scheme, where one may further optimise element integration routines and linear solvers. Nonetheless, iteration counts are nicely bounded, indicating good performance of the chosen setup. Parallel scalability of the overall algorithm is not investigated at this point, but aspect of future investigations.

\section{Concluding remarks}

Within this work, we presented a family of coupling schemes involving incompressible viscous flows interacting with three-dimensional solids. The standard Navier--Stokes equations were solved by decoupling velocity and pressure via an equivalent time-splitting or split-step scheme in arbitrary Lagrangian-Eulerian formulation, based on a pressure Poisson equation with consistent boundary conditions. Consequently, substantial parts of the algorithm can be treated semi-implicitly in an added-mass-stable way without degrading accuracy. The mesh motion equation, fluid momentum balance equation, Leray projection (included for improved mass conservation) and viscosity projection are only solved once per time step. Generalised Newtonian rheological laws are exchanged effortlessly, while also allowing for equal order, $C^0$-continuous interpolation with Lagrangian finite elements. Moreover, we considered a general setup including various constitutive models applied for the solid, demonstrating the flexibility of the framework and robustness of the coupling algorithm. The accuracy of the scheme is assessed comparing to analytical solutions in space, applying $Q_2/Q_1$ and $Q_1/Q_1$ finite element pairs, and in time, combining BDF-2 with generalised-$\alpha$ time integration schemes. The framework is then further tested in a benchmark example in the context of arterial blood flow and an idealised abdominal aortic aneurysm, highlighting its great performance even with black-box solvers and preconditioners.

\revised{At this point, some open questions remain to be further analysed in the future: First, we note that owing to the extrapolation of the fluid pressure in time needed in the split-step scheme for the fluid problem, the restrictions on the time step size grow stronger with increasing the temporal order of accuracy. Thus, using high-order time stepping schemes beyond the presented second-order time integrators might reduce the maximum allowed step size drastically. %Second, note that the time-splitting scheme applied for the fluid only delivers the same convergence rate in the $H^1$ norm of the velocity error as a $Q_2/Q_1$ pair in a coupled velocity-pressure formulation, but not in the $L^2$ norm \textcolor{red}{(I would remove this sentence. I think it might bring more confusion than clarity)}.
Moreover, when using even higher polynomial orders, the splitting error of the scheme itself might dominate spatial higher-order accuracy as is observed, e.g., for projection-schemes. Second, we want to mention that the Robin variants of the coupling scheme did not perform as expected in our studies. Depending on the parameters, the considered Robin variants can substantially decrease the number of coupling steps needed, but the correct settings remained elusive, demanding further investigation to not spoil overall stability. That being said, the acceleration scheme used herein is Aitken's relaxation for simplicity, but much more advanced schemes can be applied straight-forwardly. In this regard, first tests with the interface quasi-Newton inverse least-squares method \cite{Degroote2009,Degroote2013} are very promising.}

Future and ongoing work is centered around scalability tests and algorithmic optimisation to further reduce runtimes. From a modelling perspective, other constitutive equations and mixed/hybrid formulations for the solid phase will be included, harnessing the partitioned design. Moreover, we are testing semi-implicit coupling to thrombus formation models, which is a central element in various cardiovascular conditions.

\section*{Acknowledgements}
The authors gratefully acknowledge Graz University of Technology for the financial support of the Lead-project: Mechanics, Modeling and Simulation of Aortic Dissection.

\section*{Competing interests}
The authors declare that they have no known competing financial interests or personal relationships that could have appeared to influence the work reported in this paper.

\section*{Appendix}
Proof of equivalence of systems \eqref{eqn:fluid_splitstep_mom}--\eqref{eqn:fluid_splitstep_BC_pressure_neumann} and \eqref{eqn:fluid_u_p_mom}--\eqref{eqn:fluid_u_p_robin} for sufficiently regular $p_f, \ve{u}_f, \ve{g}_f, \ve{t}_f, \ve{h}_f$ (Theorem~\ref{theorem:equivalence}).

\begin{proof}
	We start by showing first that 
	\eqref{eqn:fluid_u_p_mom}--\eqref{eqn:fluid_u_p_robin} imply \eqref{eqn:fluid_splitstep_mom}--\eqref{eqn:fluid_splitstep_BC_pressure_neumann}, where the stress term in the momentum balance equation \eqref{eqn:fluid_u_p_mom} is simply rewritten according to~\eqref{eqn:fluid_full_sigma_f} to obtain \eqref{eqn:fluid_splitstep_mom}. The Dirichlet conditions on the pressure~\eqref{eqn:fluid_splitstep_BC_pressure_dirichlet_onGNf} is obtained by \revised{taking the scalar product of} the traction boundary condition~\eqref{eqn:fluid_BC_full_traction} and $\nf$ first
	\begin{align*}
		\nf \cdot \left(
		\te{\sigma}_f \nf - \ve{t}_f
		\right) = \ve{0}
		\quad
		\Leftrightarrow
		\quad
		\nf \cdot \left(2\mu_f\nabla^S\ve{u}_f \nf \right) - \nf\cdot\ve{t}_f &= \nf \cdot \left(p_f\te{I}\nf\right) = p_f
	\end{align*}
	and then subtracting $\mu_f \nabla\cdot\ve{u}_f$ from the left hand side, which is admissible due to \mbox{$\nabla\cdot\ve{u}_f = 0$}. Analogously, a boundary condition based on the Robin condition \eqref{eqn:fluid_u_p_robin} is derived.
	The Neumann condition on the pressure~\eqref{eqn:fluid_splitstep_BC_pressure_neumann} results from \revised{taking the scalar product of} the momentum balance equation~\eqref{eqn:fluid_u_p_mom} and $\nf$ and using
	\begin{align}
		-\nabla \cdot \te{\sigma}_f &= -\nabla\cdot\left\{ -\left(p_f\te{I}\right) + \mu_f \left[ \nabla \ve{u}_f + \left( \nabla \ve{u}_f \right)^T\right]\right\} \nonumber\\
		& = \nabla p_f - \mu_f \nabla \left(\nabla \cdot \ve{u}_f\right) - \mu_f \Delta \ve{u}_f - 2 \nabla^S\ve{u}_f\nabla\mu_f\nonumber\\
		& = \nabla p_f - \mu_f \Delta \ve{u}_f - 2 \nabla^S\ve{u}_f\nabla\mu_f
		\label{eqn:div_sigma_rewrite}
	\end{align}
	for $\nabla \cdot \ve{u}_f = 0$ together with
	\begin{align}
		\Delta \ve{u}_f 
		\equiv\nabla\left(\nabla\cdot\ve{u}_f\right) 
		- \nabla\times\left(\nabla \times \ve{u}_f\right)
		= - \nabla\times\left(\nabla \times \ve{u}_f\right)
		\,,
		\label{eqn:laplace_u_is_minus_curlcurl_u}
	\end{align}
	which, restricted to $\GDf$, directly gives~\eqref{eqn:fluid_splitstep_BC_pressure_neumann}. Similarly, taking (minus) the divergence of the balance of linear momentum, we get
	\begin{align*}
		\ve{0}
		&=
		-
		\rho_f \nabla\cdot \left[
		\dtale{\ve{u}_f}
		+
		\nabla \ve{u}_f \left(\ve{u}_f - \ve{u}_m \right)
		\right]
		+
		\nabla\cdot \left(
		\nabla\cdot \te{\sigma}_f 
		\right) 
		\\
		&= 
		\rho_f 
		\dtale{\left(\nabla \cdot\ve{u}_f \right)}
		-
		\nabla\cdot
		\left[ 
			\rho_f \nabla \ve{u}_f \left(\ve{u}_f - \ve{u}_m\right)
			-
			2\nabla^S\ve{u}_f\nabla\mu_f
		\right]
		+ \nabla\cdot \left(
		\mu_f \Delta \ve{u}_f\right)
		-\Delta p_f
		\\
		&= 
		-
		\nabla\cdot
		\left[ 
		\rho_f \nabla \ve{u}_f \left(\ve{u}_f - \ve{u}_m\right)
		-
		2\nabla^S\ve{u}_f\nabla\mu_f
		\right]
		+ \nabla\cdot \left(
		\mu_f \Delta \ve{u}_f\right)
		-\Delta p_f
		\, ,
	\end{align*}
	but with \eqref{eqn:laplace_u_is_minus_curlcurl_u} we can rewrite
	\begin{align}
		\nabla\cdot\left(\mu_f\Delta\ve{u}_f\right) 
		=
		-\nabla\cdot\left[
		\mu_f \nabla\times\left(\nabla\times\ve{u}_f\right)
		\right]
		=
		-\left[
		\nabla\times\left(\nabla\times\ve{u}_f\right)
		\right]\cdot \nabla \mu_f
		\, , 
		\label{eqn:div_mu_laplace_u_is_minus_curl_curl_u_dot_grad_mu}		
	\end{align}
	which then finally gives Equation~\eqref{eqn:fluid_splitstep_ppe}. The additional Equation~\eqref{eqn:fluid_split_step_div_u0} is simply the continuity equation at $t=0$, thus completing the first part. To finish the proof of equivalence, we start from the other direction, i.e., show that \eqref{eqn:fluid_splitstep_mom}--\eqref{eqn:fluid_splitstep_BC_pressure_neumann} imply \eqref{eqn:fluid_u_p_mom}--\eqref{eqn:fluid_u_p_robin} by first taking minus the divergence of~\eqref{eqn:fluid_splitstep_mom} and adding it to~\eqref{eqn:fluid_splitstep_ppe}, which gives
	\begin{gather*}
	    -\rho_f \nabla \cdot \left( \dtale{\ve{u}_f}  \right)
	    +\nabla\cdot \left[ \nabla \cdot \left( 2 \mu_f \nabla^S \ve{u}_f \right) \right]
	    - \nabla \cdot \left[ 2\nabla^S\ve{u}_f\nabla \mu_f \right]
	    + \left[\nabla\times\left(\nabla\times\ve{u}_f\right)\right] \cdot \nabla \mu_f = 0
	    \, ,
	\end{gather*}
	where we can further use relations similar to~\eqref{eqn:laplace_u_is_minus_curlcurl_u} and ~\eqref{eqn:div_sigma_rewrite} to get
	\begin{gather*}
	    -\rho_f \nabla \cdot \left( \dtale{\ve{u}_f}  \right)
	    +\nabla\cdot \left[ \mu_f\nabla(\nabla\cdot\ve{u}_f) + \mu_f \Delta \ve{u}_f \right]
	    +\left[ \nabla (\nabla \cdot\ve{u}_f) - \Delta \ve{u}_f \right] \cdot \nabla \mu_f
	    =0
	    \, ,
	\end{gather*}
	where we use~\eqref{eqn:laplace_u_is_minus_curlcurl_u} and~\eqref{eqn:div_mu_laplace_u_is_minus_curl_curl_u_dot_grad_mu}, but without inserting $\nabla\cdot\ve{u}_f=0$, to finally arrive at
	\begin{gather}
	    \dtale{(\nabla \cdot \ve{u}_f)}
	    - \nabla \cdot \left[ 2 \frac{\mu_f}{\rho_f} \nabla (\nabla \cdot \ve{u}_f) \right]
	    = 0
	    \, ,
	    \label{eqn:auxiliary_heat_eqn}
	\end{gather}
	being a heat equation in the new variable $\Phi := \nabla \cdot \ve{u}_f$ in ALE form. Neumann conditions for~\eqref{eqn:auxiliary_heat_eqn} are obtained by \revised{taking the scalar product of}~\eqref{eqn:fluid_splitstep_mom} and $\ve{n}_f$ and adding the result to~\eqref{eqn:fluid_splitstep_BC_pressure_neumann}, which gives
	\begin{align*}
	    \ve{n}_f \cdot 
	    \left\{ 
	        - \nabla \cdot (2 \mu_f \nabla^S \ve{u}_f) 
	        - \mu_f \left[ \nabla \times (\nabla \times \ve{u}_f) \right]
	        + 2 \nabla^S \ve{u}_f
	    \right\} = 0
	    \, .
	\end{align*}
	Further using once again ~\eqref{eqn:laplace_u_is_minus_curlcurl_u} and ~\eqref{eqn:div_sigma_rewrite}, we end up with
	\begin{align}
		0 = \ve{n}_f \cdot 
	         \left[ -2 \mu_f \nabla (\nabla \cdot \ve{u}_f )\right]
		   \quad \therefore \quad 
		   \ve{n}_f \cdot \nabla \Phi = 0 \quad \text{on } \GDf \, .
		   \label{eqn:auxiliary_heat_eqn_BC_dirichlet}
	\end{align}
	Dirichlet conditions for~\eqref{eqn:auxiliary_heat_eqn} are constructed from~\eqref{eqn:fluid_splitstep_BC_pressure_dirichlet_onGNf} and~\eqref{eqn:fluid_splitstep_BC_pressure_dirichlet_onGRf}, which after inserting the definition of $\ve{t}_f$ and $\ve{h}_f$, respectively, directly result in
	\begin{align}
		- \mu_f \nabla \cdot \ve{u}_f = 0
		\quad \therefore \quad
		\Phi = 0 \quad \text{on } \partial\Oft\setminus\GDf
		\,.
		\label{eqn:auxiliary_heat_eqn_BC_neumann}
	\end{align}
	Given these zero Neumann and Dirichlet conditions for $\Phi$, and assuming the geometric conservation law is fulfilled, i.e., a constant quantity is preserved on the moving grid (cf. \cite{Forster2006,Guillard2000,Farhat2001,Boffi2004}) we obtain $\Phi\equiv0$ as the only admissible solution to~\eqref{eqn:auxiliary_heat_eqn}--\eqref{eqn:auxiliary_heat_eqn_BC_neumann}. Consequently, the modified system also inherently enforces incompressibility and we conclude the proof by quickly noting that equivalence of~\eqref{eqn:fluid_splitstep_mom} and~\eqref{eqn:fluid_u_p_mom} is easily shown using~\eqref{eqn:div_sigma_rewrite} again.
\end{proof}

\bibliographystyle{unsrtnat}
\bibliography{references}

\begin{thebibliography}{172}
\providecommand{\natexlab}[1]{#1}
\providecommand{\url}[1]{\texttt{#1}}
\expandafter\ifx\csname urlstyle\endcsname\relax
  \providecommand{\doi}[1]{doi: #1}\else
  \providecommand{\doi}{doi: \begingroup \urlstyle{rm}\Url}\fi

\bibitem[Machairas et~al.(2018)Machairas, Kontogiannis, Karakalas, Solomou,
  Riziotis, and Saravanos]{Machairas2018}
T.~Machairas, A.~Kontogiannis, A.~Karakalas, A.~Solomou, V.~Riziotis, and
  D.~Saravanos.
\newblock {Robust fluid-structure interaction analysis of an adaptive airfoil
  using shape memory alloy actuators}.
\newblock \emph{Smart Mater Struct}, 27\penalty0 (10):\penalty0 105035, 2018.

\bibitem[Bazilevs et~al.(2011)Bazilevs, Hsu, Akkerman, Wright, Takizawa,
  Henicke, Spielman, and Tezduyar]{Bazilevs2011}
Y.~Bazilevs, M.-C. Hsu, I.~Akkerman, S.~Wright, K.~Takizawa, B.~Henicke,
  T.~Spielman, and T.E. Tezduyar.
\newblock {3D simulation of wind turbine rotors at full scale. Part I: Geometry
  modeling and aerodynamics}.
\newblock \emph{Int J Numer Methods Fluids}, 65\penalty0 (1-3):\penalty0
  207--235, 2011.

\bibitem[Helgedagsrud et~al.(2019)Helgedagsrud, Bazilevs, Mathisen, and
  {\O}iseth]{Helgedagsrud2019}
T.A. Helgedagsrud, Y.~Bazilevs, K.M. Mathisen, and O.A. {\O}iseth.
\newblock {ALE-VMS methods for wind-resistant design of long-span bridges}.
\newblock \emph{J Wind Eng Ind Aerodyn}, 191:\penalty0 143--153, 2019.

\bibitem[Shiels et~al.(2001)Shiels, Leonard, and Roshko]{Shiels2001}
D.~Shiels, A.~Leonard, and A.~Roshko.
\newblock {Flow-induced vibration of a circular cylinder at limiting structural
  parameters}.
\newblock \emph{J Fluids Struct}, 15\penalty0 (1):\penalty0 3--21, 2001.

\bibitem[Bearman(2011)]{Bearman2011}
P.W. Bearman.
\newblock {Circular cylinder wakes and vortex-induced vibrations}.
\newblock \emph{J Fluids Struct}, 27\penalty0 (5-6):\penalty0 648--658, 2011.

\bibitem[Chu et~al.(2021)Chu, Ganesan, and Ali]{Chu2021}
Y.J. Chu, P.B. Ganesan, and M.A. Ali.
\newblock {Fluid–structure interaction simulation on flight performance of a
  dragonfly wing under different pterostigma weights}.
\newblock \emph{J Mech}, 37:\penalty0 216--229, 2021.
\newblock ISSN 1811-8216.

\bibitem[Bazilevs et~al.(2006)Bazilevs, Calo, Zhang, and Hughes]{Bazilevs2006}
Y.~Bazilevs, V.M. Calo, Y.~Zhang, and T.J.R. Hughes.
\newblock {Isogeometric Fluid–structure Interaction Analysis with
  Applications to Arterial Blood Flow}.
\newblock \emph{Comput Mech}, 38\penalty0 (4-5):\penalty0 310--322, 2006.

\bibitem[Crosetto et~al.(2011{\natexlab{a}})Crosetto, Deparis, Fourestey, and
  Quarteroni]{Crosetto2011a}
P.~Crosetto, S.~Deparis, G.~Fourestey, and A.~Quarteroni.
\newblock {Parallel Algorithms for Fluid-Structure Interaction Problems in
  Haemodynamics}.
\newblock \emph{SIAM J Sci Comput}, 33\penalty0 (4):\penalty0 1598--1622,
  2011{\natexlab{a}}.

\bibitem[Crosetto et~al.(2011{\natexlab{b}})Crosetto, Reymond, Deparis,
  Kontaxakis, Stergiopulos, and Quarteroni]{Crosetto2011b}
P.~Crosetto, P.~Reymond, S.~Deparis, D.~Kontaxakis, N.~Stergiopulos, and
  A.~Quarteroni.
\newblock {Fluid–structure interaction simulation of aortic blood flow}.
\newblock \emph{Comput Fluids}, 43\penalty0 (1):\penalty0 46--57,
  2011{\natexlab{b}}.

\bibitem[K{\"{u}}ttler et~al.(2010)K{\"{u}}ttler, Gee, F{\"{o}}rster,
  Comerford, and Wall]{Kuttler2010}
U.~K{\"{u}}ttler, M.~Gee, C.~F{\"{o}}rster, A.~Comerford, and W.A. Wall.
\newblock {Coupling strategies for biomedical fluid-structure interaction
  problems}.
\newblock \emph{Int J Numer Method Biomed Eng}, 2010.

\bibitem[Torii et~al.(2006)Torii, Oshima, Kobayashi, Takagi, and
  Tezduyar]{Torii2006}
R.~Torii, M.~Oshima, T.~Kobayashi, K.~Takagi, and T.E. Tezduyar.
\newblock {Computer modeling of cardiovascular fluid–structure interactions
  with the deforming-spatial-domain/stabilized space–time formulation}.
\newblock \emph{Comput Methods Appl Mech Eng}, 195\penalty0 (13-16):\penalty0
  1885--1895, 2006.

\bibitem[Schussnig et~al.(2021{\natexlab{a}})Schussnig, Rolf-Pissarczyk,
  Holzapfel, and Fries]{Schussnig2021PAMM}
R.~Schussnig, M.~Rolf-Pissarczyk, G.~Holzapfel, and T.-P. Fries.
\newblock {Fluid‐Structure Interaction Simulations of Aortic Dissection}.
\newblock \emph{PAMM}, 20\penalty0 (1), 2021{\natexlab{a}}.

\bibitem[Thomson et~al.(2005)Thomson, Mongeau, and Frankel]{Thomson2005}
S.L. Thomson, L.~Mongeau, and S.H. Frankel.
\newblock {Aerodynamic transfer of energy to the vocal folds}.
\newblock \emph{J Acoust Soc Am}, 118\penalty0 (3):\penalty0 1689--1700, 2005.

\bibitem[Wall and Rabczuk(2008)]{Wall2008}
W.A. Wall and T.~Rabczuk.
\newblock {Fluid–structure interaction in lower airways of CT-based lung
  geometries}.
\newblock \emph{Int J Numer Methods Fluids}, 57\penalty0 (5):\penalty0
  653--675, 2008.

\bibitem[Heil(2004)]{Heil2004}
M.~Heil.
\newblock {An efficient solver for the fully coupled solution of
  large-displacement fluid–structure interaction problems}.
\newblock \emph{Comput Methods Appl Mech Eng}, 193\penalty0 (1-2):\penalty0
  1--23, 2004.

\bibitem[Hughes et~al.(1981)Hughes, Liu, and Zimmermann]{Hughes1981}
T.J.R. Hughes, W.K. Liu, and T.K. Zimmermann.
\newblock {Lagrangian-Eulerian finite element formulation for incompressible
  viscous flows}.
\newblock \emph{Comput Methods Appl Mech Eng}, 29\penalty0 (3):\penalty0
  329--349, 1981.

\bibitem[{Le Tallec} and Mouro(2001)]{LeTallec2001}
P.~{Le Tallec} and J.~Mouro.
\newblock {Fluid structure interaction with large structural displacements}.
\newblock \emph{Comput Methods Appl Mech Eng}, 190\penalty0 (24-25):\penalty0
  3039--3067, 2001.

\bibitem[Leuprecht et~al.(2002)Leuprecht, Perktold, Prosi, Berk, Trubel, and
  Schima]{Leuprecht2002}
A.~Leuprecht, K.~Perktold, M.~Prosi, T.~Berk, W.~Trubel, and H.~Schima.
\newblock {Numerical study of hemodynamics and wall mechanics in distal
  end-to-side anastomoses of bypass grafts}.
\newblock \emph{J Biomech}, 35\penalty0 (2):\penalty0 225--236, 2002.

\bibitem[Forti et~al.(2017)Forti, Bukac, Quaini, Canic, and Deparis]{Forti2017}
D.~Forti, M.~Bukac, A.~Quaini, S.~Canic, and S.~Deparis.
\newblock {A Monolithic Approach to Fluid–Composite Structure Interaction}.
\newblock \emph{J Sci Comput}, 72\penalty0 (1):\penalty0 396--421, 2017.

\bibitem[Quaini and Quarteroni(2007)]{Quaini2007}
A.~Quaini and A.~Quarteroni.
\newblock {A semi-implicit approach for fluid-structure interaction based on an
  algebraic fractional step method}.
\newblock \emph{Math Models Methods Appl Sci}, 17\penalty0 (06):\penalty0
  957--983, 2007.

\bibitem[Quarteroni et~al.(2000)Quarteroni, Tuveri, and
  Veneziani]{Quarteroni2000}
A.~Quarteroni, M.~Tuveri, and A.~Veneziani.
\newblock {Computational vascular fluid dynamics: problems, models and
  methods}.
\newblock \emph{Comput Vis Sci}, 2\penalty0 (4):\penalty0 163--197, 2000.

\bibitem[Donea et~al.(1982)Donea, Giuliani, and Halleux]{Donea1982}
J.~Donea, S.~Giuliani, and J.P. Halleux.
\newblock {An arbitrary lagrangian-eulerian finite element method for transient
  dynamic fluid-structure interactions}.
\newblock \emph{Comput Methods Appl Mech Eng}, 33\penalty0 (1-3):\penalty0
  689--723, 1982.

\bibitem[Fauci and Dillon(2006)]{Fauci2006}
L.J. Fauci and R.~Dillon.
\newblock {Biofluidmechanics of reproduction}.
\newblock \emph{Annu Rev Fluid Mech}, 38\penalty0 (1):\penalty0 371--394, 2006.

\bibitem[Fogelson(2004)]{Fogelson2004}
A.L. Fogelson.
\newblock {Platelet-wall interactions in continuum models of platelet
  thrombosis: formulation and numerical solution}.
\newblock \emph{Math Med Biol}, 21\penalty0 (4):\penalty0 293--334, 2004.

\bibitem[Griffith(2012)]{Griffith2012}
B.E. Griffith.
\newblock {Immersed boundary model of aortic heart valve dynamics with
  physiological driving and loading conditions}.
\newblock \emph{Int J Numer Method Biomed Eng}, 28\penalty0 (3):\penalty0
  317--345, 2012.

\bibitem[Griffith et~al.(2009)Griffith, Luo, McQueen, and Peskin]{Griffith2009}
B.E. Griffith, X.~Luo, D.M. McQueen, and C.S. Peskin.
\newblock {Simulating the fluid dynamics of natural and prosthetic heart valves
  using the immersed boundary method}.
\newblock \emph{Int J Appl Mech}, 01\penalty0 (01):\penalty0 137--177, 2009.

\bibitem[Brandsen et~al.(2021)Brandsen, Vir{\'{e}}, Turteltaub, and {Van
  Bussel}]{Brandsen2021}
J.D. Brandsen, A.~Vir{\'{e}}, S.R. Turteltaub, and G.J.W. {Van Bussel}.
\newblock {A comparative analysis of Lagrange multiplier and penalty approaches
  for modelling fluid-structure interaction}.
\newblock \emph{Eng Comput}, 38\penalty0 (4):\penalty0 1677--1705, 2021.

\bibitem[Hesch et~al.(2014)Hesch, Gil, {Arranz Carre{\~{n}}o}, Bonet, and
  Betsch]{Hesch2014}
C.~Hesch, A.J. Gil, A.~{Arranz Carre{\~{n}}o}, J.~Bonet, and P.~Betsch.
\newblock {A mortar approach for Fluid–Structure interaction problems:
  Immersed strategies for deformable and rigid bodies}.
\newblock \emph{Comput Methods Appl Mech Eng}, 278:\penalty0 853--882, 2014.

\bibitem[Baaijens(2001)]{Baaijens2001}
F.P.T. Baaijens.
\newblock {A fictitious domain/mortar element method for fluid-structure
  interaction}.
\newblock \emph{Int J Numer Methods Fluids}, 35\penalty0 (7):\penalty0
  743--761, 2001.

\bibitem[van Loon et~al.(2004)van Loon, Anderson, de~Hart, and
  Baaijens]{VanLoon2004}
R.~van Loon, P.D. Anderson, J.~de~Hart, and F.P.T. Baaijens.
\newblock {A combined fictitious domain/adaptive meshing method for
  fluid–structure interaction in heart valves}.
\newblock \emph{Int J Numer Methods Fluids}, 46\penalty0 (5):\penalty0
  533--544, 2004.

\bibitem[Boffi and Gastaldi(2017)]{Boffi2017}
D.~Boffi and L.~Gastaldi.
\newblock {A fictitious domain approach with Lagrange multiplier for
  fluid-structure interactions}.
\newblock \emph{Numer Math}, 135\penalty0 (3):\penalty0 711--732, 2017.

\bibitem[Wang et~al.(2017)Wang, Jimack, and Walkley]{Wang2017}
Y.~Wang, P.K. Jimack, and M.A. Walkley.
\newblock {A one-field monolithic fictitious domain method for
  fluid–structure interactions}.
\newblock \emph{Comput Methods Appl Mech Eng}, 317:\penalty0 1146--1168, 2017.

\bibitem[Mayr et~al.(2020)Mayr, Noll, and Gee]{Mayr2020}
M.~Mayr, M.H. Noll, and M.W. Gee.
\newblock {A hybrid interface preconditioner for monolithic fluid–structure
  interaction solvers}.
\newblock \emph{Adv Model Simul Eng Sci}, 7\penalty0 (1):\penalty0 15, 2020.

\bibitem[Langer and Yang(2018)]{Langer2018}
U.~Langer and H.~Yang.
\newblock {Numerical simulation of fluid–structure interaction problems with
  hyperelastic models: A monolithic approach}.
\newblock \emph{Math Comput Simul}, 145:\penalty0 186--208, 2018.

\bibitem[Langer and Yang(2016)]{Langer2016}
U.~Langer and H.~Yang.
\newblock {Robust and efficient monolithic fluid-structure-interaction
  solvers}.
\newblock \emph{Int J Numer Methods Eng}, 108\penalty0 (4):\penalty0 303--325,
  2016.

\bibitem[Gerstenberger and Wall(2008)]{Gerstenberger2008}
A.~Gerstenberger and W.A. Wall.
\newblock {An eXtended Finite Element Method/Lagrange multiplier based approach
  for fluid–structure interaction}.
\newblock \emph{Comput Methods Appl Mech Eng}, 197\penalty0 (19-20):\penalty0
  1699--1714, 2008.

\bibitem[Massing et~al.(2015)Massing, Larson, Logg, and Rognes]{Massing2015}
A.~Massing, M.~Larson, A.~Logg, and M.~Rognes.
\newblock {A Nitsche-based cut finite element method for a fluid-structure
  interaction problem}.
\newblock \emph{Comm App Math Comp Sci}, 10\penalty0 (2):\penalty0 97--120,
  2015.

\bibitem[Schott et~al.(2019)Schott, Ager, and Wall]{Schott2019}
B.~Schott, C.~Ager, and W.A. Wall.
\newblock {Monolithic cut finite element–based approaches for
  fluid‐structure interaction}.
\newblock \emph{Int J Numer Methods Eng}, 119\penalty0 (8):\penalty0 757--796,
  2019.

\bibitem[Burman et~al.(2020)Burman, Fern{\'{a}}ndez, and Frei]{Burman2020}
E.~Burman, M.A. Fern{\'{a}}ndez, and S.~Frei.
\newblock {A Nitsche-based formulation for fluid-structure interactions with
  contact}.
\newblock \emph{Esaim Math Model Numer Anal}, 54\penalty0 (2):\penalty0
  531--564, 2020.

\bibitem[Kl{\"{o}}ppel et~al.(2011)Kl{\"{o}}ppel, Popp, K{\"{u}}ttler, and
  Wall]{Kloppel2011}
T.~Kl{\"{o}}ppel, A.~Popp, U.~K{\"{u}}ttler, and W.A. Wall.
\newblock {Fluid–structure interaction for non-conforming interfaces based on
  a dual mortar formulation}.
\newblock \emph{Comput Methods Appl Mech Eng}, 200\penalty0 (45-46):\penalty0
  3111--3126, 2011.

\bibitem[Kim and Peskin(2016)]{Kim2016}
Y.~Kim and C.S. Peskin.
\newblock {A penalty immersed boundary method for a rigid body in fluid}.
\newblock \emph{Phys Fluids}, 28\penalty0 (3):\penalty0 033603, 2016.

\bibitem[Vir{\'{e}} et~al.(2015)Vir{\'{e}}, Xiang, and Pain]{Vire2015}
A.~Vir{\'{e}}, J.~Xiang, and C.C. Pain.
\newblock {An immersed-shell method for modelling fluid–structure
  interactions}.
\newblock \emph{Philos. Trans. Royal Soc. A}, 373\penalty0 (2035):\penalty0
  20140085, 2015.

\bibitem[Vir{\'{e}} et~al.(2016)Vir{\'{e}}, Spinneken, Piggott, Pain, and
  Kramer]{Vire2016}
A.~Vir{\'{e}}, J.~Spinneken, M.D. Piggott, C.C. Pain, and S.C. Kramer.
\newblock {Application of the immersed-body method to simulate wave–structure
  interactions}.
\newblock \emph{European J Mech - B/Fluids}, 55:\penalty0 330--339, 2016.

\bibitem[Hron and Turek(2006)]{Hron2006}
J.~Hron and S.~Turek.
\newblock {A Monolithic FEM/Multigrid Solver for an ALE Formulation of
  Fluid-Structure Interaction with Applications in Biomechanics}.
\newblock In \emph{Fluid-Structure Interaction}, pages 146--170. Springer
  Berlin Heidelberg, Berlin, Heidelberg, 2006.

\bibitem[Richter(2015)]{Richter2015}
T.~Richter.
\newblock {A monolithic geometric multigrid solver for fluid-structure
  interactions in ALE formulation}.
\newblock \emph{Int J Numer Methods Eng}, 104\penalty0 (5):\penalty0 372--390,
  2015.

\bibitem[Wick(2013)]{Wick2013}
T.~Wick.
\newblock {Fully Eulerian fluid–structure interaction for time-dependent
  problems}.
\newblock \emph{Comput Methods Appl Mech Eng}, 255:\penalty0 14--26, 2013.

\bibitem[Schussnig and Fries(2019)]{Schussnig2019}
R.~Schussnig and T.‐P. Fries.
\newblock {A concept for aortic dissection with fluid‐structure‐crack
  interaction}.
\newblock \emph{PAMM}, 19\penalty0 (1), 2019.

\bibitem[Jodlbauer et~al.(2019)Jodlbauer, Langer, and Wick]{Jodlbauer2019}
D.~Jodlbauer, U.~Langer, and T.~Wick.
\newblock {Parallel block-preconditioned monolithic solvers for fluid-structure
  interaction problems}.
\newblock \emph{Int J Numer Methods Eng}, 117\penalty0 (6):\penalty0 623--643,
  2019.

\bibitem[Balmus et~al.(2020)Balmus, Massing, Hoffman, Razavi, and
  Nordsletten]{Balmus2020}
M.~Balmus, A.~Massing, J.~Hoffman, R.~Razavi, and D.A. Nordsletten.
\newblock {A partition of unity approach to fluid mechanics and
  fluid–structure interaction}.
\newblock \emph{Comput Methods Appl Mech Eng}, 362:\penalty0 112842, 2020.

\bibitem[Barker and Cai(2010)]{Barker2010}
A.T. Barker and X.-C. Cai.
\newblock {Scalable parallel methods for monolithic coupling in
  fluid–structure interaction with application to blood flow modeling}.
\newblock \emph{J Comput Phys}, 229\penalty0 (3):\penalty0 642--659, 2010.

\bibitem[Wu and Cai(2014)]{Wu2014}
Y.~Wu and X.-C. Cai.
\newblock {A fully implicit domain decomposition based ALE framework for
  three-dimensional fluid–structure interaction with application in blood
  flow computation}.
\newblock \emph{J Comput Phys}, 258:\penalty0 524--537, 2014.

\bibitem[Tezduyar et~al.(2006)Tezduyar, Sathe, Keedy, and Stein]{Tezduyar2006}
T.E. Tezduyar, S.~Sathe, R.~Keedy, and K.~Stein.
\newblock {Space–time finite element techniques for computation of
  fluid–structure interactions}.
\newblock \emph{Comput Methods Appl Mech Eng}, 195\penalty0 (17-18):\penalty0
  2002--2027, 2006.

\bibitem[Gee et~al.(2011)Gee, K{\"{u}}ttler, and Wall]{Gee2011}
M.W. Gee, U.~K{\"{u}}ttler, and W.A. Wall.
\newblock {Truly monolithic algebraic multigrid for fluid-structure
  interaction}.
\newblock \emph{Int J Numer Methods Eng}, 85\penalty0 (8):\penalty0 987--1016,
  2011.

\bibitem[Langer and Yang(2015{\natexlab{a}})]{Langer2015}
U.~Langer and H.~Yang.
\newblock Algebraic multigrid based preconditioners for fluid-structure
  interaction and its related sub-problems.
\newblock In I.~Lirkov, S.D. Margenov, and J.~Wa{\'{s}}niewski, editors,
  \emph{Large-Scale Scientific Computing}, pages 91--98, Cham,
  2015{\natexlab{a}}. Springer International Publishing.

\bibitem[Aulisa et~al.(2018)Aulisa, Bn{\`{a}}, and Bornia]{Aulisa2018}
E.~Aulisa, S.~Bn{\`{a}}, and G.~Bornia.
\newblock {A monolithic ALE Newton–Krylov solver with
  Multigrid-Richardson–Schwarz preconditioning for incompressible
  Fluid-Structure Interaction}.
\newblock \emph{Comput Fluids}, 174:\penalty0 213--228, 2018.

\bibitem[Degroote(2013)]{Degroote2013}
J.~Degroote.
\newblock {Partitioned Simulation of Fluid-Structure Interaction}.
\newblock \emph{Arch Comput Methods Eng}, 20\penalty0 (3):\penalty0 185--238,
  2013.

\bibitem[Hou et~al.(2012)Hou, Wang, and Layton]{Hou2012}
G.~Hou, J.~Wang, and A.~Layton.
\newblock {Numerical Methods for Fluid-Structure Interaction - A Review}.
\newblock \emph{Commun Comput Phys}, 12\penalty0 (2):\penalty0 337--377, 2012.

\bibitem[Hosters et~al.(2018)Hosters, Helmig, Stavrev, Behr, and
  Elgeti]{Hosters2018}
N.~Hosters, J.~Helmig, A.~Stavrev, M.~Behr, and S.~Elgeti.
\newblock {Fluid–structure interaction with NURBS-based coupling}.
\newblock \emph{Comput Methods Appl Mech Eng}, 332:\penalty0 520--539, 2018.

\bibitem[Hilger et~al.(2021)Hilger, Hosters, Key, Elgeti, and Behr]{Hilger2021}
D.~Hilger, N.~Hosters, F.~Key, S.~Elgeti, and M.~Behr.
\newblock A novel approach to fluid-structure interaction simulations involving
  large translation and contact.
\newblock In H.~van Brummelen, C.~Vuik, M.~M{\"o}ller, C.~Verhoosel, B.~Simeon,
  and B.~J{\"u}ttler, editors, \emph{Isogeometric Analysis and Applications
  2018}, pages 39--56, Cham, 2021. Springer International Publishing.

\bibitem[Causin et~al.(2005)Causin, Gerbeau, and Nobile]{Causin2005}
P.~Causin, J.F. Gerbeau, and F.~Nobile.
\newblock {Added-mass effect in the design of partitioned algorithms for
  fluid–structure problems}.
\newblock \emph{Comput Methods Appl Mech Eng}, 194\penalty0 (42-44):\penalty0
  4506--4527, 2005.

\bibitem[F{\"{o}}rster et~al.(2007)F{\"{o}}rster, Wall, and Ramm]{Forster2007}
C.~F{\"{o}}rster, W.A. Wall, and E.~Ramm.
\newblock {Artificial added mass instabilities in sequential staggered coupling
  of nonlinear structures and incompressible viscous flows}.
\newblock \emph{Comput Methods Appl Mech Eng}, 196\penalty0 (7):\penalty0
  1278--1293, 2007.

\bibitem[Lesoinne and Farhat(1998)]{Lesoinne1998}
M.~Lesoinne and C.~Farhat.
\newblock {Higher-Order Subiteration-Free Staggered Algorithm for Nonlinear
  Transient Aeroelastic Problems}.
\newblock \emph{AIAA J}, 36\penalty0 (9):\penalty0 1754--1757, 1998.

\bibitem[Kassiotis et~al.(2011)Kassiotis, Ibrahimbegovic, Niekamp, and
  Matthies]{Kassiotis2011}
C.~Kassiotis, A.~Ibrahimbegovic, R.~Niekamp, and H.G. Matthies.
\newblock {Nonlinear fluid–structure interaction problem. Part I: implicit
  partitioned algorithm, nonlinear stability proof and validation examples}.
\newblock \emph{Comput Mech}, 47\penalty0 (3):\penalty0 305--323, 2011.

\bibitem[Kirby et~al.(2007)Kirby, Yosibash, and Karniadakis]{Kirby2007}
R.M. Kirby, Z.~Yosibash, and G.E. Karniadakis.
\newblock {Towards stable coupling methods for high-order discretization of
  fluid–structure interaction: Algorithms and observations}.
\newblock \emph{J Comput Phys}, 223\penalty0 (2):\penalty0 489--518, 2007.

\bibitem[K{\"{u}}ttler and Wall(2008)]{Kuttler2008}
U.~K{\"{u}}ttler and W.A. Wall.
\newblock {Fixed-point fluid–structure interaction solvers with dynamic
  relaxation}.
\newblock \emph{Comput Mech}, 43\penalty0 (1):\penalty0 61--72, 2008.

\bibitem[Gerbeau and Vidrascu(2003)]{Gerbeau2003}
J.-F. Gerbeau and M.~Vidrascu.
\newblock {A Quasi-Newton Algorithm Based on a Reduced Model for
  Fluid-Structure Interaction Problems in Blood Flows}.
\newblock \emph{Esaim Math Model Numer Anal}, 37\penalty0 (4):\penalty0
  631--647, 2003.

\bibitem[Michler et~al.(2005)Michler, van Brummelen, and de~Borst]{Michler2005}
C.~Michler, E.H. van Brummelen, and R.~de~Borst.
\newblock {An interface Newton-Krylov solver for fluid-structure interaction}.
\newblock \emph{Int J Numer Methods Fluids}, 47\penalty0 (10-11):\penalty0
  1189--1195, 2005.

\bibitem[Fern{\'{a}}ndez and Moubachir(2005)]{Fernandez2005}
M.{\'{A}}. Fern{\'{a}}ndez and M.~Moubachir.
\newblock {A Newton method using exact jacobians for solving fluid–structure
  coupling}.
\newblock \emph{Comput Struct}, 83\penalty0 (2-3):\penalty0 127--142, 2005.

\bibitem[Degroote et~al.(2009)Degroote, Bathe, and Vierendeels]{Degroote2009}
J.~Degroote, K.-J. Bathe, and J.~Vierendeels.
\newblock {Performance of a new partitioned procedure versus a monolithic
  procedure in fluid–structure interaction}.
\newblock \emph{Comput Struct}, 87\penalty0 (11-12):\penalty0 793--801, 2009.

\bibitem[Spenke et~al.(2020)Spenke, Hosters, and Behr]{Spenke2020}
T.~Spenke, N.~Hosters, and M.~Behr.
\newblock {A multi-vector interface quasi-Newton method with linear complexity
  for partitioned fluid–structure interaction}.
\newblock \emph{Comput Methods Appl Mech Eng}, 361:\penalty0 112810, 2020.

\bibitem[Badia et~al.(2009)Badia, Nobile, and Vergara]{Badia2009}
S.~Badia, F.~Nobile, and C.~Vergara.
\newblock {Robin–Robin preconditioned Krylov methods for fluid–structure
  interaction problems}.
\newblock \emph{Comput Methods Appl Mech Eng}, 198\penalty0 (33-36):\penalty0
  2768--2784, 2009.

\bibitem[Badia et~al.(2008{\natexlab{a}})Badia, Nobile, and Vergara]{Badia2008}
S.~Badia, F.~Nobile, and C.~Vergara.
\newblock {Fluid–structure partitioned procedures based on Robin transmission
  conditions}.
\newblock \emph{J Comput Phys}, 227\penalty0 (14):\penalty0 7027--7051,
  2008{\natexlab{a}}.

\bibitem[Gerardo-Giorda et~al.(2010)Gerardo-Giorda, Nobile, and
  Vergara]{Gerardo-Giorda2010}
L.~Gerardo-Giorda, F.~Nobile, and C.~Vergara.
\newblock {Analysis and Optimization of Robin–Robin Partitioned Procedures in
  Fluid-Structure Interaction Problems}.
\newblock \emph{SIAM J Numer Anal}, 48\penalty0 (6):\penalty0 2091--2116, 2010.

\bibitem[Degroote et~al.(2010)Degroote, Swillens, Bruggeman, Haelterman,
  Segers, and Vierendeels]{Degroote2010}
J.~Degroote, A.~Swillens, P.~Bruggeman, R.~Haelterman, P.~Segers, and
  J.~Vierendeels.
\newblock {Simulation of fluid-structure interaction with the interface
  artificial compressibility method}.
\newblock \emph{Int J Numer Method Biomed Eng}, 26\penalty0 (3-4):\penalty0
  276--289, 2010.

\bibitem[Degroote(2011)]{Degroote2011}
J.~Degroote.
\newblock {On the similarity between Dirichlet–Neumann with interface
  artificial compressibility and Robin–Neumann schemes for the solution of
  fluid-structure interaction problems}.
\newblock \emph{J Comput Phys}, 230\penalty0 (17):\penalty0 6399--6403, 2011.

\bibitem[Bogaers et~al.(2015)Bogaers, Kok, Reddy, and Franz]{Bogaers2015}
A.E.J. Bogaers, S.~Kok, B.D. Reddy, and T.~Franz.
\newblock {Extending the robustness and efficiency of artificial
  compressibility for partitioned fluid–structure interactions}.
\newblock \emph{Comput Methods Appl Mech Eng}, 283:\penalty0 1278--1295, 2015.

\bibitem[Fern{\'{a}}ndez et~al.(2007)Fern{\'{a}}ndez, Gerbeau, and
  Grandmont]{Fernandez2007}
M.A. Fern{\'{a}}ndez, J.-F. Gerbeau, and C.~Grandmont.
\newblock {A projection semi-implicit scheme for the coupling of an elastic
  structure with an incompressible fluid}.
\newblock \emph{Int J Numer Methods Eng}, 69\penalty0 (4):\penalty0 794--821,
  2007.

\bibitem[Badia et~al.(2008{\natexlab{b}})Badia, Quaini, and
  Quarteroni]{Badia2008b}
S.~Badia, A.~Quaini, and A.~Quarteroni.
\newblock {Splitting Methods Based on Algebraic Factorization for
  Fluid-Structure Interaction}.
\newblock \emph{SIAM J Sci Comput}, 30\penalty0 (4):\penalty0 1778--1805,
  2008{\natexlab{b}}.

\bibitem[Astorino et~al.(2010)Astorino, Chouly, and
  Fern{\'{a}}ndez]{Astorino2010}
M.~Astorino, F.~Chouly, and M.A. Fern{\'{a}}ndez.
\newblock {Robin Based Semi-Implicit Coupling in Fluid-Structure Interaction:
  Stability Analysis and Numerics}.
\newblock \emph{SIAM J Sci Comput}, 31\penalty0 (6):\penalty0 4041--4065, 2010.

\bibitem[Breuer et~al.(2012)Breuer, {De Nayer}, M{\"{u}}nsch, Gallinger, and
  W{\"{u}}chner]{Breuer2012}
M.~Breuer, G.~{De Nayer}, M.~M{\"{u}}nsch, T.~Gallinger, and R.~W{\"{u}}chner.
\newblock {Fluid–structure interaction using a partitioned semi-implicit
  predictor–corrector coupling scheme for the application of large-eddy
  simulation}.
\newblock \emph{J Fluids Struct}, 29:\penalty0 107--130, 2012.

\bibitem[Lozovskiy et~al.(2015)Lozovskiy, Olshanskii, Salamatova, and
  Vassilevski]{Lozovskiy2015}
A.~Lozovskiy, M.A. Olshanskii, V.~Salamatova, and Y.V. Vassilevski.
\newblock {An unconditionally stable semi-implicit FSI finite element method}.
\newblock \emph{Comput Methods Appl Mech Eng}, 297:\penalty0 437--454, 2015.

\bibitem[He(2015)]{He2015}
T.~He.
\newblock {On a Partitioned Strong Coupling Algorithm for Modeling
  Fluid–Structure Interaction}.
\newblock \emph{Int J Appl Mech}, 07\penalty0 (02):\penalty0 1550021, 2015.

\bibitem[Landajuela et~al.(2017)Landajuela, Vidrascu, Chapelle, and
  Fern{\'{a}}ndez]{Landajuela2017}
M.~Landajuela, M.~Vidrascu, D.~Chapelle, and M.A. Fern{\'{a}}ndez.
\newblock {Coupling schemes for the FSI forward prediction challenge:
  Comparative study and validation}.
\newblock \emph{Int J Numer Method Biomed Eng}, 33\penalty0 (4):\penalty0
  e2813, 2017.

\bibitem[Naseri et~al.(2018)Naseri, Lehmkuhl, Gonzalez, Bartrons,
  P{\'{e}}rez-Segarra, and Oliva]{Naseri2018}
A.~Naseri, O.~Lehmkuhl, I.~Gonzalez, E.~Bartrons, C.D. P{\'{e}}rez-Segarra, and
  A.~Oliva.
\newblock {A semi-implicit coupling technique for fluid–structure interaction
  problems with strong added-mass effect}.
\newblock \emph{J Fluids Struct}, 80:\penalty0 94--112, 2018.

\bibitem[Fern{\'{a}}ndez and Gerosa(2020)]{Fernandez2020}
M.A. Fern{\'{a}}ndez and F.M. Gerosa.
\newblock {An unfitted mesh semi‐implicit coupling scheme for
  fluid‐structure interaction with immersed solids}.
\newblock \emph{Int J Numer Methods Eng}, page nme.6449, 2020.

\bibitem[Nobile and Vergara(2008)]{Nobile2008}
F.~Nobile and C.~Vergara.
\newblock {An Effective Fluid-Structure Interaction Formulation for Vascular
  Dynamics by Generalized Robin Conditions}.
\newblock \emph{SIAM J Sci Comput}, 30\penalty0 (2):\penalty0 731--763, 2008.

\bibitem[Guidoboni et~al.(2009)Guidoboni, Glowinski, Cavallini, and
  Canic]{Guidoboni2009}
G.~Guidoboni, R.~Glowinski, N.~Cavallini, and S.~Canic.
\newblock {Stable loosely-coupled-type algorithm for fluid–structure
  interaction in blood flow}.
\newblock \emph{J Comput Phys}, 228\penalty0 (18):\penalty0 6916--6937, 2009.

\bibitem[Luk{\'{a}}{\v{c}}ov{\'{a}}-Medvid'ov{\'{a}}
  et~al.(2013)Luk{\'{a}}{\v{c}}ov{\'{a}}-Medvid'ov{\'{a}},
  Rusn{\'{a}}kov{\'{a}}, and
  Hundertmark-Zau{\v{s}}kov{\'{a}}]{Lukacova-Medvidova2013}
M.~Luk{\'{a}}{\v{c}}ov{\'{a}}-Medvid'ov{\'{a}}, G.~Rusn{\'{a}}kov{\'{a}}, and
  A.~Hundertmark-Zau{\v{s}}kov{\'{a}}.
\newblock {Kinematic splitting algorithm for fluid–structure interaction in
  hemodynamics}.
\newblock \emph{Comput Methods Appl Mech Eng}, 265:\penalty0 83--106, 2013.

\bibitem[Banks et~al.(2014{\natexlab{a}})Banks, Henshaw, and
  Schwendeman]{Banks2014}
J.W. Banks, W.D. Henshaw, and D.W. Schwendeman.
\newblock {An analysis of a new stable partitioned algorithm for FSI problems.
  Part II: Incompressible flow and structural shells}.
\newblock \emph{J Comput Phys}, 268:\penalty0 399--416, 2014{\natexlab{a}}.

\bibitem[Fern{\'{a}}ndez et~al.(2015)Fern{\'{a}}ndez, Landajuela, and
  Vidrascu]{Fernandez2015}
M.A. Fern{\'{a}}ndez, M.~Landajuela, and M.~Vidrascu.
\newblock {Fully decoupled time-marching schemes for incompressible
  fluid/thin-walled structure interaction}.
\newblock \emph{J Comput Phys}, 297:\penalty0 156--181, 2015.

\bibitem[Burman and Fern{\'{a}}ndez(2009)]{Burman2009}
E.~Burman and M.A. Fern{\'{a}}ndez.
\newblock {Stabilization of explicit coupling in fluid–structure interaction
  involving fluid incompressibility}.
\newblock \emph{Comput Methods Appl Mech Eng}, 198\penalty0 (5-8):\penalty0
  766--784, 2009.

\bibitem[Burman and Fern{\'{a}}ndez(2014)]{Burman2014}
E.~Burman and M.A. Fern{\'{a}}ndez.
\newblock {Explicit strategies for incompressible fluid-structure interaction
  problems: Nitsche type mortaring versus Robin-Robin coupling}.
\newblock \emph{Int J Numer Methods Eng}, 97\penalty0 (10):\penalty0 739--758,
  2014.

\bibitem[Banks et~al.(2014{\natexlab{b}})Banks, Henshaw, and
  Schwendeman]{Banks2014b}
J.W. Banks, W.D. Henshaw, and D.W. Schwendeman.
\newblock {An analysis of a new stable partitioned algorithm for FSI problems.
  Part I: Incompressible flow and elastic solids}.
\newblock \emph{J Comput Phys}, 269:\penalty0 108--137, 2014{\natexlab{b}}.

\bibitem[Serino et~al.(2019{\natexlab{a}})Serino, Banks, Henshaw, and
  Schwendeman]{Serino2019}
D.A. Serino, J.W. Banks, W.D. Henshaw, and D.W. Schwendeman.
\newblock {A Stable Added-Mass Partitioned (AMP) Algorithm for Elastic Solids
  and Incompressible Flow: Model Problem Analysis}.
\newblock \emph{SIAM J Sci Comput}, 41\penalty0 (4):\penalty0 A2464--A2484,
  2019{\natexlab{a}}.

\bibitem[Serino et~al.(2019{\natexlab{b}})Serino, Banks, Henshaw, and
  Schwendeman]{Serino2019b}
D.A. Serino, J.W. Banks, W.D. Henshaw, and D.W. Schwendeman.
\newblock {A stable added-mass partitioned (AMP) algorithm for elastic solids
  and incompressible flow}.
\newblock \emph{J Comput Phys}, 399:\penalty0 108923, 2019{\natexlab{b}}.

\bibitem[Janela et~al.(2010)Janela, Moura, and Sequeira]{Janela2010}
J.~Janela, A.~Moura, and A.~Sequeira.
\newblock {A 3D non-Newtonian fluid–structure interaction model for blood
  flow in arteries}.
\newblock \emph{J Comput Appl Math}, 234\penalty0 (9):\penalty0 2783--2791,
  2010.

\bibitem[Zhu et~al.(2017)Zhu, Yu, Liu, Cheng, and Lu]{Zhu2017}
L.~Zhu, X.~Yu, N.~Liu, Y.~Cheng, and X.~Lu.
\newblock {A deformable plate interacting with a non-Newtonian fluid in three
  dimensions}.
\newblock \emph{Phys Fluids}, 29\penalty0 (8):\penalty0 083101, 2017.

\bibitem[Zhu(2019)]{Zhu2019}
L.~Zhu.
\newblock {An IB Method for Non-Newtonian-Fluid Flexible-Structure Interactions
  in Three-Dimensions}.
\newblock \emph{Comput Model Eng Sci}, 119\penalty0 (1):\penalty0 125--143,
  2019.

\bibitem[Amani et~al.(2020)Amani, Naseri, P{\'{e}}rez-Segarra, and
  Oliva]{Amani2020}
A.~Amani, A.~Naseri, C.D. P{\'{e}}rez-Segarra, and A.~Oliva.
\newblock {A method for fluid-structure interaction problems with non-Newtonian
  fluid}.
\newblock In \emph{Proceedings of the 6th European Conference on Comput Mech:
  Solids, Structures and Coupled Problems, ECCM 2018 and 7th European
  Conference on Computational Fluid Dynamics, ECFD 2018}, 2020.

\bibitem[Guermond et~al.(2005)Guermond, Minev, and Shen]{Guermond2005}
J.L. Guermond, P.~Minev, and J.~Shen.
\newblock {Error analysis of pressure-correction schemes for the time-dependent
  Stokes equations with open boundary conditions}.
\newblock \emph{SIAM J Numer Anal}, 43\penalty0 (1):\penalty0 239--258, 2005.

\bibitem[Guermond et~al.(2006)Guermond, Minev, and Shen]{Guermond2006}
J.L. Guermond, P.~Minev, and Jie Shen.
\newblock {An overview of projection methods for incompressible flows}.
\newblock \emph{Comput Methods Appl Mech Eng}, 195\penalty0 (44-47):\penalty0
  6011--6045, 2006.

\bibitem[Johnston and Liu(2004)]{Johnston2004}
H.~Johnston and J.-G. Liu.
\newblock {Accurate, stable and efficient Navier--Stokes solvers based on
  explicit treatment of the pressure term}.
\newblock \emph{J Comput Phys}, 199\penalty0 (1):\penalty0 221--259, 2004.

\bibitem[Liu(2009)]{Liu2009}
J.~Liu.
\newblock {Open and traction boundary conditions for the incompressible
  Navier--Stokes equations}.
\newblock \emph{J Comput Phys}, 228\penalty0 (19):\penalty0 7250--7267, 2009.

\bibitem[Simo and Armero(1994)]{Simo1994}
J.C. Simo and F.~Armero.
\newblock {Unconditional stability and long-term behavior of transient
  algorithms for the incompressible Navier-Stokes and Euler equations}.
\newblock \emph{Comput Methods Appl Mech Eng}, 111\penalty0 (1-2):\penalty0
  111--154, 1994.

\bibitem[Turek(1996)]{TUREK1996}
S.~Turek.
\newblock {A comparative study of time-stepping techniques for the
  incompressible Navier-Stokes equations: From fully implicit non-linear
  schemes to semi-implicit projection methods}.
\newblock \emph{Int J Numer Methods Fluids}, 22\penalty0 (10):\penalty0
  987--1011, 1996.

\bibitem[Elman et~al.(2011)Elman, Mihajlovi{\'{c}}, and Silvester]{Elman2011}
H.~Elman, M.~Mihajlovi{\'{c}}, and D.~Silvester.
\newblock {Fast iterative solvers for buoyancy driven flow problems}.
\newblock \emph{J Comput Phys}, 230\penalty0 (10):\penalty0 3900--3914, 2011.

\bibitem[Ingram(2013)]{Ingram2013}
R.~Ingram.
\newblock {A new linearly extrapolated Crank-Nicolson time-stepping scheme for
  the Navier-Stokes equations}.
\newblock \emph{Math Comput}, 82\penalty0 (284):\penalty0 1953--1973, 2013.

\bibitem[Schussnig et~al.(2021{\natexlab{b}})Schussnig, Pacheco, and
  Fries]{Schussnig2021}
R.~Schussnig, D.R.Q. Pacheco, and T.-P. Fries.
\newblock {Robust stabilised finite element solvers for generalised Newtonian
  fluid flows}.
\newblock \emph{J Comput Phys}, 442:\penalty0 110436, 2021{\natexlab{b}}.

\bibitem[Pacheco and Steinbach(2021)]{Pacheco2020}
D.R.Q. Pacheco and O.~Steinbach.
\newblock {A continuous finite element framework for the pressure Poisson
  equation allowing non‐Newtonian and compressible flow behavior}.
\newblock \emph{Int J Numer Meth Fluids}, 93:\penalty0 1435--1445, 2021.

\bibitem[Pacheco et~al.(2021)Pacheco, Schussnig, and Fries]{Pacheco2021b}
D.R.Q. Pacheco, R.~Schussnig, and T.-P. Fries.
\newblock {An efficient split-step framework for non-Newtonian incompressible
  flow problems with consistent pressure boundary conditions}.
\newblock \emph{Comput Methods Appl Mech Eng}, 382:\penalty0 113888, 2021.

\bibitem[Formaggia et~al.(2001)Formaggia, Gerbeau, Nobile, and
  Quarteroni]{Formaggia2001}
L.~Formaggia, J.F. Gerbeau, F.~Nobile, and A.~Quarteroni.
\newblock {On the coupling of 3D and 1D Navier–Stokes equations for flow
  problems in compliant vessels}.
\newblock \emph{Comput Methods Appl Mech Eng}, 191\penalty0 (6-7):\penalty0
  561--582, 2001.

\bibitem[Donea et~al.(2017)Donea, Huerta, Ponthot, and
  Rodr{\'{i}}guez-Ferran]{Donea2017}
J.~Donea, A.~Huerta, J.-P. Ponthot, and A.~Rodr{\'{i}}guez-Ferran.
\newblock {Arbitrary Lagrangian-Eulerian Methods}.
\newblock In \emph{Encyclopedia Comput Mech, 2nd Ed}, pages 1--23. John Wiley
  {\&} Sons, Ltd, Chichester, UK, 2017.

\bibitem[Holzapfel(2000)]{holzapfel2000}
G.A. Holzapfel.
\newblock \emph{Nonlinear Solid Mechanics. A Continuum Approach for
  Engineering}.
\newblock John Wiley \& Sons, Inc., Chichester, 2000.

\bibitem[Bonet and Wood(2008)]{Bonet2008}
J.~Bonet and R.D. Wood.
\newblock \emph{{Nonlinear Continuum Mechanics for Finite Element Analysis}}.
\newblock Cambridge University Press, Cambridge, 2008.

\bibitem[Wick(2011)]{Wick2011}
T.~Wick.
\newblock Fluid-structure interactions using different mesh motion techniques.
\newblock \emph{Comput Struct}, 89\penalty0 (13-14):\penalty0 1456--1467, 2011.

\bibitem[Stein et~al.(2003)Stein, Tezduyar, and Benney]{Stein2003}
K.~Stein, T.~Tezduyar, and R.~Benney.
\newblock Mesh moving techniques for fluid-structure interactions with large
  displacements.
\newblock \emph{J Appl Mech Trans ASME}, 70\penalty0 (1):\penalty0 58--63,
  2003.

\bibitem[Johnson and Tezduyar(1994)]{JOHNSON1994}
A.A. Johnson and T.E. Tezduyar.
\newblock Mesh update strategies in parallel finite element computations of
  flow problems with moving boundaries and interfaces.
\newblock \emph{Comput Methods Appl Mech Eng}, 119\penalty0 (1):\penalty0
  73--94, 1994.

\bibitem[Shamanskiy and Simeon(2021)]{Shamanskiy2021}
A.~Shamanskiy and B.~Simeon.
\newblock {Mesh moving techniques in fluid-structure interaction: robustness,
  accumulated distortion and computational efficiency}.
\newblock \emph{Comput Mech}, 67\penalty0 (2):\penalty0 583--600, 2021.

\bibitem[Gasser et~al.(2006)Gasser, Ogden, and Holzapfel]{Gasser2006}
T.C. Gasser, R.W. Ogden, and G.A. Holzapfel.
\newblock Hyperelastic modelling of arterial layers with distributed collagen
  fibre orientations.
\newblock \emph{J R Soc Interface}, 3\penalty0 (6):\penalty0 15--35, 2006.

\bibitem[Bazilevs et~al.(2010{\natexlab{a}})Bazilevs, Hsu, Zhang, Wang,
  Kvamsdal, Hentschel, and Isaksen]{Bazilevs2010}
Y.~Bazilevs, M.-C. Hsu, Y.~Zhang, W.~Wang, T.~Kvamsdal, S.~Hentschel, and J.G.
  Isaksen.
\newblock {Computational vascular fluid–structure interaction: methodology
  and application to cerebral aneurysms}.
\newblock \emph{Biomech Model Mechanobiol}, 9\penalty0 (4):\penalty0 481--498,
  2010{\natexlab{a}}.

\bibitem[Simo and Hughes(1998)]{Simo1998}
J.C. Simo and T.J.R. Hughes.
\newblock \emph{{Computational Inelasticity}}, volume~7 of
  \emph{Interdisciplinary Applied Mathematics}.
\newblock Springer-Verlag, New York, 1998.

\bibitem[B{\"{a}}umler et~al.(2020)B{\"{a}}umler, Vedula, Sailer, Seo, Chiu,
  Mistelbauer, Chan, Fischbein, Marsden, and Fleischmann]{Baumler2020}
K.~B{\"{a}}umler, V.~Vedula, A.M. Sailer, J.~Seo, P.~Chiu, G.~Mistelbauer, F.P.
  Chan, M.P. Fischbein, A.L. Marsden, and D.~Fleischmann.
\newblock {Fluid–structure interaction simulations of patient-specific aortic
  dissection}.
\newblock \emph{Biomech Model Mechanobiol}, 19\penalty0 (5):\penalty0
  1607--1628, 2020.

\bibitem[de~Villiers et~al.(2018)de~Villiers, McBride, Reddy, Franz, and
  Spottiswoode]{DeVilliers2018}
A.M. de~Villiers, A.T. McBride, B.D. Reddy, T.~Franz, and B.S. Spottiswoode.
\newblock {A validated patient-specific FSI model for vascular access in
  haemodialysis}.
\newblock \emph{Biomech Model Mechanobiol}, 17\penalty0 (2):\penalty0 479--497,
  2018.

\bibitem[Newmark(1959)]{Newmark1959}
N.M Newmark.
\newblock {A method of computation for structural dynamics}.
\newblock \emph{J Eng Mech}, 85\penalty0 (EM3):\penalty0 67--94, 1959.

\bibitem[Chung and Hulbert(1993)]{Chung1993}
J.~Chung and G.M. Hulbert.
\newblock A time integration algorithm for structural dynamics with improved
  numerical dissipation: The generalized-$\alpha$ method.
\newblock \emph{J Appl Mech Trans ASME}, 60\penalty0 (2):\penalty0 371--375,
  1993.

\bibitem[Erlicher et~al.(2002)Erlicher, Bonaventura, and Bursi]{Erlicher2002}
S.~Erlicher, L.~Bonaventura, and O.S. Bursi.
\newblock {The analysis of the Generalized-$\alpha$ method for non-linear
  dynamic problems}.
\newblock \emph{Comput Mech}, 28\penalty0 (2):\penalty0 83--104, 2002.

\bibitem[Hilber et~al.(1977)Hilber, Hughes, and Taylor]{Hilber1977}
H.M. Hilber, T.J.R. Hughes, and R.L. Taylor.
\newblock {Improved numerical dissipation for time integration algorithms in
  structural dynamics}.
\newblock \emph{Earthq Eng Struct Dyn}, 5\penalty0 (3):\penalty0 283--292,
  1977.

\bibitem[Wood et~al.(1980)Wood, Bossak, and Zienkiewicz]{Wood1980}
W.L. Wood, M.~Bossak, and O.C. Zienkiewicz.
\newblock {An alpha modification of Newmark's method}.
\newblock \emph{Int J Numer Methods Eng}, 15\penalty0 (10):\penalty0
  1562--1566, 1980.

\bibitem[Galdi et~al.(2008)Galdi, Rannacher, Robertson, and Turek]{Galdi2008}
G.P. Galdi, R.~Rannacher, A.M. Robertson, and S.~Turek.
\newblock \emph{{Hemodynamical Flows}}, volume~37 of \emph{Oberwolfach
  Seminars}.
\newblock Birkh{\"{a}}user, Basel, 2008.

\bibitem[Liu et~al.(2009)Liu, Liu, and Pego]{Liu2009a}
J.-G. Liu, J.~Liu, and R.L. Pego.
\newblock {Error estimates for finite-element Navier-Stokes solvers without
  standard Inf-Sup conditions}.
\newblock \emph{Chin Ann Math Ser B}, 30\penalty0 (6):\penalty0 743--768, 2009.

\bibitem[Jia and Liu(2011)]{Jia2011}
J.~Jia and J.~Liu.
\newblock {Stable and spectrally accurate schemes for the Navier--Stokes
  equations}.
\newblock \emph{SIAM J Sci Comput}, 33\penalty0 (5):\penalty0 2421--2439, 2011.

\bibitem[Li(2020)]{Li2020}
L.~Li.
\newblock {A split-step finite-element method for incompressible Navier-Stokes
  equations with high-order accuracy up-to the boundary}.
\newblock \emph{J Comput Phys}, 408:\penalty0 109274, 2020.

\bibitem[Hairer et~al.(1993)Hairer, N{\o}rsett, and Wanner]{Hairer1993}
E.~Hairer, S.P. N{\o}rsett, and G.~Wanner.
\newblock \emph{{Solving Ordinary Differential Equations 1 - Nonstiff
  problems}}.
\newblock Springer, Berlin, 1993.

\bibitem[Nobile and Vergara(2012)]{Nobile2012}
F.~Nobile and C.~Vergara.
\newblock {Partitioned Algorithms for Fluid-Structure Interaction Problems in
  Haemodynamics}.
\newblock \emph{Milan J Math}, 80\penalty0 (2):\penalty0 443--467, 2012.

\bibitem[Matthies and Steindorf(2003)]{Matthies2003}
H.G. Matthies and J.~Steindorf.
\newblock {Partitioned strong coupling algorithms for fluid–structure
  interaction}.
\newblock \emph{Comput Struct}, 81\penalty0 (8-11):\penalty0 805--812, 2003.

\bibitem[Nobile et~al.(2013)Nobile, Pozzoli, and Vergara]{Nobile2013}
F.~Nobile, M.~Pozzoli, and C.~Vergara.
\newblock {Time accurate partitioned algorithms for the solution of
  fluid–structure interaction problems in haemodynamics}.
\newblock \emph{Comput Fluids}, 86:\penalty0 470--482, 2013.

\bibitem[Nobile et~al.(2014)Nobile, Pozzoli, and Vergara]{Nobile2014}
F.~Nobile, M.~Pozzoli, and C.~Vergara.
\newblock {Inexact accurate partitioned algorithms for fluid–structure
  interaction problems with finite elasticity in haemodynamics}.
\newblock \emph{J Comput Phys}, 273:\penalty0 598--617, 2014.

\bibitem[Langer and Yang(2015{\natexlab{b}})]{LANGER2015b}
U.~Langer and H.~Yang.
\newblock Partitioned solution algorithms for fluid–structure interaction
  problems with hyperelastic models.
\newblock \emph{J Comput Appl Math}, 276:\penalty0 47--61, 2015{\natexlab{b}}.

\bibitem[Grandmont et~al.(2001)Grandmont, Guimet, and Maday]{GRANDMONT2001}
C.~Grandmont, V.~Guimet, and Y.~Maday.
\newblock {Numerical analysis of some decoupling techniques for the
  approximation of the unsteady fluid structure interaction}.
\newblock \emph{Math Models Methods Appl Sci}, 11\penalty0 (08):\penalty0
  1349--1377, 2001.

\bibitem[Arndt et~al.(2020)Arndt, Bangerth, Blais, Clevenger, Fehling, Grayver,
  Heister, Heltai, Kronbichler, Maier, Munch, Pelteret, Rastak, Thomas,
  Turcksin, Wang, and Wells]{dealII92}
D.~Arndt, W.~Bangerth, B.~Blais, T.C. Clevenger, M.~Fehling, A.V. Grayver,
  T.~Heister, L.~Heltai, M.~Kronbichler, M.~Maier, P.~Munch, J.-P. Pelteret,
  R.~Rastak, I.~Thomas, B.~Turcksin, Z.~Wang, and D.~Wells.
\newblock The \texttt{deal.II} library, version 9.2.
\newblock \emph{J Numer Math}, 28\penalty0 (3):\penalty0 131--146, 2020.

\bibitem[Heroux and Willenbring(2012)]{Heroux2012}
M.A. Heroux and J.M. Willenbring.
\newblock {A new overview of the Trilinos project}.
\newblock \emph{Sci Program}, 20\penalty0 (2):\penalty0 83--88, 2012.

\bibitem[Arbia et~al.(2016)Arbia, Vignon-Clementel, Hsia, and
  Gerbeau]{ARBIA2016}
G.~Arbia, I.E. Vignon-Clementel, T.-Y. Hsia, and J.-F. Gerbeau.
\newblock Modified navier–stokes equations for the outflow boundary
  conditions in hemodynamics.
\newblock \emph{European J Mech - B/Fluids}, 60:\penalty0 175--188, 2016.

\bibitem[Moireau et~al.(2012)Moireau, Xiao, Astorino, Figueroa, Chapelle,
  Taylor, and Gerbeau]{Moireau2012}
P.~Moireau, N.~Xiao, M.~Astorino, C.A. Figueroa, D.~Chapelle, C.A. Taylor, and
  J.-F. Gerbeau.
\newblock External tissue support and fluid-structure simulation in blood
  flows.
\newblock \emph{Biomech Model Mechanobiol}, 11\penalty0 (1-2):\penalty0 1--18,
  2012.

\bibitem[Crosetto et~al.(2011{\natexlab{c}})Crosetto, Reymond, Deparis,
  Kontaxakis, Stergiopulos, and Quarteroni]{CROSETTO2011}
P.~Crosetto, P.~Reymond, S.~Deparis, D.~Kontaxakis, N.~Stergiopulos, and
  A.~Quarteroni.
\newblock Fluid–structure interaction simulation of aortic blood flow.
\newblock \emph{Comput Fluids}, 43\penalty0 (1):\penalty0 46--57,
  2011{\natexlab{c}}.

\bibitem[Reymond et~al.(2013)Reymond, Crosetto, Deparis, Quarteroni, and
  Stergiopulos]{REYMOND2013}
P.~Reymond, P.~Crosetto, S.~Deparis, A.~Quarteroni, and N.~Stergiopulos.
\newblock Physiological simulation of blood flow in the aorta: Comparison of
  hemodynamic indices as predicted by 3-d fsi, 3-d rigid wall and 1-d models.
\newblock \emph{Med Eng Phys}, 35\penalty0 (6):\penalty0 784--791, 2013.

\bibitem[{Esmaily Moghadam} et~al.(2011){Esmaily Moghadam}, Bazilevs, Hsia,
  Vignon-Clementel, and Marsden]{EsmailyMoghadam2011}
M.~{Esmaily Moghadam}, Y.~Bazilevs, T.-Y. Hsia, I.E. Vignon-Clementel, and A.L.
  Marsden.
\newblock {A comparison of outlet boundary treatments for prevention of
  backflow divergence with relevance to blood flow simulations}.
\newblock \emph{Comput Mech}, 48\penalty0 (3):\penalty0 277--291, 2011.

\bibitem[Bertoglio et~al.(2013)Bertoglio, Caiazzo, and
  Fern{\'{a}}ndez]{Bertoglio2013}
C.~Bertoglio, A.~Caiazzo, and M.A. Fern{\'{a}}ndez.
\newblock {Fractional-Step Schemes for the Coupling of Distributed and Lumped
  Models in Hemodynamics}.
\newblock \emph{SIAM J Sci Comput}, 35\penalty0 (3):\penalty0 B551--B575, 2013.

\bibitem[Cho and Kensey(1991)]{Cho1991}
Y.I. Cho and K.R. Kensey.
\newblock {Effects of the non-Newtonian viscosity of blood on flows in a
  diseased arterial vessel. Part 1: Steady flows}.
\newblock \emph{Biorheology}, 28\penalty0 (3-4):\penalty0 241--262, 1991.

\bibitem[Ranftl et~al.(2021, \textit{submitted})Ranftl, M{\"u}ller, Windberger,
  {von der Linden}, and Brenn]{Ranftl2021}
S.~Ranftl, T.S. M{\"u}ller, U.~Windberger, W.~{von der Linden}, and G.~Brenn.
\newblock A bayesian approach to blood rheological uncertainties in aortic
  hemodynamics.
\newblock \emph{Int J Numer Method Biomed Eng}, 2021, \textit{submitted}.

\bibitem[Kim et~al.(2000)Kim, Cho, Jeon, Hogenauer, and Kensey]{Kim2000}
S.~Kim, Y.I. Cho, A.H. Jeon, B.~Hogenauer, and K.R. Kensey.
\newblock {A new method for blood viscosity measurement}.
\newblock \emph{J Nonnewton Fluid Mech}, 94\penalty0 (1):\penalty0 47--56,
  2000.

\bibitem[Rolf-Pissarczyk et~al.(2021)Rolf-Pissarczyk, Li, Fleischmann, and
  Holzapfel]{Rolf-Pissarczyk2021}
M.~Rolf-Pissarczyk, K.~Li, D.~Fleischmann, and G.A. Holzapfel.
\newblock {A discrete approach for modeling degraded elastic fibers in aortic
  dissection}.
\newblock \emph{Comput Methods Appl Mech Eng}, 373:\penalty0 113511, 2021.

\bibitem[Weisbecker et~al.(2012)Weisbecker, Pierce, Regitnig, and
  Holzapfel]{Weisbecker2012}
H.~Weisbecker, D.M. Pierce, P.~Regitnig, and G.A. Holzapfel.
\newblock {Layer-specific damage experiments and modeling of human thoracic and
  abdominal aortas with non-atherosclerotic intimal thickening}.
\newblock \emph{J Mech Behav Biomed Mater}, 12:\penalty0 93--106, 2012.

\bibitem[Schussnig et~al.(2022)Schussnig, B{\"a}umler, and
  Fries]{Schussnig2021PAMMb}
R.~Schussnig, K.~B{\"a}umler, and T.-P. Fries.
\newblock {Multi-layered tissue models in patient-specific simulations of
  aortic dissection}.
\newblock \emph{PAMM}, 21\penalty0 (1), 2022.

\bibitem[Vignon-Clementel et~al.(2010)Vignon-Clementel, Figueroa, Jansen, and
  Taylor]{Vignon-Clementel2010}
I.E. Vignon-Clementel, C.A. Figueroa, K.E. Jansen, and C.A. Taylor.
\newblock Outflow boundary conditions for 3d simulations of non-periodic blood
  flow and pressure fields in deformable arteries.
\newblock \emph{Comput Methods Biomech Biomed Eng}, 13\penalty0 (5):\penalty0
  625--640, 2010.

\bibitem[Formaggia et~al.(2003)Formaggia, Lamponi, and
  Quarteroni]{Formaggia2003}
L.~Formaggia, D.~Lamponi, and A.~Quarteroni.
\newblock One-dimensional models for blood flow in arteries.
\newblock \emph{J Eng Math}, 47\penalty0 (3-4):\penalty0 251--276, 2003.

\bibitem[Attaran et~al.(2018)Attaran, Niroomand-oscuii, and
  Ghalichi]{Attaran2018}
S.H. Attaran, H.~Niroomand-oscuii, and F.~Ghalichi.
\newblock {A novel, simple 3D/2D outflow boundary model for blood flow
  simulations in compliant arteries}.
\newblock \emph{Comput Fluids}, 174:\penalty0 229--240, 2018.

\bibitem[Lin et~al.(2017)Lin, Han, Bi, Ju, and Gu]{Lin2017}
S.~Lin, X.~Han, Y.~Bi, S.~Ju, and L.~Gu.
\newblock {Fluid-Structure Interaction in Abdominal Aortic Aneurysm: Effect of
  Modeling Techniques}.
\newblock \emph{BioMed Res Int}, 2017:\penalty0 1--10, 2017.

\bibitem[Meyer et~al.(2011)Meyer, Bertrand, Boiron, and Deplano]{Meyer2011}
C.A. Meyer, E.~Bertrand, O.~Boiron, and V.~Deplano.
\newblock {Stereoscopically Observed Deformations of a Compliant Abdominal
  Aortic Aneurysm Model}.
\newblock \emph{J Biomech Eng}, 133\penalty0 (11), 2011.

\bibitem[Deplano et~al.(2007)Deplano, Knapp, Bertrand, and
  Gaillard]{Deplano2007}
V.~Deplano, Y.~Knapp, E.~Bertrand, and E.~Gaillard.
\newblock {Flow behaviour in an asymmetric compliant experimental model for
  abdominal aortic aneurysm}.
\newblock \emph{J Biomech}, 40\penalty0 (11):\penalty0 2406--2413, 2007.

\bibitem[Mills et~al.(1970)Mills, Gabe, Gault, Mason, Ross, Braunwald, and
  Shillingford]{Mills1970}
C.J. Mills, I.T. Gabe, J.H. Gault, D.T. Mason, J.~Ross, E.~Braunwald, and J.P.
  Shillingford.
\newblock {Pressure-flow relationships and vascular impedance in man}.
\newblock \emph{Cardiovasc Res}, 4\penalty0 (4):\penalty0 405--417, 1970.

\bibitem[Turek et~al.(2010)Turek, Hron, M{\'a}dl{\'i}k, Razzaq, Wobker, and
  Acker]{Turek2011}
S.~Turek, J.~Hron, M.~M{\'a}dl{\'i}k, M.~Razzaq, H.~Wobker, and J.~F. Acker.
\newblock Numerical simulation and benchmarking of a monolithic multigrid
  solver for fluid-structure interaction problems with application to
  hemodynamics.
\newblock In H.-J. Bungartz, M.~Mehl, and M.~Sch{\"a}fer, editors, \emph{Fluid
  Structure Interaction II}, pages 193--220, Berlin, Heidelberg, 2010. Springer
  Berlin Heidelberg.

\bibitem[Balzani et~al.(2016)Balzani, Deparis, Fausten, Forti, Heinlein,
  Klawonn, Quarteroni, Rheinbach, and Schr{\"{o}}der]{Balzani2016}
D.~Balzani, S.~Deparis, S.~Fausten, D.~Forti, A.~Heinlein, A.~Klawonn,
  A.~Quarteroni, O.~Rheinbach, and J.~Schr{\"{o}}der.
\newblock {Numerical modeling of fluid-structure interaction in arteries with
  anisotropic polyconvex hyperelastic and anisotropic viscoelastic material
  models at finite strains}.
\newblock \emph{Int J Numer Method Biomed Eng}, 32\penalty0 (10):\penalty0
  e02756, 2016.

\bibitem[Bazilevs et~al.(2010{\natexlab{b}})Bazilevs, Hsu, Zhang, Wang, Liang,
  Kvamsdal, Brekken, and Isaksen]{Bazilevs2010b}
Y.~Bazilevs, M.-C. Hsu, Y.~Zhang, W.~Wang, X.~Liang, T.~Kvamsdal, R.~Brekken,
  and J.G. Isaksen.
\newblock {A fully-coupled fluid-structure interaction simulation of cerebral
  aneurysms}.
\newblock \emph{Comput Mech}, 46\penalty0 (1):\penalty0 3--16,
  2010{\natexlab{b}}.

\bibitem[Bazilevs et~al.(2009)Bazilevs, Gohean, Hughes, Moser, and
  Zhang]{Bazilevs2009}
Y.~Bazilevs, J.R. Gohean, T.J.R. Hughes, R.D. Moser, and Y.~Zhang.
\newblock {Patient-specific isogeometric fluid–structure interaction analysis
  of thoracic aortic blood flow due to implantation of the Jarvik 2000 left
  ventricular assist device}.
\newblock \emph{Comput Methods Appl Mech Eng}, 198\penalty0 (45-46):\penalty0
  3534--3550, 2009.

\bibitem[Hughes and Franca(1987)]{Hughes1987}
T.J.R. Hughes and L.P. Franca.
\newblock {A new finite element formulation for computational fluid dynamics:
  VII. The Stokes problem with various well-posed boundary conditions:
  Symmetric formulations that converge for all velocity/pressure spaces}.
\newblock \emph{Comput Methods Appl Mech Eng}, 65\penalty0 (1):\penalty0
  85--96, 1987.

\bibitem[Tezduyar et~al.(2008)Tezduyar, Sathe, Schwaab, and
  Conklin]{Tezduyar2008}
T.E. Tezduyar, S.~Sathe, M.~Schwaab, and B.S. Conklin.
\newblock {Arterial fluid mechanics modeling with the stabilized space–time
  fluid–structure interaction technique}.
\newblock \emph{Int J Numer Methods Fluids}, 57\penalty0 (5):\penalty0
  601--629, 2008.

\bibitem[Takizawa et~al.(2018)Takizawa, Tezduyar, and Sasaki]{Takizawa2018}
K.~Takizawa, T.E. Tezduyar, and T.~Sasaki.
\newblock Estimation of element-based zero-stress state in arterial fsi
  computations with isogeometric wall discretization.
\newblock In P.~Wriggers and T.~Lenarz, editors, \emph{Biomedical Technology:
  Modeling, Experiments and Simulation}, pages 101--122. Springer International
  Publishing, Cham, 2018.

\bibitem[John et~al.(2017)John, Pust{\v{e}}jovsk{\'{a}}, and
  Steinbach]{John2017}
L.~John, P.~Pust{\v{e}}jovsk{\'{a}}, and O.~Steinbach.
\newblock {On the influence of the wall shear stress vector form on hemodynamic
  indicators}.
\newblock \emph{Comput Vis Sci}, 18\penalty0 (4-5):\penalty0 113--122, 2017.

\bibitem[F{\"{o}}rster et~al.(2006)F{\"{o}}rster, Wall, and Ramm]{Forster2006}
Ch~F{\"{o}}rster, W.A. Wall, and E.~Ramm.
\newblock {On the geometric conservation law in transient flow calculations on
  deforming domains}.
\newblock \emph{Int J Numer Methods Fluids}, 50\penalty0 (12):\penalty0
  1369--1379, 2006.

\bibitem[Guillard and Farhat(2000)]{Guillard2000}
H.~Guillard and C.~Farhat.
\newblock {On the significance of the geometric conservation law for flow
  computations on moving meshes}.
\newblock \emph{Comput Methods Appl Mech Eng}, 190\penalty0 (11-12):\penalty0
  1467--1482, 2000.

\bibitem[Farhat et~al.(2001)Farhat, Geuzaine, and Grandmont]{Farhat2001}
C.~Farhat, P.~Geuzaine, and C.~Grandmont.
\newblock {The Discrete Geometric Conservation Law and the Nonlinear Stability
  of ALE Schemes for the Solution of Flow Problems on Moving Grids}.
\newblock \emph{J Comput Phys}, 174\penalty0 (2):\penalty0 669--694, 2001.

\bibitem[Boffi and Gastaldi(2004)]{Boffi2004}
D.~Boffi and L.~Gastaldi.
\newblock {Stability and geometric conservation laws for ALE formulations}.
\newblock \emph{Comput Methods Appl Mech Eng}, 193\penalty0 (42-44):\penalty0
  4717--4739, 2004.

\end{thebibliography}

\end{document}